\providecommand{\@fourthoffour}[4]{#4}
\def\fixstatement#1{%
	\AtEndEnvironment{#1}{%
		\xdef\pat@label{\expandafter\expandafter\expandafter
			\@fourthoffour\csname#1\endcsname\space\@currentlabel}}}
\newtheorem{theorem}{Theorem}
\newtheorem{proposition}{Proposition}
\newtheorem{lemma}{Lemma}
\newtheorem{corollary}{Corollary}
\newtheorem{conjecture}{Conjecture}
\theoremstyle{definition}
\newtheorem{definition}{Definition}
\newcommand{\argmax}{\operatornamewithlimits{argmax}}
\begin{document}

\title{Market Segmentation in Online Platforms}

\author{
	Franco Berbeglia\footnote{Tepper School of Business, Carnegie Mellon University. (fberbegl@andrew.cmu.edu)\textbf{ Corresponding author.}}
	\and
	Gerardo Berbeglia\footnote{Melbourne Business School, The University of Melbourne. (g.berbeglia@mbs.edu)}
	\and
	Pascal Van Hentenryck\footnote{Industrial and Operations Engineering \& Industrial and System Engineering, Georgia Institute of Technology. (pascal.vanhentenryck@isye.gatech.edu)}
}
\date{}
\maketitle

\begin{abstract}
	
	This paper studies ranking policies in a stylized trial-offer marketplace model, in which a single firm offers multiple products and has consumers who express heterogeneous preferences. Consumer trials are influenced by past purchases, the inherent appeal of the products, and the ranking of each product. Consumer purchases conditional on trying the product are dependent on the inherent quality for the given consumer segment. The platform owner needs to devise a ranking policy to display the products to maximize the number of purchases in the long run, and to decide whether to display the number of past purchases. The model proposed attempts to understand the impact of market segmentation in a trial-offer market with position bias and social influence. Under our model, {\color{black}consumer} choices are based on a very general choice model known as the mixed multinomial logit model{\color{black},} which embeds product appeal, ranking, and past purchases into the taste parameters. We analyze the long-term dynamics of this highly complex stochastic model and we quantify the expected benefits of market segmentation as well as the value of social influence. When past purchases are displayed, consumer heterogeneity makes buyers try the sub-optimal products, reducing the overall sales rate. We show that consumer heterogeneity makes the ranking problem NP-hard. We then analyze the benefits of market segmentation. We find tight bounds to the expected benefits of offering a distinct ranking to each consumer segment. Finally, we show that the market segmentation strategy always benefits from social influence when the average quality ranking is used. One of the managerial implications is that the firm is better off using an aggregate ranking policy when the variety of consumer preference is limited, but it should perform a market segmentation policy when consumers are highly heterogeneous.  We also show that this result is robust to relatively small consumer classification mistakes; when these are large, an aggregate ranking is preferred.

\end{abstract}
{\bf Keywords:} revenue management; OR in marketing; social influence; market segmentation; ranking policies.
%The purchasing behavior of consumers is often influenced by numerous
%factors, including the visibility of the products and the influence of
%other customers through their own purchases or their recommendations.
%
%Motivated by trial-offer and freemium markets and a number of online
%markets for cultural products, leisure services, and retail, this
%paper studies the dynamics of a stylized marketplace model in which a single firm offers multiple products and whose
%consumers have heterogenous preferences. In this marketplace,
%consumers are influenced by past purchases, the inherent appeal of the
%products, and by how visible each product is; which depends on the
%firm ranking policy.
%
%We examine various ranking policies for this market and analyze their
%long-term dynamics and the potential benefits of social influence. In
%particular, we show that the heterogeneity of the customers
%complicates the market significantly. We find that the sales rate may decrease in markets where customers have very different preferences and past purchase behavior is displayed. We attribute this to consumer confusion on which products to try.
%
%We then explore a market segmentation strategy and quantify its benefits in the long run. We
%show that the market segmentation strategy always benefits from social
%influence and that it significantly outperforms other ranking policies
%when the consumers have a large preference diversity. % Enter your abstract
%}%

\section{Introduction}

The effects of social influence on consumer behaviour have been
observed in a wide range of settings
\citep{salganik2006experimental,tucker2011does,viglia2014please}.
Depending on the market and/or the marketing platform, social
influence may be induced by different signals, including the number of
past purchases, consumer ratings, and consumer
recommendations. Moreover, these popularity signals can be amplified
through product visibilities.  Indeed, in digital markets, the impact
of visibility on consumer behavior has been widely observed, including
in internet advertisement where sophisticated mathematical models have
been developed to determine the relative importance of the various
product positions \citep{craswell2008experimental}. Positioning
effects are also of high significance in online stores such as
Expedia, Amazon, and iTunes, as well as physical retail stores (see,
e.g., \citet{lim2004metaheuristics}).

The combination of popularity signals and product visibilities have been
extensively studied in the past few years, both theoretically and
experimentally: See, for instance,
\cite{abeliuk2016assortment,krumme2012quantifying,abeliuk2015benefits,maldonado2018popularity,van2016aligning}.
This research stream considered trial-offer markets in which consumers
have the opportunity to try products or services for free and later
decide whether to purchase them. Prominent examples of online
trial-offer markets include the following: music markets such as {\sc iTunes} where
users can listen to songs prior to the purchase decision, phone
apps stores such as {\sc Google Play}\footnote{Google Play's refund policy allows consumers to get full refunds on app purchases if the claim is made within 48 hours.} or the {\sc Apple Store} where
``apps'' have a free trial-version (limited by the functionality or
with an expiration deadline), video on demand platforms such as Netflix, where consumer try {\color{black} episodes of series} and finish watching them if they like them, and game platforms, such as Steam, where games may be returned for free within 2 hours of usage. Furthermore, other examples of trial-offer markets include online stores offering physical products that have free shipping (sometimes for purchases above a threshold) and free returns, such as Amazon.com or Nike.com. Previous studies have shown that social influence\footnote{Almost all examples of trial-offer markets show some form of popularity signal (rating and/or number of purchases) of its products when consumers browse through them.},
in conjunction with the optimization of product visibilities, can have
significant benefits on market efficiency. Moreover, simple ranking
policies, e.g., giving the most visibility to products with the best
estimated qualities, are needed to realize these benefits. These
positive results however assume that customer preferences are
homogeneous and can be modeled by a multinomial logit.

This paper studies a more realistic model with heterogeneous customers that follow a latent class multinomial logit model \citep{mcfadden2000mixed,rusmevichientong2014assortment} and attempts to understand whether the results obtained earlier in simple models would still hold under these new setting. It presents both negative and positive results. First, contrary to the homogeneous
case, this paper shows that, in mixed Multinomial Logit (MNL) trial-offer markets,
computing the ranking of products that maximizes the probability that
the next customer will make a purchase is NP-Hard under Turing
reductions. Moreover, the paper shows that popularity signals, that are
beneficial in the homogeneous case, may become detrimental to market
efficiency in mixed MNL trial-offer markets. However, this paper also
shows that these negative results can be addressed by a simple
segmentation strategy where customers are shown a quality ranking
dedicated to their own consumer segment and only observe the popularity signal
for their own market segment (i.e., the past purchases of customers of
the same segment). This market segmentation strategy can be implemented
easily by collecting information on customers and/or by providing
customers different rankings based on various demographic features.
Indeed, a recent analysis performed by the online travel agent Orbitz
has shown that Mac users spend up to about 30\% more in hotel bookings
than their PC counterparts \citep{2012_WSJ}, suggesting that it is
beneficial to show different rankings to customers depending on the
computer they use. Moreover, the major online travel agent {\sc
	Booking.com} allows users to rank hotels according to the average
consumers score of a particular segment such as couples, families, and
solo travelers. See Figure \ref{fig:awesome_image} for an illustration
of this feature. Although hotels are not a trial-offer market, the
benefits of market segmentation {\color{black} extend to trial-offer markets, as Netflix, Google Play and Steam, among others, implement targeting strategies, giving different product recommendations based on past consumer behavior \citep{2019_medium,2020_WSJ}.}

%{\color{black} Some examples of trial-offer markets that implement segmentation strategies can be found in the Google Play store, Steam games and Netflix. The Google Play store allows customers to return apps or books within 48 hours from the purchase date, while in Steam, game purchases may be returned within a 2 hour trial. Netflix may be an outlier as a trial offer market, but consumers try episodes of series' all the time and they keep watching if they like them. So the consumption of series' may be seen as a trial-offer market. Furthermore, Google Play, Netflix and Steam, all implement segmentation strategies, as they have information about individual consumer consumption and recommend content based on that.}

\begin{figure}[!t]
	\centering
	\includegraphics[width=0.8\textwidth]{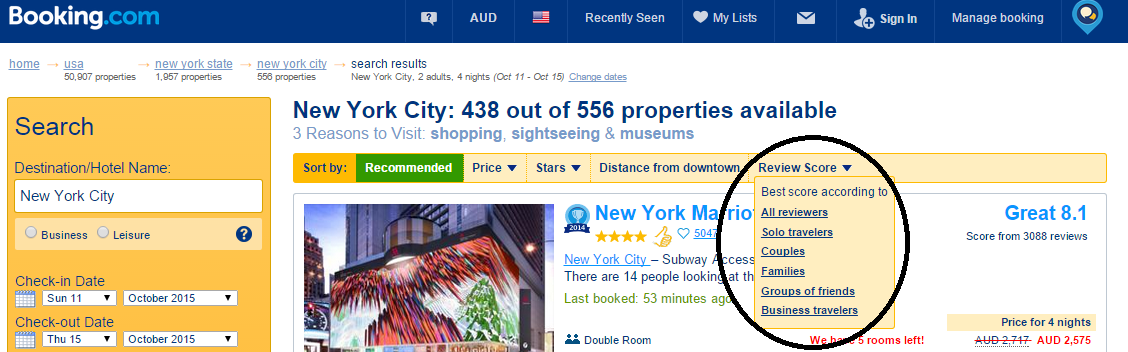}
	\caption{Rankings Averaging Scores of a Consumer Segment at Booking.com.}
	\label{fig:awesome_image}
\end{figure}

The contributions of this paper can be summarized as follows:

\begin{enumerate}
	
	\item The paper constructs the first trial-offer market model with
	social influence and position biases in which consumers preferences
	can follow any finite mixture of multinomial logits.
	
	\item The paper shows that, in mixed MNL trial-offer markets,
	computing the ranking of products that maximizes the probability
	that the next customer will make a purchase is NP-Hard under Turing
	reductions.
	
	\item The paper shows that the popularity signal may, under some
	circumstances, decrease the expected market efficiency. In other
	words, the display of past purchases may reduce the number of sales
	by confusing consumers about which products to try.
	
	\item The paper studies the average quality ranking, which ranks the
	items in decreasing order of average quality. It shows that the
	average quality ranking converges to a unique equilibrium when
	consumers are shown the number of past purchases (the popularity
	signal). This proof is rather involved as it requires to show that
	the mixed MNL model can be seen as a special MNL model in which some
	parameters (appeal and quality) are no longer constants but
	functions of the past purchases vector, and that these quantities
	can be upper and lower bounded in order to demonstrate convergence.
	
	\item The paper presents a simple segmentation strategy, where
	customers are shown a quality ranking dedicated to their own segment
	and only observe the popularity signal for their own market segment
	(i.e., the past purchases of customers of the same segment). The paper
	quantifies the potential benefits in market efficiency of this
	strategy.  Specifically, it proves that the expected purchases can
	increase up to a factor of $K$, where $K$ is the number of segments
	of the mixed MNL model.
	
	\item These theoretical results are complemented by a series of
	computational experiments which provide several managerial insights
	about trial-offer markets.
\end{enumerate}

\noindent
The remaining of this paper is organized as follows. Section
\ref{related_literature} reviews the literature most related to this
work. Section \ref{the_model_section} introduces the model of the
dynamic trial-offer market. The most relevant ranking policies for
this model are described in Section \ref{ranking_policies_section},
which also presents the NP-hardness results for performance ranking in
mixed multinomial logit models. Section \ref{section:quality}
describes the convergence and the impact of social influence for the
quality ranking in the same setting.  Section
\ref{market_segmentation_section} presents our segmentation strategy
and its benefits. Section \ref{section:simulations} presents results
of computational experiments and Section \ref{section:conclusion}
concludes the paper. The proofs are deferred to the appendix.
In the supplementary appendix A, we consider an extension of the
model in which the platform owner makes mistakes during the customer
classification process. The supplementary appendix B shows that a local search heuristic for solving
the ranking optimization problem does not bring much benefit with
respect to the much less costly average quality ranking policy studied
in Section \ref{section:quality}. Supplementary appendix C analyzes an extension in which the firm's objective is to maximize the expected revenue instead of purchases. Finally, the supplementary appendix D analyzes an extension in which the platform owner may show a subset of products to all consumer segments.

\section{Related literature} \label{related_literature}
Our work is related to the MusicLab experiment performed by
\citet{salganik2006experimental}. In that experiment, participants
were presented a list of unknown songs from unknown bands, each song
being described by its name and band. The participants were
partitioned into two groups exposed to two different experimental
conditions: the {\em independent} condition and the {\em social
	influence} condition.  In the independent group, participants were
shown the songs in a random order and they were allowed to listen to
each of them and then download them if they wish.  In the second group
(social influence condition), participants were shown the songs in
popularity order, i.e., allocating the most popular songs to the most
visible positions. Moreover, these participants were also shown a
popularity signal, i.e., the number of times each song was downloaded
too. In order to investigate the impact of social influence,
participants in the second group were distributed in eight ``worlds''
evolving completely independently. In particular, participants in one
world had no visibility about the downloads and the rankings in the
other worlds. The MusicLab is an ideal experimental example of a
trial-offer market where each song represents a product, and listening
and downloading a song represent trying and purchasing a product
respectively. The results by \citet{salganik2006experimental} show
that the different worlds evolve significantly differently from one
another, providing evidence that social influence may introduce
unpredictability in a market.

To explain these results, \citet{krumme2012quantifying} proposed a
framework in which consumer choices are captured by a multinomial
logit model whose product utilities depend on songs appeal, position
bias, and social influence. \citet{abeliuk2015benefits} provided a
theoretical and experimental analysis of such trial-offer markets
using different ranking policies following the framework of
\citet{krumme2012quantifying}. They proved that social influence is
beneficial in order to maximize the expected number of purchases when
using a greedy heuristic known as \emph{performance ranking}. The
\emph{performance ranking} selects the ranking that maximizes the
expected number of purchases at the next time period, i.e. it
maximizes the short-term market efficiency. \citet{abeliuk2015benefits}
have also illustrated experimentally that the popularity ranking \footnote{The popularity ranking ranks (dynamically) products by the number of purchases in decreasing order.} is
outperformed by the performance ranking in a variety of
settings. Still based on the model of \citet{krumme2012quantifying},
\citet{van2016aligning} have studied the performance of the
\emph{quality ranking} which ranks products by their intrinsic quality
(the quality of a product is here defined as the probability that a
consumer would purchase/download the product once she has tried the
product out). They show that the quality ranking is in fact
asymptotically optimal and has a considerably less unpredictability
than the popularity ranking.

\cite{maldonado2018popularity} studied the impact of the popularity signal
strength on a market with multiple products and social influence. In
their model, the popularity signal strength is a an exogenous parameter
$r>0$. The authors provide a complete characterization of the
long-term market share of each of the products and show that the
market is completely predictable as long as $r \leq 1$.

The relative importance of different popularity signals have been recently
investigated by \citet{engstrom2018demand} and \citet{viglia2014please}. The first
paper focuses on how consumers choose apps in the {\sc Google Play}
platform, and the second one studies how people select hotels. Both
experiments arrived to the same conclusion, namely that the popularity
signal (i.e., the number of purchases) has a much stronger impact on
consumer behavior than the average consumer rating signal.

Our work is related to the recent paper by \citet{hu2015liking} who
consider a monopolist facing a newsvendor problem with two
substitutable products with the same quality in which consumer
preferences are affected by past purchases. The authors showed that
the market is unpredictable but it can become less so if one of
products has an initial sales advantage (such as for example by
providing an initial discount). Our model has considerable differences
including the incorporation of position biases, highly richer consumer
preferences (Mixed MNL), an arbitrary number of products, and the
allowance of products to have difference qualities.

\citet{ghose2012designing} proposed a ranking system for hotels which
takes into account the economic value of different locations and
service-based characteristics, as well as consumer heterogeneity. In a
follow-up paper, \citet{ghose2014examining} studied the effects of
three ranking policies on consumer behaviour using archival data
analysis and randomized experiments. The general idea of both papers
is to build a simultaneous equations model of clickthrough, conversion
(purchase decision), ranking (performed by the platform owner), and
customer rating.  In their model, the demand for the different hotels
(i.e., the products) is independent between the different hotels
(apart from the fact that each hotel is assigned a different position
in the ranking) whereas, the model we study in this paper, the
different products are in direct competition to attract the demand.

In another related paper, \citet{gopal2016design} perform a
quantitative study on how a firms can strategically alter malleable
networks such as enterprise social networks (ESN) or consumer social
networks (CSN) in order to increase the transmission of ideas,
innovation, or other information. In our setting, we focus on comparing
a complete network (consumers observe all purchases) versus a network
where consumers are partitioned in $K$ classes or clusters (and a
consumer only observes purchases of individuals from their own
cluster).

\citet{vaccari2018social} studied a model consumers where arrive in sequence
and estimate the quality of products based on product reviews (likes
and dislikes). In their model, consumers like a product if the
product's quality exceeds their expectation (which is calculated based
on past ratings). The authors provide conditions that allow consumers
to learn the true quality of products in the long run. Our work is also related to recent studies of theoretical choice models that incorporate position biases. As in our model, they consider situations in which the probability of selecting a product does not only depend on the offer set but also on the way products are displayed \citep{abeliuk2016assortment, aouad2015display, davis2013assortment, gallego2020approximation}.

Recently, \citet{golrezaei2018two} considered a similar framework
to ours in which the firm needs to decide how to rank a set of
products to sell to consumers. Similar to our setting, sorting
products by utility in their model is not always optimal \footnote{Although the underlying reasons for it under each of the models are different.}. Moreover, finding the optimal ranking of products to maximize short-term sales (or revenue)
can be found in polynomial time when there is a single consumer type,
but the problem becomes NP-hard when the number of consumer types is
more than one (We note that a first preprint of our paper containing all
these results was posted on November, 2015 (ArXiv)
\citep{berbeglia2015benefits}). In their work, the authors
constructed the choice probabilities from a two-stage consumer search
model based on a seminal work by \citet{weitzman1979optimal} on the Pandora's problem. As a result, the resulting
mathematical expression for the choice probabilities is different from
the one we study. Another difference is that \citet{golrezaei2018two}
considers the problem of maximizing welfare (not studied here) whereas
a fundamental focus of this work is on segmentation strategies which
are presented in Section \ref{market_segmentation_section}.

Another recent paper that models the dynamics of customers influenced by social influence is due by \cite{chen2019reviews}. An important difference to our work is that social influence signal are product ratings rather than purchases, and that their choice model is based on the MNL model rather than on any finite mixture of MNLs.

Other papers related to our work are \citet{lee2020discovering}, who uses sales transaction data to estimate the parameters of a Mixed MNL through which are able to identify heterogeneous consumers groups, \citet{zhen2019effects}, \citet{capuano2017fuzzy}, and \citet{molinero2015cooperation} who studied social influence models, and \cite{lutz2018consumer} and \cite{anderson2014pricing} who focused on market segmentation of customers.

\section{The Model} \label{the_model_section}

\paragraph{Motivation} We consider a firm running a marketplace that
sells a set of products. Following
\citet{krumme2012quantifying} and \citet{salganik2006experimental}, we focus our
attention to trial-offer markets, i.e., markets in which consumers can
try the product for free before deciding to make a purchase. Consumers
are position-biased in the following sense: {\color{black}the} likelihood of trying a
specific product is affected by the position of the product, as well
as the position of the other products in the market. We also
consider that in this marketplace it is possible to display information
about product popularity. In particular, we assume that the firm shows
the total number of purchases for each product at each point in
time. %While there is considerable evidence and consensus among the
%researchers that the display of information about consumer preferences
%changes consumer behavior, it remains a matter of discussion whether
%such display may benefit consumers and the market.
%Our model for consumer behavior is therefore far more general than in
%\citet{krumme2012quantifying}

Unlike \citet{krumme2012quantifying}, we consider that there are
different segments of consumers. More precisely, the probability that a given
consumer tries a product will follows a Mixed Multinomial Logit
(MMNL). Since \citet{mcfadden2000mixed} proved that every random
utility model can be well approximated by a MMNL, our model of
consumer preferences is indeed very general.

\paragraph{Formalization} We now formally describe a dynamic model for
this marketplace. Let $[N]=\{1,2,\hdots,N\}$ denote the set of items
in the marketplace and $S_N$ denote the set of the permutations of
these items. At any point in time, the firm decides how to position
the items in the market by selecting a permutation $\sigma \in S_N$
such that $\sigma(i)=j$ implies that item $i$ is placed in position
$j$ ($j \in [N]$).

The consumer behavior can be described as follows. There are $K$
different segments of consumers. At any point in time the probability that the next arriving customer belongs to segment $k\in[K]$ is given by $w_k$, the segment's {\em weight}\footnote{A special case of this is a Poisson arrival process with arrival rate $\lambda_k$ for each consumer segment $k\in[K]$, such that $w_k=\frac{\lambda_k}{\sum_j \lambda_j}$.}. Consumers from the same segment exhibit different purchase profiles due to idiosyncratic shocks. This is captured with the MNL model.

When consumer $t$ enters the market, she observes all the items and a
popularity vector $d^t=(d_1^t,d_2^t,\hdots,d_N^t) \in \mathbb{N}^N$,
where $d_i^t$ is the number of times item $i$ has been bought for
prior to her arrival at time $t$. When the consumer arrives, each of
the $N$ items have been given a position through a permutation $\sigma
\in S_{N}$. The consumer selects an item to try and then decides
whether to buy it. Following \citet{krumme2012quantifying} and extending their model to multiple segments, if the consumer belongs to segment $k$, the
probability that she tries item $i$ is given by
\begin{equation}
\label{MMNL}
p_{i,k}(\sigma,d^t) =  \frac{v_{\sigma(i)} (a_{i,k} + d_i^t)}{\sum_{j=1}^N v_{\sigma(j)} (a_{j,k} + d_j^t)+z_k}
\end{equation}
where $z_k \in \mathbb{R}_{\geq0}$ for all $k\in [K]$  is fixed to a constant for the
duration of the process, $v_j \in \mathbb{R}_{\geq 0}$ represents the
visibility of position $j \in [N]$ regardless of the consumer class
(the higher the value $v_j$ the more visible the item in that position
is), and $a_{i,k} \in \mathbb{R}_{>0}$ captures the intrinsic appeal
of item $i$ for consumer segment $k$ for all $i \in [N]$ (higher values
correspond to more appealing items)\footnote{Note that the expression in Equation \eqref{MMNL} is the special case of a MMNL when the consumer utilities are logarithmic.}. The value $z_k$ represents the outside option for those consumers of segment $k$ that enter the market. As the total number of purchases increases, the fraction of consumers choosing the outside option will decrease.

If a consumer from segment $k$ has selected item $i$ for a trial, the
probability that she would purchase the item is given by $q_{i,k} \in
[0,1]$. Observe that this probability is independent of both the
appeal vector $(a_{1,k},\hdots,a_{N,k})$ and the visibility vector
$v$. Intuitively, this assumption which has been validated in the
MusicLab experiment, captures the fact that it is more difficult to
influence consumers after they have tested a product than before.

When the consumer decides to purchase item $i$, the popularity/sales
vector $d$ is increased by one in position $i$. To analyze this
process, we divide time into discrete periods such that each new
period begins when a new consumer arrives. Hence, the length of each
time period is not constant.

The objective of the firm running this market is to maximize the total
expected number of purchases. To achieve this, the key managerial
decision of the firm is what is known as the ranking policy
\citep{abeliuk2015benefits}, which consists in deciding at each point in
time the permutation $\sigma \in S_N$ to display the items.  The next
section describes a number of relevant ranking policies for this
model.

A key aspect of this paper is to study the potential benefits of the
popularity signal in terms of the rate of purchases (market
efficiency) and compare the ranking policies with and without this
signal. In this paper, we always assume that the popularity signal is
used as specified in Equation \eqref{MMNL}. When the popularity signal
is not used, the probability of trying (or sampling) a product is
obtained as if the popularity signal is simply the vector $\langle
0,\ldots,0 \rangle$.

We conclude this section with a several comments about the model described
above. First, under the trial-offer market model, a customer who tries, but
does not like, a product simply walks away without trying any other
product. An alternative model would be to allow the customer to try
another product with probability $c_i$ every time she tried but did
not like a product $i$. Perhaps surprisingly, one can show that under
mild conditions\footnote{The probability $c_i$ is a polynomial
	function on the product quality $q_i$.}, the resulting model is
equivalent to the original trial-offer market with a single trial per
consumer \citet{van2016trial} \footnote{The equivalence is based on a
	redefinition of product appeal and quality.}. Second, although there are many different ways to model consumer trial and purchase probabilities as a function of the previous purchases, intrinsic appeal, etc; the way we model the problem in this paper is a natural extension to an MNL model that has been empirically tested \citep{krumme2012quantifying}: it has succeeded to fit the data from a large scale randomized experiment where over $14,000$ users listened and downloaded songs from an assortment 48 song pieces. The participation in the experiment was unpaid and voluntary and therefore, the setup is arguably closer to a genuine internet platform than if individuals were paid to participate (see \citet{salganik2006experimental} for more details). {\color{black}Third, the model studied here could potentially be useful in markets that do not have the trial-offer structure. Indeed, this model can be captured as a function whose inputs are (1) market observations (current product ranking and past product purchases) and (2) some parameters subject to estimation (product appeals, product qualities, and position visibility values). The function then returns each product purchasing probability. Thus, even for non-trial-offer markets, it is an open empirical question on whether this model is better or worse than other choice models that also consider product rankings and/or popularity signals such as \citet{golrezaei2018two} and \citet{vaccari2018social}, in terms of predictive accuracy of the purchasing probabilities.} Finally, it is worth observing that our model assumes there is an independence between trying a product and buying it. Although this assumption was not problematic to fit the large-scale experiment carried out in \citet{salganik2006experimental}, a model extension that incorporates a correlation between the two steps would be interesting.

%The trial-offer model presented here also relates to a newsvendor problem studied in \citep{hu2015liking} in which there are two substitutable products and consumers choices are affected by the market share of past purchases. An important difference with our model is that the two products in their setting have the same quality and that the firm cannot use a ranking policy to change consumer behavior. This and other differences lead to different results. For example, they showed that the market is unpredictable whereas in our model it is not (in the limit when the number of periods goes to infinity).

\section{Ranking Policies}
\label{ranking_policies_section}

Consider without loss of generality that the $N$ locations are sorted
by their visibility such that $v_1 \geq v_2 \geq \hdots \geq v_N$. A
ranking policy is a function $f: \mathbb{N}^N \to S_N$ which,
given a vector of past purchases, returns a ranking of the items.

Ranking policies can be partitioned into two groups: {\em static} and
{\em dynamic}. A ranking policy $g$ is said to be \emph{static} if the
output ranking does not depend on the popularity signal, i.e., if
$f(d) = f(d')$ for all $d, d' \in \mathbb{N}^N$. On the other hand, a
\emph{dynamic} ranking policy is one in which the output ranking
depends on this signal.

\subsection{Performance ranking}
%A widely used dynamic policy consists in ranking the items according
%to their current popularity: Select $\sigma \in S_N$ such that
%$\sigma(x)=i$ if item $x$ is currently the $i^{th}$ most purchased
%product so far. This policy is called the \emph{popularity
%  ranking} and is widely used. \citet{2015abeliuk_et_al} have shown
%that this ranking policy does not maximize
%the expected number of purchases in the special case
%where $K=1$.
%
%Another dynamic ranking strategy, known as \emph{activity ranking},
%was studied in an experiment by \citet{lerman2014leveraging}. In
%activity ranking, products are sorted in chronological order of the
%last purchased time, i.e., the item last purchased appears first,
%while the item least recently purchased appears last. In their
%experiment conducted with individuals recruited with the Mechanical
%Turk, \citet{lerman2014leveraging} have observed that activity ranking
%improves consumer recommendations with respect to popularity ranking.

The \emph{performance ranking} is a dynamic policy that greedily
selects a ranking that maximizes the expected number of purchases in
the following period. This strategy was first proposed by
\citet{abeliuk2015benefits} for the special case with $K=1$ (where they show it is asymptotically optimal) and we now
generalize its definition for the more general model considered in
this paper. The probability that the next incoming consumer belongs to segment $k$
is given by $w_k$, therefore the performance ranking at time period $t$ consists of
finding the permutation $\sigma^* \in S_N$  maximizing the
probability of a purchase in the next time period, i.e.,
\begin{align}
\label{performance_ranking-MMNL}
\sigma^* = \argmax_{\sigma \in S_N} \sum_{k=1}^K w_k \cdot \sum_{i=1}^N p_{i,k}(\sigma,d^t) \cdot q_{i,k}.
\end{align}
The probability $\Pi^{PR}$ of a purchase in the next time period is
thus given by
\begin{align}
\label{performance_ranking}
\Pi^{PR} & = \max_{\sigma \in S_N} \bigg\{\sum_{k=1}^K \Big( w_k \cdot \sum_{i=1}^N (p_{i,k}(\sigma,d^t) \cdot q_{i,k}) \Big) \bigg\}\\
& = \max_{\sigma \in S_N} \bigg\{ \sum_{k=1}^K \Big( w_k  \cdot \sum_{i=1}^N \Big(\frac{v_{\sigma(i)} (a_{i,k} + d_i^t)}{\sum_{j=1}^N v_{\sigma(j)} (a_{j,k} + d_j^t)+z_k}  \cdot q_{i,k} \Big) \Big) \bigg\}.
\end{align}

\noindent
\citet{abeliuk2015benefits} showed that when $K=1$, this greedy ranking
policy can be computed efficiently, i.e., in strongly polynomial time
(Theorem 1 in
\citet{abeliuk2015benefits}, they assumed
$z=0$ but their proof can be easily generalized for any $z \in
\mathbb{R}_{\geq 0}$). Moreover, despite the myopic focus of the
performance ranking, a series of computational experiments performed
by \citet{abeliuk2015benefits} showed that, for the special case of
$K=1$, the performance ranking was superior than the standard
popularity ranking both in terms of unpredictability as well as in
terms of number of purchases.  Unfortunately, the performance ranking
cannot be computed efficiently when there are at least two classes of
consumers. More precisely, we can show that the assortment problem
under a 2-segment Mixed Multinomial Logit choice model which is known to
be NP-hard \citep{rusmevichientong2014assortment} can be reduced
(under Turing reductions) to computing the performance ranking in our
setting.

\begin{theorem}\label{theorem_hardness}
	Computing the performance ranking is NP-hard under Turing
	reductions. This is true even when $K=2$ and the product qualities
	are the same for all consumer classes.
\end{theorem}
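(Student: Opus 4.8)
The plan is to give a Turing reduction from the 2-Class Logit assortment problem, shown NP-hard in \cite{rusmevichientong2014assortment}, to the computation of the performance ranking. Recall that an instance of the former asks, given two logit classes with mixing weights, per-class preference weights for $n$ products, per-class no-purchase weights, and product revenues, for a subset $S \subseteq [n]$ maximizing the mixture of the two classes' expected revenues. The key observation is that the visibility vector of our model lets a permutation encode a subset: if a position has visibility $0$, then any item $i$ placed there has $v_{\sigma(i)} = 0$, so it contributes nothing to the numerator of $p_{i,k}$ and drops out of the denominator entirely. Hence, placing an item in a zero-visibility position is exactly the operation of excluding it from the offered assortment.

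Concretely, I would evaluate the performance ranking at the empty popularity vector $d^t = \langle 0, \ldots, 0 \rangle$, so that $a_{i,k} + d_i^t = a_{i,k}$, take $N = n$, and fix $K = 2$ with $w_1, w_2$ equal to the class mixing weights. For a target cardinality $m$, I set $v_1 = \cdots = v_m = 1$ and $v_{m+1} = \cdots = v_N = 0$. Since all visible positions share the same visibility (so the order among them is irrelevant) and all items in invisible positions are discarded, the optimal permutation selects precisely the best size-$m$ subset, and the objective \eqref{performance_ranking-MMNL} reduces to
\[
\max_{|S| = m} \sum_{k=1}^2 w_k \, \frac{\sum_{i \in S} a_{i,k}\, q_i}{\sum_{i \in S} a_{i,k} + z}.
\]
To match the target instance I take $q_{i,k} = q_i$ equal across classes, as permitted by the statement, rescaling all revenues into $[0,1]$ (which leaves the $\argmax$ unchanged). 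The single common constant $z$ is not a restriction either: a desired class-specific no-purchase weight $z_k$ is realized by rescaling the appeals of class $k$ by the factor $z/z_k$, which cancels in the ratio while preserving $a_{i,k} > 0$.

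Because a fixed visibility vector forces exactly $m$ items to be shown, a single oracle call optimizes only over assortments of one cardinality. I would therefore invoke the performance-ranking oracle once for each $m \in \{1, \ldots, N\}$ and return the best value, which equals $\max_{S \subseteq [n]} \sum_{k=1}^2 w_k \frac{\sum_{i \in S} a_{i,k}\, q_i}{\sum_{i \in S} a_{i,k} + z}$, the optimum of the 2-Class Logit instance. This uses polynomially many oracle calls on polynomially sized instances, yielding the claimed Turing reduction and hence NP-hardness even for $K = 2$ with class-independent qualities.

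I expect the main obstacle to be conceptual rather than computational: establishing cleanly that the permutation/visibility encoding implements subset selection, i.e., that items in zero-visibility positions vanish from both numerator and denominator while the visible items are order-insensitive, and that the per-cardinality optima correctly assemble into the unconstrained assortment optimum. This is precisely the point that forces the reduction to be Turing rather than many-one, since one permutation instance cannot range over all cardinalities simultaneously. The remaining work is routine bookkeeping, namely checking that the model's feasibility constraints ($a_{i,k} > 0$, $q_{i,k} \in [0,1]$, $v_j \geq 0$, and a common $z \geq 0$) are respected by the construction, which the rescaling arguments above guarantee.
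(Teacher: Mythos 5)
Your proposal is correct and follows essentially the same route as the paper's own proof: a Turing reduction from the 2-Class Logit problem of \cite{rusmevichientong2014assortment}, using zero-visibility positions to encode assortment exclusion, one oracle call per cardinality $m \in \{1,\ldots,N\}$, appeals set to the logit weights, and class-independent qualities set from the revenues. Your additional bookkeeping (rescaling revenues into $[0,1]$ so that $q_{i,k}$ is a valid probability, and absorbing class-specific no-purchase weights into the appeals) is slightly more careful than the paper, which simply sets $q_{i,j}=r_i$ and $z=1$, but it does not change the substance of the argument.
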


In order to deal with this negative result, this paper explores two
options. The first is to rank products based on their average quality,
a strategy we called {\em average quality ranking} and is studied in
the next section.  The second avenue is to explore a greedy local
search heuristic which worked as follows. In the first period, we set
the average quality ranking as initial solution, and then, we evaluate possible ways to exchange the position of two products in the
ranking (2-swaps) and select the first swap that improves the objective function.  The local search process
is then repeated until there are no more improvements.  In each
consecutive period, we use the previous period ranking as the starting
solution. Notice that this heuristic will finish in a finite number of steps. This is because the objective function increases with each swap and there exist finitely many assortments of products. Our experimental results have shown that the additional
benefit of the local search is minimal (see Appendix C), we thus used
the average quality ranking as our approximate method which is several
orders of magnitude faster.

\subsection{Average quality ranking} \label{section:quality}
The \emph{average quality ranking} (or quality ranking for short) is a
simple and natural static policy: it consists in ranking the products
by the weighted average quality (among the different customer
classes), ignoring the appeals and popularity signal. The quality
ranking for the MMNL model thus consists in placing in position $j$
the item with the $j^{th}$ highest weighted average quality, where the
weighted average quality of item $i \in [N]$ is
\begin{align}
\bar{q}_{i} &= \sum_{k=1}^K w_k q_{i,k}.
\end{align}
For the special case $K=1$, the quality ranking is optimal asymptotically and always benefits
from the popularity signal used in our model \citep{van2016aligning}. The next section will
study whether this continues to hold in richer contexts when $K>1$. Note that, in the
following, the ranking which orders the products by decreasing values
of $\bar{q}_{i}$ is called the \emph{average quality ranking}.

%\section{Properties of Average Quality Ranking}

Before analyzing some fundamental properties of the MMNL it is important to first make the following definition.
\begin{definition}
	The MMNL model \emph{goes to a monopoly} using a ranking policy $f$
	if, for each realization of the $N$ random sequences
	$\{\phi_i^t\}_{t \in \mathbb{N}}$ ($i \in [N]$), there exists a
	product $i^*$ such that the realized sequence $\{\phi_{{i^*}}^t\}_{t
		\in \mathbb{N}}$ converges almost surely to $1$ as $t$ goes to infinity. In this
	case, we also say that item $i^*$ \emph{goes (predictably) to a
		monopoly}.
\end{definition}

We now study some fundamental properties of the quality ranking for
the MMNL model. We first show that the average quality ranking
converges to a monopoly (under weak conditions). We then study the
benefits of displaying the popularity signal for the average quality ranking.

%\subsection{Convergence to a Monopoly}

Given a ranking policy $f$, the random variable
\[
\phi_i^t=\frac{d_i^t}{\sum_j d_j^t}
\]
is known as the market share of item $i$ at time $t$: It represents
the ratio between the number of times that item $i$ was purchased and
the total number of purchases up to time $t$.

\noindent
We can now show that the MMNL model goes to a monopoly when using the
average quality ranking. The proof is quite technical: Its key idea is
to show that the Mixed Multinomial Logit Model (MMNL) can be reduced
to a generalized case of the Multinomial Logit Model (MNL) where the
appeal and the quality of an item at time $t$ depend on the popularity
signal at $t$ (This is shown in Theorem \ref{monopoly_alltogether}). Then this generalized appeals and qualities can be bounded to obtain the result. The proof relies on the following lemma that
generalizes the convergence result of the quality ranking for the MNL
model by \citet{van2016aligning} to the case where the
appeal and quality of an item depend on the popularity ranking
provided that the resulting functions are bounded by above and
below. In Theorem \ref{monopoly_alltogether} we show that there exists a time period $t^*$ after which the time dependent qualities and appeals of this modified MNL model are bounded as in \ref{modifiedMNL_bounds} and that these bounds satisfy \ref{max_reduced_quality}. Thus, by using Lemma \ref{qsuc}, we can show that this modified MNL model goes to a monopoly which implies that the MMNL goes to a monopoly.

\begin{lemma}
	\label{qsuc}
	Consider a different Multinomial Logit Model, i.e., a setting with $K=1$ where
	the appeal and quality of each item $i$ are functions of the purchases
	vector $d^t$, i.e., $\widetilde{a}_i^t=\widetilde{a}_i(d^t)$ and
	$\widetilde{q}_i^t=\widetilde{q}_i(d^t)$ respectively. Suppose that
	there exists a time period $t^*$ such that these two quantities are
	upper and lower bounded by constants for any period $t>t^*$
	independent of the realizations of $\widetilde{a}_i^t$ and
	$\widetilde{q}_i^t$, i.e.,
	\begin{equation}\label{modifiedMNL_bounds}
	q_{i,min}\leq \widetilde{q}_i^t\leq q_{i,max} \text{ and } a_{i,min}\leq \widetilde{a}_i^t\leq a_{i,max} \quad\forall i \in [N], t>t^*.
	\end{equation}
	Let $\sigma \in S_N$ denote a static ranking policy. If there exists an item $i^*$ and an instant $\hat{t}$ such that
	\begin{equation}\label{max_reduced_quality}
	v_{\sigma(i^*)}q_{i^*,min}>v_{\sigma(i)}q_{i,max} \text{ } \forall \text{ } i\neq i^* \text{ and } \forall \text{ } t>\hat{t},
	\end{equation}
	then item $i^*$ goes to a monopoly when using the ranking policy
	$\sigma$.
\end{lemma}

%\proofatend
%%\begin{proof}
%The market share of item $i^*$ at any period of time $t>\hat{t}$ for
%this system would be underestimated by considering the following set
%of qualities and appeals:
%\begin{equation*}
%	q_{i,new}\left\{
%	\begin{array}{l l}
%		q_{i,min} & \quad \text{if $i=i^*$}\\
%		q_{i,max} & \quad \text{if $i\neq i^*$}
%	\end{array} \right.
%	\text{and}\text{  }
%	a_{i,new}\left\{
%	\begin{array}{l l}
%		a_{i,min} & \quad \text{if $i=i^*$}\\
%		a_{i,max} & \quad \text{if $i\neq i^*$}
%	\end{array} \right..
%\end{equation*}
%If this new set of qualities satisfies that
%$v_{\sigma(i^*)}q_{i^*,new} > v_{\sigma(i)}q_{i,new}$ for all $i\in
%[N] \setminus \{i^*\}$, it follows from the convergence result in
%\citep{van2016aligning} (Theorem 4.3) that the system goes
%to a monopoly for item $i^*$. Therefore, the original system also goes
%to a monopoly for item $i^*$.
%%\end{proof}
%\endproofatend

\noindent
The main result of this section is about the convergence to a monopoly
of a large class of static ranking policies. For simplicity, we assume
a weak condition to break potential ties between items.

%\begin{definition}A ranking policy is said to be \textit{static} if the ranking shown to consumers does not change over time.
%\end{definition}
%

\begin{definition}
	A static ranking policy $\sigma$ is \emph{tie-breaking} for a MMNL
	model if there exists a unique item $i^*$ with the highest product
	of visibility and weighted average quality, i.e.,
	\begin{align}
	& \exists \  i^*\in[N]: \bar{q}_{i^*}v_{\sigma(i^*)}>\bar{q}_iv_{\sigma(i)} \quad \forall  i \in [N], i \neq i^* \text{} \label{cond1}.
	\end{align}
\end{definition}

\noindent
Note that this tie-breaking property is a very mild assumption: If the
quality for each pair (consumer class, product) is a realization from
a (different or the same) continuous probability distribution over
some interval (regardless of how small), the probability that a
ranking policy is tie-breaking is indeed 1. We are now ready to prove
the main result of this section.

\begin{theorem}
	\label{monopoly_alltogether}
	Consider a MMNL model $\mathcal{M}$ and a static, tie-breaking ranking
	policy $\sigma \in S_N$ for $\mathcal{M}$. Model $\mathcal{M}$ goes to
	a monopoly using $\sigma$ and the item $i^*$ that goes predictably to
	a monopoly using $\sigma$ in $\mathcal{M}$ is given by
	\[
	i^*=\argmax_{1\leq i \leq N}{v_{\sigma(i)}\bar{q}_i}.
	\]
\end{theorem}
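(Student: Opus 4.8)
The plan is to follow the reduction sketched before the statement: recast the $K$-class dynamics as a single-class MNL in which the appeal and quality of each item are (bounded) functions of the popularity vector $d^t$, and then invoke Lemma~\ref{qsuc}. Writing $p_{i,k}=p_{i,k}(\sigma,d^t)$ and $D_k=\sum_{j}v_{\sigma(j)}(a_{j,k}+d_j^t)+z$, the probability that the next (class-unknown) consumer buys item $i$ is $P_i^t=\sum_k w_k p_{i,k}q_{i,k}$. I would reproduce exactly this purchase probability with an MNL whose trial probability is the class mixture $\widetilde p_i^t=\sum_k w_k p_{i,k}$ and whose quality is $\widetilde q_i^t=\big(\sum_k w_k p_{i,k}q_{i,k}\big)/\big(\sum_k w_k p_{i,k}\big)$, so that $\widetilde p_i^t\widetilde q_i^t=P_i^t$. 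A short calculation shows that $\widetilde p_i^t$ is itself of MNL form with effective appeal $\widetilde a_i^t=\big(\sum_k w_k a_{i,k}/D_k\big)/\big(\sum_k w_k/D_k\big)$, a convex combination of $\{a_{i,k}\}_k$, hence trapped in $[\min_k a_{i,k},\max_k a_{i,k}]$; likewise $\widetilde q_i^t$ is a convex combination of $\{q_{i,k}\}_k$, hence in $[\min_k q_{i,k},\max_k q_{i,k}]$. Thus the effective model satisfies the boundedness hypotheses of Lemma~\ref{qsuc} and induces the same law on $\{d^t\}$ as $\mathcal M$.

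The heart of the argument is to identify the limit of the effective quality. As purchases accumulate, $M^t:=\sum_j v_{\sigma(j)}d_j^t\to\infty$ while the appeal contributions $\sum_j v_{\sigma(j)}a_{j,k}+z$ stay constant, so every class denominator satisfies $D_k=M^t+O(1)$ and hence $D_k/D_{k'}\to1$. For any item $i$ whose count $d_i^t\to\infty$, the appeal $a_{i,k}$ is also negligible next to $d_i^t$, so the mixing weights $w_k(a_{i,k}+d_i^t)/D_k$ defining $\widetilde q_i^t$ become asymptotically proportional to $w_k$, giving $\widetilde q_i^t\to\sum_k w_k q_{i,k}=\bar q_i$. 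In other words, once the popularity signal dominates, the effective quality collapses onto the weighted-average quality that appears in the tie-breaking condition \eqref{cond1}.

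To make this usable I would first show that every item with $v_{\sigma(i)}>0$ (and some $q_{i,k}>0$) is purchased infinitely often. The per-period purchase probability $\sum_i P_i^t$ is bounded below by a positive constant, so $M^t\to\infty$ at a linear rate, and each such item retains a conditional probability of order $1/M^t$ of being the next purchase; since $\sum_t 1/M^t$ diverges, a conditional Borel--Cantelli argument yields $d_i^t\to\infty$ almost surely, and therefore $\widetilde q_i^t\to\bar q_i$ for \emph{all} relevant $i$. Because condition \eqref{cond1} makes $i^*=\argmax_i v_{\sigma(i)}\bar q_i$ the unique maximizer, I can fix $\epsilon>0$ with $v_{\sigma(i^*)}(\bar q_{i^*}-\epsilon)>v_{\sigma(i)}(\bar q_i+\epsilon)$ for every $i\neq i^*$; on the almost-sure event just described there is a (random but finite) time after which $\widetilde q_{i^*}^t>\bar q_{i^*}-\epsilon$ and $\widetilde q_i^t<\bar q_i+\epsilon$. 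Applying Lemma~\ref{qsuc} to the effective MNL with $q_{i^*,\min}=\bar q_{i^*}-\epsilon$, $q_{i,\max}=\bar q_i+\epsilon$, and the constant appeal bounds then gives that $i^*$ goes to a monopoly, which is exactly the claim.

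The main obstacle is precisely the quality bound fed into Lemma~\ref{qsuc}: the naive class-wise bounds $[\min_k q_{i,k},\max_k q_{i,k}]$ are too loose to separate the items, since $v_{\sigma(i^*)}\min_k q_{i^*,k}>v_{\sigma(i)}\max_k q_{i,k}$ need not hold even when $v_{\sigma(i^*)}\bar q_{i^*}>v_{\sigma(i)}\bar q_i$. Everything therefore rests on the tightening $\widetilde q_i^t\to\bar q_i$, which in turn requires the count-divergence step for every item and not merely the eventual monopolist; the one technical subtlety is that the tight bands are entered only after a random time, which is harmless because going to a monopoly is an asymptotic property and one may discard any finite initial segment of the process.
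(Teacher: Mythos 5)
Your proposal is correct, and it reaches the theorem by the same overall strategy as the paper---recast the MMNL as a single-class MNL whose appeal and quality are bounded, state-dependent functions, then apply Lemma \ref{qsuc}---but the factorization at its core is genuinely different, and this changes what has to be proved. Writing $D_k=\sum_j v_{\sigma(j)}(a_{j,k}+d_j^t)+z$ as you do, you split $P_i^t=\widetilde p_i^t\,\widetilde q_i^t$ with $\widetilde p_i^t=\sum_k w_k p_{i,k}$ (your claim that this mixture is itself of MNL form with the same $z$ does check out), so your effective quality is a convex combination of the $q_{i,k}$ with weights proportional to $w_k(a_{i,k}+d_i^t)/D_k$. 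Because these weights contain $d_i^t$, you get $\widetilde q_i^t\to\bar q_i$ only for items whose own purchase count diverges, and you must therefore add the conditional Borel--Cantelli step showing that every item with $v_{\sigma(i)}>0$ and $\bar q_i>0$ is bought infinitely often (that step is valid: the conditional purchase probability is of order $1/t$, whose sum diverges). The paper avoids this entirely by factoring the other way: it folds the qualities into the mixing weights of the appeal, taking $\widetilde a_i(t)$ to be the convex combination of the $a_{i,k}$ with weights proportional to $w_k q_{i,k}/D_k$, and then $\widetilde q_i(t)=\bigl(\sum_k w_k q_{i,k}/D_k\bigr)\bigl(\sum_j v_{\sigma(j)}(\widetilde a_j(t)+d_j^t)+z\bigr)$, which involves no $d_i^t$ at all; this quantity obeys the deterministic sandwich $(1-\delta^t)\bar q_i\leq\widetilde q_i(t)\leq(1+\delta^t)\bar q_i$ with $\delta^t=\sum_j v_{\sigma(j)}\max_k a_{j,k}\,/\sum_j v_{\sigma(j)}d_j^t$, so it converges to $\bar q_i$ as soon as the \emph{total} weighted purchase count grows---no per-item divergence, no Borel--Cantelli, and an explicit threshold $\hat d_{tot}$ on the purchase total beyond which $\widetilde Q_{i^*}(t)>\widetilde Q_i(t)$ holds for every realization. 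That is also why the paper's argument fits Lemma \ref{qsuc}'s ``independent of the realizations'' hypothesis more directly: its separation condition is a monotone, deterministic condition on the state, whereas your $\epsilon$-bands are entered at a random time. Your dismissal of this as discarding a finite initial segment is essentially right, but to be airtight you should note that your band condition is likewise monotone in the state---once every relevant $d_i^t$ exceeds a fixed level $m(\epsilon)$, the bands hold forever after, deterministically---so restarting at that stopping time is legitimate; both your write-up and the paper's own proof treat this point informally.
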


\noindent
The following corollary asserts that the average quality ranking
converges to a monopoly for the product of highest average
quality. This result is particularly interesting since it shows that
the average quality ranking is asymptotically optimal, and that it generalizes the quality ranking from the
MNL to the MMNL model. By asymptotically optimal we mean that it is the global ranking with global social influence that maximizes the purchase probability as time goes to infinity, $\lim_{t\rightarrow \infty}  P^t $, where $P^t$ is the expected purchase probability at period $t$.

\begin{corollary}(Asymptotic Optimality of the Average Quality Ranking).
	\label{cor1}
	Whenever the average quality ranking is used, a MMNL model goes to a monopoly
	for the product with the highest weighted average quality.
\end{corollary}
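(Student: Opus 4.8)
The plan is to derive the corollary directly from Theorem \ref{monopoly_alltogether} by verifying that the average quality ranking is a static, tie-breaking policy and by identifying the item it drives to a monopoly. First I would recall the standing convention that positions are indexed so that $v_1 \geq v_2 \geq \cdots \geq v_N$, together with the definition of the average quality ranking, which places the item with the $j$-th largest weighted average quality $\bar{q}_i$ into position $j$. After relabeling the items so that $\bar{q}_1 \geq \bar{q}_2 \geq \cdots \geq \bar{q}_N$, this means $\sigma(i) = i$ and hence $v_{\sigma(i)} = v_i$; in particular, the highest-quality item occupies the most visible position. Since the output ranking ignores the popularity signal entirely, $\sigma$ is manifestly static.

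Next I would show that the item $i^*$ of highest weighted average quality is exactly the maximizer of $v_{\sigma(i)}\bar{q}_i$ featured in Theorem \ref{monopoly_alltogether}. This is a one-line argument exploiting that the quality order and the visibility order are aligned: for every item $i$ we have simultaneously $\bar{q}_{i^*} \geq \bar{q}_i$ (by the choice of $i^*$) and $v_{\sigma(i^*)} = v_1 = \max_j v_j \geq v_{\sigma(i)}$ (because $i^*$ sits in the top position). Hence both factors are individually maximized at $i^*$, so the product $v_{\sigma(i^*)}\bar{q}_{i^*}$ dominates $v_{\sigma(i)}\bar{q}_i$, giving $i^* = \argmax_{1 \leq i \leq N} v_{\sigma(i)}\bar{q}_i$.

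Finally, to invoke the theorem I must confirm that the average quality ranking satisfies the tie-breaking condition \eqref{cond1}, namely that this maximizer is unique. I expect this to be the only delicate point, since the corollary is stated without an explicit nondegeneracy hypothesis. Under the mild genericity assumption that the top weighted average quality is attained by a single item (so that no other item matches the dominant product $v_{\sigma(i^*)}\bar{q}_{i^*}$), condition \eqref{cond1} holds, $\sigma$ is tie-breaking, and Theorem \ref{monopoly_alltogether} applies verbatim to yield convergence to a monopoly for $i^*$. I would therefore make this assumption explicit in the statement, after which the conclusion is immediate: the product of highest weighted average quality is precisely the item that predictably goes to a monopoly.
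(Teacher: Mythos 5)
Your proposal is correct and follows essentially the same route as the paper: both invoke Theorem \ref{monopoly_alltogether} and observe that, since the average quality ranking places items of higher $\bar{q}_i$ in more visible positions, the maximizer of $v_{\sigma(i)}\bar{q}_i$ coincides with the maximizer of $\bar{q}_i$ alone. Your explicit verification of the tie-breaking hypothesis is a detail the paper's one-line proof leaves implicit (the corollary tacitly assumes it via Theorem \ref{monopoly_alltogether}), so making that nondegeneracy assumption explicit is a minor tightening rather than a different argument.
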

%\proofatend
%%\begin{proof}
%From Theorem \ref{monopoly_alltogether}, a MMNL model goes to a
%monopoly for the item $i$ that maximizes $v_{\sigma(i)}
%\bar{q}_i$. When the quality ranking is used, the product
%$i^*$ that goes to a monopoly is
%\[
%i^*=\argmax_i v_{\sigma(i)} \bar{q}_i = \argmax_i(\bar{q}_i) = \bar{q}_1.
%\]
%%\end{proof}
%\endproofatend

\noindent
We end this section with some comments about the implications of
Theorem \ref{monopoly_alltogether} and Corollary \ref{cor1}. {\color{black} In general, market monopolies are not a desirable outcome from a consumer perspective. A key aspect of Theorem \ref{monopoly_alltogether} is that the monopoly outcome does not come as a result of a restriction on the product offer variety (every products is always offered in our model), but because the consumer's utility ratio between the most popular product and the other product utilities tends to infinity.} Although such monopoly convergence is somewhat surprising, it naturally applies to the very long run dynamics of these trial-offer markets
(i.e., when time tends to infinity). In practice, however, it takes a
very long time to even get close to a monopoly. It is not simple to find the convergence rate to a monopoly, even for the case $K=1$. This model can be seen as an unbalanced and irreducible Pólya Urn Process for which the convergence rate is an open problem except for a few special cases of replacement matrices (see remark 4.7 in \cite{janson2004functional}, where they conjecture a convergence rate of $o(n^\gamma)$ with $\gamma=\min\lbrace{1/2,3(1/2-Re(\lambda_2/\lambda_1)\rbrace})$, with $\lambda_1$ and $\lambda_2$ being the greatest and second greatest eigenvalues of the replacement matrix respectively). As an illustration of the
dynamics, we have performed a variety of computational experiments which are
reported in Section \ref{section:simulations}. In all those
experiments, the convergence to a monopoly of a single product was not
yet achieved at the end of the simulation. From a practical point of
view, the main insight obtained from Theorem
\ref{monopoly_alltogether} and Corollary \ref{cor1} is that, as time
goes by, consumers are more likely to purchase the product that has
the greatest average quality, which is due to the effect of the popularity
signal in the market dynamics. An interesting question that arises is
whether the effect of the popularity signal is beneficial to the
firm. Specifically, what is the effect of the popularity signal on the
expected rate of products purchased? This question is addressed in the
following section.

\subsubsection{The Impact of the Popularity Signal}
In the previous subsection, we have shown that the average quality
ranking for the MMNL model inherits the asymptotic convergence of the
quality ranking for the MNL. Under the MNL model, the probability (in
expectation) that the next individual purchases some product is always
increasing if the popularity signal is used
\citep{van2016aligning}. In other words, the information of past
purchases is helping consumers to make better choices on which
products to try, meaning that they are more likely to purchase
them. Unfortunately, this result does not always hold when consumers
follow the more general MMNL model.

\begin{theorem}
	When using the average quality ranking, the MMNL model can perform
	(1) up to $K$ times better if the popularity signal is not shown,
	where $K$ is the number of classes; and (2) can perform arbitrarily
	worse without showing the popularity signal than by showing it.
	\label{bound-nsi}
\end{theorem}
%\proofatend
%%\begin{proof}
%At the limit, the probability that an item is purchased under the
%average quality ranking with the popularity signal is given by
%\begin{equation}
%P_{AQGSI}=\max_{1\leq i \leq N} \bar{q}_i.
%\end{equation}
%When no popularity signal is shown, this probability becomes
%\begin{equation}
%\label{global-NSI}
%P_{AQNSI}=\sum_{k=1}^K w_k\sum_{i=1}^N q_{i,k}\frac{v_{\sigma(i)}a_{i,k}}{\sum_{j=1}^Nv_{\sigma(j)}a_{j,k}+z}.
%\end{equation}
%We can easily bound $P_{AQNSI}$ as follows:
%\begin{equation}
%0\leq\sum_{k=1}^K \min_{1\leq i \leq N}(w_k q_{i,k})\frac{\sum_{i=1}^Nv_{\sigma(i)}a_{i,k}}{\sum_{j=1}^Nv_{\sigma(j)}a_{j,k}+z}\leq P_{AQNSI}\leq\sum_{k=1}^K \max_{1\leq i \leq N}(w_k q_{i,k})
%\end{equation}
%and hence, by inequality \eqref{simple_matrix_inequality},
%\begin{equation}
%0\leq\frac{P_{AQNSI}}{P_{AQGSI}}\leq K.
%\end{equation}
%%\end{proof}
%\endproofatend

\noindent
Theorem \ref{bound-nsi} provides an upper bound on how much the
expected sales per period can be reduced due to the impact of the
popularity signal (part (1)). In addition, it shows that sometimes, showing the social influence can be extremely beneficial (part (2)). For the latter, it is important to note that not showing the popularity signal can perform arbitrarily worse than showing it even when $z=0$. This implies that this reduction is not caused by having consumers switching from the outside option to other products, but because consumers under social influence can sometimes try popular products that have low quality for them (relatively to other products). In Proposition \ref{bound-nsi-tightness} we show that the bounds in Theorem $3$ are tight, and this happens when we choose $z=0$ (however, for different values of $z$, not showing the popularity signal can still perform arbitrarily worse than by showing it).

\begin{proposition}
	The bounds in Theorem \ref{bound-nsi} are tight.
	\label{bound-nsi-tightness}
\end{proposition}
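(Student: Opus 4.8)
The plan is to establish sharpness at both endpoints of the chain $0 \leq P_{AQNSI}/P_{AQGSI} \leq K$ by exhibiting two families of MMNL instances, one whose efficiency ratio tends to $K$ and one whose ratio tends to $0$. Throughout I would reuse the two closed forms derived in the proof of Theorem \ref{bound-nsi}: the monopoly value $P_{AQGSI} = \max_{i} \bar{q}_i$ and the expression \eqref{global-NSI} for $P_{AQNSI}$.

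For the upper bound, I would take $N=K$ items, $K$ classes of equal weight $w_k = 1/K$, uniform visibilities $v_j = 1$, and $z=0$, and set the qualities to be ``diagonal'', $q_{i,k}=1$ when $i=k$ and $q_{i,k}=0$ otherwise. Then every item has the same weighted average quality $\bar{q}_i = 1/K$, so $P_{AQGSI}=1/K$. The appeals are chosen so that each class almost surely samples its own item: $a_{k,k}=1$ and $a_{j,k}=\epsilon$ for $j\neq k$. Substituting into \eqref{global-NSI} leaves only the diagonal terms and gives $P_{AQNSI} = 1/(1+(K-1)\epsilon)$, so the ratio equals $K/(1+(K-1)\epsilon)$, which converges to $K$ as $\epsilon\to 0^+$. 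The intuition I would record is that, without the signal, each class self-selects the single item it values, whereas under the popularity signal social influence drives the whole market to one monopoly item that is attractive in quality to only a $1/K$ fraction of consumers.

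For the lower bound, I would fix any instance with $P_{AQGSI}>0$ and let the outside-option constant $z\to\infty$. The value $P_{AQGSI}=\max_i \bar{q}_i$ is independent of $z$, since the monopolising item accumulates unbounded purchases that eventually dominate $z$ in the denominator of \eqref{MMNL}; hence it stays fixed. On the other hand, each summand of \eqref{global-NSI} is at most $v_{\sigma(i)}a_{i,k}/z \to 0$, so $P_{AQNSI}\to 0$ and the ratio tends to $0$. Economically, the signal rescues the market from a large outside option while the static no-signal policy cannot.

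The one point requiring care is the symmetric upper-bound instance, where all $\bar{q}_i$ coincide and so the tie-breaking hypothesis of Theorem \ref{monopoly_alltogether} fails, leaving it a priori ambiguous which item monopolises under the signal. This is harmless: whichever tied item $i^*$ wins, its limiting purchase probability is $\bar{q}_{i^*}=1/K$, so $P_{AQGSI}=1/K$ in every case; alternatively one breaks the tie by an infinitesimal perturbation of a single quality, changing both $P_{AQGSI}$ and $P_{AQNSI}$ by $O(\delta)$ and leaving the limit intact. I would close by noting that the constraint $a_{i,k}>0$ forbids $\epsilon=0$, so the values $K$ and $0$ are approached but not attained, which is precisely the sense in which the bounds of Theorem \ref{bound-nsi} are sharp.
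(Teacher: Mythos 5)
Your proof is correct, and it splits naturally into a part that matches the paper and a part that does not. For the upper bound your construction is essentially the paper's own: $N=K$ items, equal class weights, $z=0$, diagonal qualities, and appeals concentrated on the diagonal so that each class samples (almost) only the item it values, giving $P_{AQNSI}\to 1$ while $P_{AQGSI}=1/K$. The only difference is the handling of the tie among the $\bar{q}_i$: the paper perturbs the qualities (setting the diagonal entries to $1-\epsilon$ except the first) so that the tie-breaking hypothesis of Theorem \ref{monopoly_alltogether} holds verbatim, whereas you keep the exact tie and argue the limit is $1/K$ in any case; your argument is sound (with $z=0$ the sampling probabilities sum to one and every generalized quality $\widetilde{q}_i(t)$ converges to $1/K$, so the purchase probability converges to $1/K$ whether or not a monopoly forms), and your perturbation fallback is exactly the paper's fix. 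For the lower bound you take a genuinely different route: the paper stays at $z=0$ and makes each class's valued item have vanishing appeal $\epsilon_A$ while the zero-quality items have appeal $1$, so that the no-signal market almost never samples anything worth buying; you instead fix an arbitrary instance and send the outside option $z\to\infty$, observing that $P_{AQGSI}=\max_i \bar{q}_i$ is unaffected (accumulated purchases eventually swamp $z$ in the denominator of \eqref{MMNL}) while every summand of \eqref{global-NSI} is at most $v_{\sigma(i)}a_{i,k}/z\to 0$. Both are valid; the paper's construction is slightly stronger in that it establishes tightness already within the subclass $z=0$, attributing the degradation purely to appeal/quality misalignment, while yours is simpler and applies to any base instance, but it leaves open whether the bound is tight when the outside option is small. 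Two minor points: in the lower-bound argument you should require the fixed instance to be tie-breaking so that Theorem \ref{monopoly_alltogether} yields the limit $P_{AQGSI}=\max_i\bar{q}_i$ for each fixed $z$; and note that your construction actually respects the model's requirement $a_{i,k}>0$, which the paper's identity appeal matrix technically violates.
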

%\proofatend
%%\begin{proof}
%Consider first the upper bound. Choose a MMNL model where $z=0$, $K=N$, the
%quality matrix is diagonal with a value of $1$ for the first element
%and $1-\epsilon$ for all others, the appeal matrix is the identity,
%and the classes have the same weights $w_i=\frac{1}{K}$. Then,
%\begin{align*}
%	P_{AQNSI} & =\sum_{1\leq i \leq N} \frac{1}{K}\left(1-\epsilon(1-\delta_{i1})\right) \text{ and } \\
%	P_{AQGSI} & = \frac{1}{K}.
%\end{align*}
%and
%\begin{equation}
%\lim_{\epsilon \to 0} \frac{P_{AQNSI}}{P_{AQGSI}}= \lim_{\epsilon \to 0} \sum_{1\leq i \leq N} (1-\epsilon\delta_{i1})= \lim_{\epsilon \to 0} (K-\epsilon (K-1)) = K.
%\end{equation}
%
%\noindent
%Consider now the lower bound. Choose a MMNL model where $z=0$, $K=N$ with the
%same quality matrix as before, the same weights, and an appeal matrix
%filled with ones except in its diagonal where each element has a value of
%$\epsilon_A$. Then,
%\begin{align*}
%	P_{AQNSI} = & \sum_{1\leq i \leq N} \frac{1}{K}\left(1-\epsilon(1-\delta_{i1})\right)\frac{v_{\sigma(i)}\epsilon_A}{\epsilon_Av_{\sigma(i)}+\sum_{j\neq i} v_{\sigma(j)}}  \\
%	P_{AQGSI} = & \frac{1}{K}.
%\end{align*}
%and
%\begin{equation}
%\lim_{\epsilon_A \to 0} \frac{P_{AQNSI}}{P_{AQGSI}}= \lim_{\epsilon_A \to 0} \sum_{1\leq i \leq N}\left(1-\epsilon(1-\delta_{i1})\right)\frac{v_{\sigma(i)}\epsilon_A}{\epsilon_Av_{\sigma(i)}+\sum_{j\neq i} v_{\sigma(j)}} = 0.
%\end{equation}
%%\end{proof}
%\endproofatend

\noindent
This result shows that, depending on the specific parameters of the
trial offer market (i.e. consumer's appeals, product qualities and
weights of the different consumer segments), the popularity signal can
enhance the number of sales or it can be detrimental to
them. Specifically, the expected number of purchases under the average
quality ranking policy can decrease by a factor of $K$ if the
popularity signal is used (for some settings) but they can also be
increased by an arbitrarily large factor in other settings. In
Proposition \ref{bound-nsi-tightness}, the first tight bound occurs in
a trial-offer market where each of the $K$ different classes of
consumers have a unique product in which they are interested;
Moreover, this product is different for each consumer segment and
there is a perfect alignment between product appeal and product
quality for each class. In such a setting, the
global popularity signal would be detrimental for the firm as well as for the consumers. The reason is that in the long run product 1 will become a monopoly (since its average quality ranking is higher than all the others). This means that in the long run, consumers will tend to try product 1. But this is problematic because, except for one consumer segment, all consumers segments do not like this product. In the limit, only a $1/K$ fraction of the consumers would purchase this product. {\color{black} In short, in this scenario a global popularity signal will persuade consumers to try products they do not like.}

On the other hand, the second bound in Proposition
\ref{bound-nsi-tightness} was obtained in a similar market setting but
in which there is negative correlation between the products appeal and
their quality. In that setting, the market converges slowly to a
monopoly in the case the popularity signal is used. If no social
influence is used, the rate of purchases tends to zero as the products
have essentially no appeal. Thus, an heterogeneous set of customers
complicates the managerial decisions in the marketplace: Whether or
not social influence is beneficial in trial-offer markets depends on
the particular structure of preferences among consumers. We recall
that this is in sharp contrast to the results in
\citep{van2016aligning} where it is shown that the popularity signal
is always beneficial in the case consumers share preferences (i.e
$K=1$).
We now quantify the benefit (or cost) of showing the popularity signal given the specific model parameters. The asymptotic purchase probability ratio between the average quality ranking with no social influence (AQNSI) and the average quality ranking with social influence (AQGSI) is (see the proof of Theorem \ref{bound-nsi}):

\begin{equation}\label{eq:GSI-NSI}
\frac{\lim_{t\to \infty}P^t_{AQNSI}}{\lim_{t\to \infty}P^t_{AQGSI}}=\frac{\sum_{k=1}^K w_k \sum_{i=1}^N q_{i,k}\frac{v_{\sigma(i)}a_{i,k}}{\sum_{j=1}^N v_{\sigma(j)}a_{j,k}+z_k}}{\max_{1\leq i \leq N}{\sum_{k=1}^K w_k q_{i,k}}}.
\end{equation}

If the expression in \ref{eq:GSI-NSI} is smaller (greater) than $1$  showing the popularity signal under the average quality ranking is beneficial (detrimental).  Proposition \ref{bound-nsi-tightness} shows that under different model parameters, the ratio in Equation \eqref{eq:GSI-NSI} could be equal to zero, $K$, or any number between them. This is important for the platform owner, and has to be decided particularly for each market settings, since social influence may hurt or benefit purchases in the long run.

\section{Market Segmentation and its Benefits}
\label{market_segmentation_section}
In the previous section, we have shown a number of negative results
for the MMNL model. In particular, we have shown that, in MMNL models,
computing the performance ranking is intractable and that
displaying the popularity signal to customers may significantly reduce
the asymptotic market efficiency (i.e. the expected rate of
purchases) of the average quality ranking. In this section, we show
that the widely used marketing strategy known as \emph{market
	segmentation} remedies these limitations, while retaining the
original benefits of quality ranking for the Multinomial Logit Model.

The market segmentation considered here assumes that the firm has the
ability to know the segment of each arriving consumer. This is a natural
assumption in a number of online markets (e.g., Amazon, online retail
stores, iTunes, Google Play and Netflix) where firms are able to learn information
about their customers over time. Armed with this information, the firm
will now propose item rankings dedicated to each customer
segment. Moreover, and equally important, the popularity signal will be
tailored to each segment. In other words, {\em the firm will only show
	the popularity signal derived from purchases of customers of the
	same segment as the incoming customer}, not the popularity obtained
from the entire customer pool. As shown in Figure
\ref{fig:awesome_image}, websites such as {\tt Booking.com} already
give customers the option of selecting their peer groups to refine the
site recommendations (although hotels bookings are not a trial-offer
market, segmentation is an important factor in their revenue
maximizing strategies). Under this new strategy where each consumer segment has its own quality ranking and observes the past purchases
of its own segment only, the policy is called the {\em segmented quality
	ranking}. The firm uses $K$ permutations $\sigma_k \in S_N (k \in
[K])$, where $\sigma_k$ sorts the products in decreasing order
according to their quality for consumer segment $k$. In addition, the probability
of trying item $i$ for a customer of segment $k$ is given by
\[
p_{i,k}(\sigma,d_k^t)
\]
where $d_k^t = (d_{1,k}^t,\ldots,d_{N,k}^t)$ and $d_{i,k}^t$ denotes
the number of purchases of item $i$ by customers from segment $k$ up to
time $t$.

We now study the benefits of this market segmentation. Observe first
that each market segment can be viewed as evolving independently and
hence directly inherits the original benefits identified for the
quality ranking under the MNL model: The market share of the highest
quality product converges asymptotically to 1.  This observation will
enable us to quantify the benefits of market segmentation. We begin by
providing some key definitions that will be required later.

\begin{definition} The segmented quality ranking policy $\sigma_k \in S_N$
	$(k \in [K])$ is \emph{tie-breaking} if, for each segment $k$, there
	exists a unique item $i^*_k$ with the highest quality:
	\begin{align}
	\forall k \in [K] \; \exists i^*_k\in[N] \; \forall j \in [N], j \neq i^*_k: \; q_{i^*_k,k} > q_{j,k}. \label{cond_locally_without_ties}
	\end{align}
\end{definition}

\noindent
Assuming a MMNL model for which the average quality ranking and the
segmented quality ranking are tie-breaking, we compare the probability
of a purchase at time $t$ in both settings. More precisely, we compare
two quantities:
\begin{itemize}
	
	\item $P_{AQGSI}^t$: the probability of a purchase at time $t$ when
	the firm uses the average quality ranking and the ``global'' popularity signal $d^t$;
	
	\item $P_{SQSSI}^t$: the probability of a purchase at time $t$ when
	the firm uses the segmented quality ranking with the consumer segment's popularity signal $d_k^t$.
\end{itemize}

\noindent
The probabilities $P_{AQGSI}^t$ and $P_{SQSSI}^t$ concern the
behavior of the consumer arriving to the market at time $t$
independently from the customer class. Comparing $P_{AQGSI}^t$ and
$P_{SQSSI}^t$ for any time $t$ is a very challenging task. Instead, we
compare both variables in the limit. The following theorem shows that
the market segmentation strategy is always beneficial for the firm and
the benefit it provides is upper bounded by a factor of $K$.

\begin{theorem}
	\label{thm_bound}
	Assume that the average quality ranking and its segmented version are
	tie-breaking for a MMNL model. Then,
	\begin{equation}
	\label{benefits_clustering}
	1 \leq \lim_{t\rightarrow\infty}  \frac {P^t_{SQSSI}}{P^t_{AQGSI}} \leq K.
	\end{equation}
\end{theorem}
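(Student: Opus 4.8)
The plan is to evaluate both limiting probabilities in closed form and then reduce the two-sided bound \eqref{benefits_clustering} to two elementary inequalities on the quality matrix. First I would determine $\lim_{t\to\infty} P^t_{AQGSI}$. By Corollary \ref{cor1}, the average quality ranking drives the MMNL model to a monopoly for the item $i^*=\argmax_{i}\bar q_i$. Consequently the popularity term $d_{i^*}^t$ eventually dominates every class-$k$ choice probability \eqref{MMNL}, so that $p_{i^*,k}(\sigma,d^t)\to 1$ and $p_{i,k}(\sigma,d^t)\to 0$ for $i\neq i^*$; averaging the resulting purchase probability over the arrival classes gives
\[
\lim_{t\to\infty} P^t_{AQGSI}=\sum_{k=1}^K w_k q_{i^*,k}=\bar q_{i^*}=\max_{1\leq i\leq N}\bar q_i,
\]
which is exactly the value $P_{AQGSI}$ used in the proof of Theorem \ref{bound-nsi}.

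Next I would compute $\lim_{t\to\infty}P^t_{SQSSI}$. The key structural observation is that under the segmented policy the $K$ segments evolve as independent single-class markets, each governed by its own quality ranking $\sigma_k$ and its own popularity signal $d_k^t$. Since every class arrives at a positive Poisson rate, each segment receives infinitely many consumers, so its total purchase count diverges; moreover, the tie-breaking hypothesis \eqref{cond_locally_without_ties} together with the fact that $\sigma_k$ places the unique highest-quality item in the most visible position ensures the single-class tie-breaking condition of Theorem \ref{monopoly_alltogether} (specialized to $K=1$) holds for that segment. Applying that theorem to each segment separately, segment $k$ goes to a monopoly for $i^*_k=\argmax_i q_{i,k}$, so the purchase probability conditional on a class-$k$ arrival converges to $q_{i^*_k,k}=\max_i q_{i,k}$. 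Weighting by the arrival probabilities yields
\[
\lim_{t\to\infty}P^t_{SQSSI}=\sum_{k=1}^K w_k \max_{1\leq i\leq N} q_{i,k}.
\]

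With both limits in hand, the theorem reduces to the pair of inequalities
\[
\max_i \sum_{k}w_k q_{i,k}\;\leq\;\sum_{k}w_k\max_i q_{i,k}\;\leq\;K\max_i\sum_{k}w_k q_{i,k}.
\]
The lower bound (the ``$1\leq$'' half) is immediate: letting $i_0$ maximize $\sum_k w_k q_{i,k}$ and using $\max_i q_{i,k}\geq q_{i_0,k}$ term by term gives $\sum_k w_k\max_i q_{i,k}\geq\sum_k w_k q_{i_0,k}=\max_i\bar q_i$. For the upper bound I would invoke the matrix inequality \eqref{simple_matrix_inequality} with $m=K$ and $n=N$, applied to the nonnegative entries $a_{i,k}=w_k q_{i,k}$ (the weight $w_k\geq 0$ pulls through the inner maximum since $\max_i w_k q_{i,k}=w_k\max_i q_{i,k}$); the inequality reads $\sum_k\max_i a_{i,k}\leq K\max_i\sum_k a_{i,k}$, which is precisely the claimed bound.

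I expect the main obstacle to lie not in these final inequalities, which are routine, but in the rigorous justification of the $P^t_{SQSSI}$ limit. One must establish that the segments are genuinely independent (their popularity signals never interact), that each segment's purchase count almost surely diverges so that the per-segment monopoly result truly applies, and that the almost-sure per-segment convergence assembles into convergence of the unconditional purchase probability. The interchange of the limit with the finite weighted sum over the $K$ classes is harmless once the per-segment limits are secured, but making the monopoly statement precise for each segment — in particular confirming that the $z$ term and the appeal contributions become negligible against the diverging popularity of $i^*_k$ — is where the real content resides.
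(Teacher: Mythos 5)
Your proposal is correct and follows essentially the same route as the paper's own proof: both evaluate the two limits as $\max_i \bar q_i$ (via the monopoly convergence of the average quality ranking) and $\sum_k w_k \max_i q_{i,k}$ (via the per-segment monopoly convergence), and then obtain the two-sided bound from the term-by-term comparison and the matrix inequality \eqref{simple_matrix_inequality}. Your version is in fact somewhat more careful than the paper's, which simply asserts the segment independence and the limiting values without the justifications you supply.
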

%\proofatend
%%\begin{proof}
%By Theorem \ref{monopoly_alltogether}, we have
%\begin{equation*}
%	P_{AQGSI} \doteq \lim_{t\rightarrow\infty} P^t_{AQGSI} = \max \{ \bar{q}_{1}, \bar{q}_{2}, \hdots, \bar{q}_{N}\}.
%\end{equation*}
%
%\noindent
%As mentioned earlier, for the segmented quality ranking, each class is
%independent from each other and all of them will converge to a
%monopoly for the product with the highest quality in that class.
%We have that
%
%\begin{equation*}
%	P_{SQSSI} \doteq \lim_{t\rightarrow\infty} P^t_{SQSSI} = \sum_{k=1}^K w_k \max_{i}q_{i,k}.
%\end{equation*}
%
%\noindent
%As a result,
%\begin{align*}
%	\label{Psat}
%	\frac{P_{SQSSI}}{P_{AQGSI}} & =\frac{\sum_{k=1}^K w_k \max_{1\leq i \leq N} q_{i,k}}{\max_{1\leq i \leq N}\sum_{k=1}^K w_k  q_{i,k}} \\
%	& =\frac{\sum_{k=1}^K  \max_{1\leq i \leq N} w_k q_{i,k}}{\max_{1\leq i \leq N}\sum_{k=1}^K w_k  q_{i,k}}.
%\end{align*}
%
%\noindent
%The lower bound is obviously valid and the upper bound follows from
%inequality \eqref{simple_matrix_inequality}.
%%\end{proof}
%\endproofatend

The following proposition shows that there are settings in which the
(multiplicative) benefits of market segmentation are indeed equal to
the upper bound provided in Theorem \ref{thm_bound}.

\begin{proposition}
	The upper bound of Theorem \ref{thm_bound} is tight.
	\label{tightness_segmentation}
\end{proposition}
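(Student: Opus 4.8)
The starting point is the closed form for the limiting ratio obtained inside the proof of Theorem~\ref{thm_bound}:
\[
\lim_{t\to\infty}\frac{P^t_{SQSSI}}{P^t_{AQGSI}}
=\frac{\sum_{k=1}^{K} w_k\,\max_{1\le i\le N} q_{i,k}}{\max_{1\le i\le N}\sum_{k=1}^{K} w_k\,q_{i,k}} .
\]
The key observation is that this expression depends only on the class weights $w_k$ and the quality matrix $(q_{i,k})$; the appeals, the visibilities and $z$ enter only through the tie-breaking hypotheses and the convergence argument. The plan is therefore to produce a one-parameter family of MMNL instances, each satisfying the tie-breaking conditions \eqref{cond1} and \eqref{cond_locally_without_ties}, whose ratio tends to $K$. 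Inequality \eqref{simple_matrix_inequality} is tight in the limit of a (near-)diagonal quality matrix, in which each item is good for a single class only, so the construction should encode such a matrix.

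Concretely, I would take $N$ items and $K=N$ classes with uniform weights $w_k=1/K$, strictly positive appeals (for instance $a_{i,k}=1$), strictly decreasing visibilities $v_1>\cdots>v_N>0$, and $z=0$, and choose the quality matrix concentrated on the diagonal: $q_{i,k}=0$ for $i\neq k$. For the symmetric choice $q_{k,k}=1$ one gets $P_{SQSSI}\to\sum_k w_k=1$ and $\bar q_i=w_i q_{i,i}=1/K$ for every $i$, so $P_{AQGSI}=\max_i\bar q_i=1/K$ and the ratio equals $K$ exactly. The value $P_{SQSSI}$ follows because, for each class $k$, only item $k$ has positive quality, so by the class-by-class convergence of \citet{2015_quality_abeliuk_et_al} the $k$-th segment monopolizes item $k$ with purchase probability $q_{k,k}$; the value of $P_{AQGSI}$ follows from Theorem~\ref{monopoly_alltogether}.

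The main obstacle is that this perfectly symmetric instance violates the tie-breaking hypothesis \eqref{cond1}, since all the $\bar q_i$ coincide. I would remove the degeneracy by perturbing the diagonal to distinct values $q_{k,k}=1-(k-1)\delta$ with a small $\delta>0$. Then $\bar q_1>\bar q_2>\cdots>\bar q_N$, so \eqref{cond1} holds (the maximizer sits in the most visible position), while within each class item $k$ remains the unique quality maximizer, so \eqref{cond_locally_without_ties} holds as well. The ratio becomes
\[
\frac{\sum_{k=1}^{K} w_k q_{k,k}}{\max_{1\le i\le N} w_i q_{i,i}}
=\sum_{k=1}^{K}\bigl(1-(k-1)\delta\bigr)
=K-\delta\,\frac{K(K-1)}{2}\;\xrightarrow[\delta\to 0^{+}]{}\;K ,
\]
so the bound $K$ of Theorem~\ref{thm_bound} is approached arbitrarily closely, exactly as in the earlier tightness argument. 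The only remaining work is routine bookkeeping: checking the two tie-breaking conditions for every $\delta>0$ and invoking the $K=1$ convergence result of \citet{2015_quality_abeliuk_et_al} segment by segment to justify the limiting purchase probabilities.
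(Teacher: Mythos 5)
Your proposal is correct and follows essentially the same route as the paper's own proof: take $K=N$ classes, a diagonal quality matrix, perturb the diagonal by a small $\epsilon$ (your $\delta$) to satisfy both tie-breaking conditions, and let the perturbation go to zero so the limiting ratio $\sum_k w_k \max_i q_{i,k} / \max_i \sum_k w_k q_{i,k}$ tends to $K$. The only difference is that you specialize to uniform weights $w_k = 1/K$, whereas the paper handles arbitrary weights by scaling the diagonal entries as $q_{k,k} \approx \min_j w_j / w_k$ so that the products $w_k q_{k,k}$ are nearly equal; since tightness only requires exhibiting one family of instances, this restriction is immaterial.
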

%\proofatend
%%\begin{proof}
%Consider a model with $K$ items and $K$ consumer classes. Without
%loss of generality, let the class 1 be the class with the
%lowest weight, i.e., $w_1\leq w_k \forall k\in[K]$. Then, for any
%set of positive appeals in each class, define the elements
%$q_{i,k}$ as follows:
%\begin{equation*}
%	q_{i,k}= \left\{
%	\begin{array}{l l}
%		\frac{\min_{j\in[K]}{w_j}}{w_1} & \quad \text{if $i=k=1$}\\
%		\frac{\min_{j\in[K]}{w_j}}{w_k}-\epsilon & \quad \text{if $i=k\neq 1$} \\
%		0 & \quad \text{otherwise}
%	\end{array} \right.
%\end{equation*}
%\noindent
%where $\epsilon$ is a positive number ensuring that the model is
%tie-breaking for the quality rankings. Then
%\begin{equation*}
%	\lim_{\epsilon \to 0} \frac{P_{SQSSI}}{P_{AQGSI}}= \lim_{\epsilon_A \to 0} \frac{K\min_{j\in[K]}{w_j}-\epsilon\sum_{k=2}^K w_k}{\min_{j\in[K]}w_j} = K.
%\end{equation*}
%%\end{proof}
%\endproofatend

\noindent
These results show that the segmented quality ranking always
outperforms (or it is equal to) the average quality ranking in
expectation, and that the improvement in market efficiency can be up to
a factor of $K$. As it can be seen from the proof of Proposition
\ref{tightness_segmentation}, the market settings in which the
segmentation strategy beats by the most the global ranking is when
each consumer segment $k \in \{1,\hdots,K\}$ of consumers have a single product
with a non-zero quality and it is pairwise different between any two
classes. Thus, the different segments of the market have very distinct
preferences, which is the setting where the strategy of market
segmentation is generally used in practice \citep{dickson1987market}.
Proposition \ref{tightness_segmentation} means that there are settings under which a segmented ranking performs the same, or up to $K$ times better than a single ranking. This imposes a maximum cost rate to what the platform owner may be willing to pay for information about consumer segments, and deploying a segmented ranking. It is important to remark that these results hold for the very long run
(i.e. asymptotic in nature).  They don't necessarily imply that the
segmented quality ranking is always better, since the popularity
signal is weaker early in the market evolution. This will be
illustrated in the computational experiments presented in the next
section.

It is important to remark that the results of this section can be
extended to the case where there are classification mistakes while
performing the segmentation policy. This situation is analyzed in the
supplementary appendix A. We formulate the problem using a mistake probability
matrix and show that the system converges to a monopoly, analogous to
Theorem \ref{monopoly_alltogether} but incorporating the values of the
error probability matrix. Assuming that classification mistakes occur
with an exogenous probability $\beta_0$ evenly distributed among all
products, we find an analogous bound as Theorem \ref{thm_bound}, which
becomes
$K(1-\beta_0)$. For
an example, if we make classification errors $10\%$ of the time with 5
consumer classes of equal weights, the maximum benefit of segmentation
is $5(1-0.1)=4.5$, while its maximum benefit is $5$ with
perfect classification.  This extension generalizes our model to a
more realistic setting where we can analyze the trade-off between
segmenting the market and taking the risk of making classification
errors versus being risk averse and showing an average quality
ranking. We performed a numerical simulation to analyze the impact of
having different mistake probabilities for the SQSSI policy and we
find that the average quality ranking AQGSI outperforms the SQSSI
policy when $\beta_0>10\%$ for a parameter set (see Figure \ref{fig:mistakes_scheme1} in Appendix A).

We end this section with two additional comparisons. First, we compare how the average quality ranking without social influence (AQNSI) performs in comparison to the segmented quality segmented social influence ranking (SQSSI). It is no surprise that SQSSI outperforms AQNSI as the number of purchases goes to infinity: this is because under SQSSI, in the limit, each consumer segment will try the product that maximizes its purchase probability. Finally, we analyze how SQSSI performs in comparison to its counterpart without social influence (SQNSI). Again here, it should be no surprise that SQSSI outpeforms SQNSI. Both of these results are shown in the following theorem.

\begin{theorem}
	\label{AQNSIvsSQSSI_bound}
	The ratio between the asymptotic purchase probability of the average quality or the segmented quality ranking without social influence (AQNSI or SQNSI), and its segmented quality ranking with segmented social influence (SQSSI) is always less than 1,
	\begin{equation*}\label{eq:SI-NSI}
	\frac{\lim_{t\to \infty}P^t_{AQNSI}}{\lim_{t\to \infty}P^t_{SQSSI}}\leq 1 \text{ and } \frac{\lim_{t\to \infty}P^t_{SQNSI}}{\lim_{t\to \infty}P^t_{SQSSI}}\leq 1.
	\end{equation*}
\end{theorem}

We have shown that SQSSI outperforms AQGSI, AQNSI and SQNSI, and that AQGSI can outperform or underperform AQNSI. Figure \ref{pictogram_policies} shows a pictogram with the ranking policies studied.

\begin{figure}[h]
	\includegraphics[trim=0cm 3cm 0cm 0cm ,width=17cm]{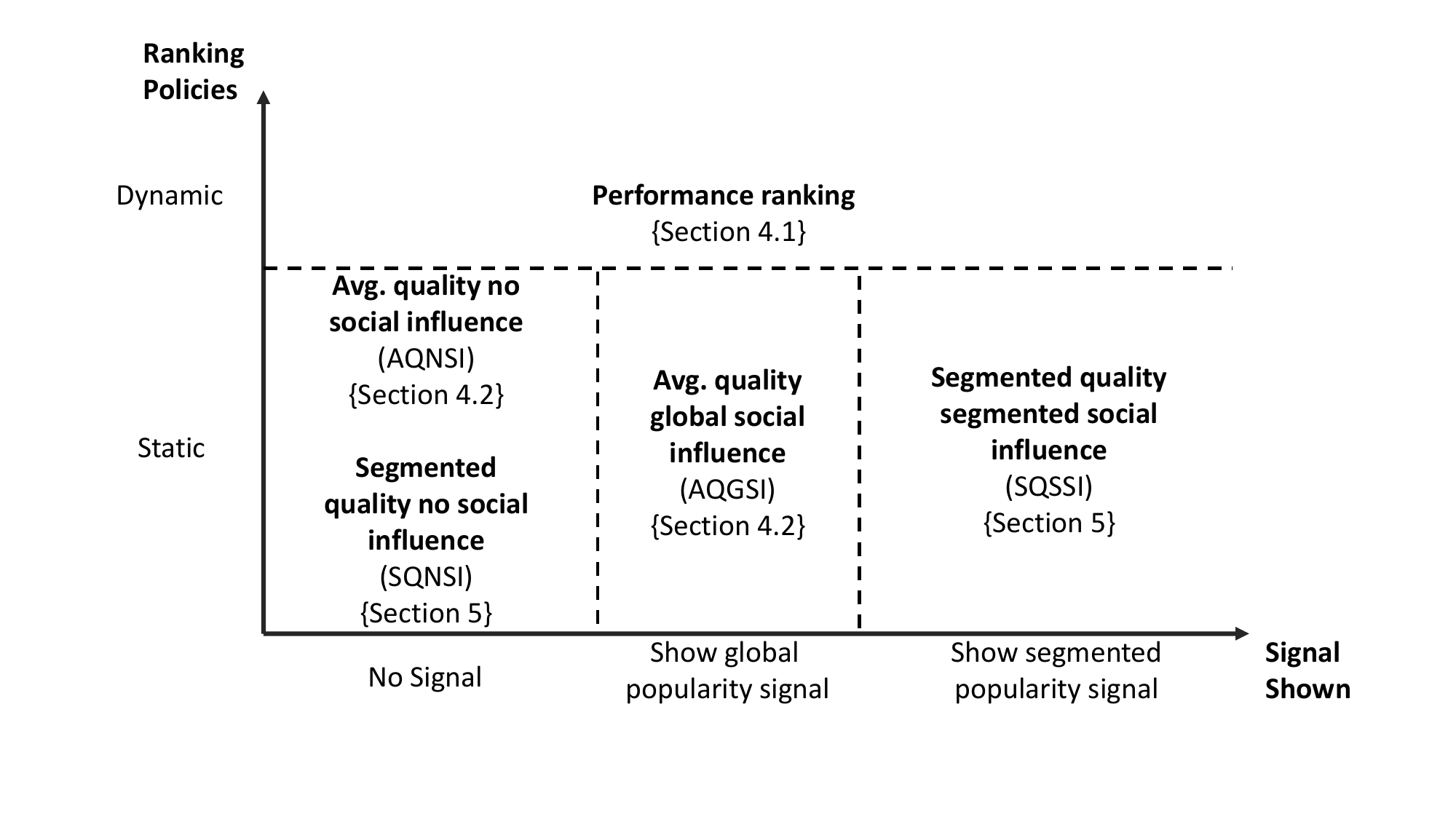}
	\caption{\label{pictogram_policies}Pictogram of the ranking policies studied and where to find them.}
\end{figure}

\section{Computational Experiments}
\label{section:simulations}

This section presents the results of computational experiments to
illustrate the theoretical results and complement them by depicting
how the markets evolve over time for different types of rankings.

\subsection{The Experimental Setting}

\paragraph{The Agent-Based Simulation} The experimental setting uses
an agent-based simulation to emulate the MusicLab
\citep{salganik2006experimental}. It generalizes prior results which
simulated the MusicLab through the use of a MNL model (e.g.,
\citep{krumme2012quantifying,abeliuk2015benefits}) to a MMNL model.  Each
simulation consists of $T$ iterations and, at each iteration $t$ $(1
\leq t \leq {\color{black}T}$), the simulator
\begin{enumerate}
	
	\item randomly selects a customer segment $k$ according to the classes
	weights $w_k$;
	
	\item randomly selects an item $i$ for the incoming customer according
	to the probabilities $p_{i,k}(\sigma,d)$, where $\sigma$ is the ranking
	proposed by the policy under evaluation and $d$ is the popularity
	signal;
	
	\item randomly determines, with probability $q_{i,k}$, whether
	the selected item $i$ is purchased. In the case of a purchase, the
	simulator increases the popularity signal for item $i$, i.e.,
	$d_{i,t+1} = d_{i,t} + 1$. Otherwise, $d_{i,t+1} = d_{i,t}$.
	
\end{enumerate}

\noindent
The experimental setting aims at being close to the MusicLab
experiments and it considers $50$ items and simulations with $T=200,000$
steps. The reported results in the graphs are the average of $400$
simulations. The analysis in \cite{krumme2012quantifying} indicated
that participants are more likely to sample products with better
ranking positions. More precisely, the visibility decreases with the
ranking position, except for a slight increase at the bottom
positions. To have a fair comparison between the settings with and without social influence we set $z=0$ for all simulations, in that way the fraction of customers choosing the outside option does not change over time.

\paragraph{Qualities and Appeals}

\begin{figure}[t]
	\centering
	\includegraphics[trim = 30mm 77mm 30mm 77mm, clip, width=12cm]{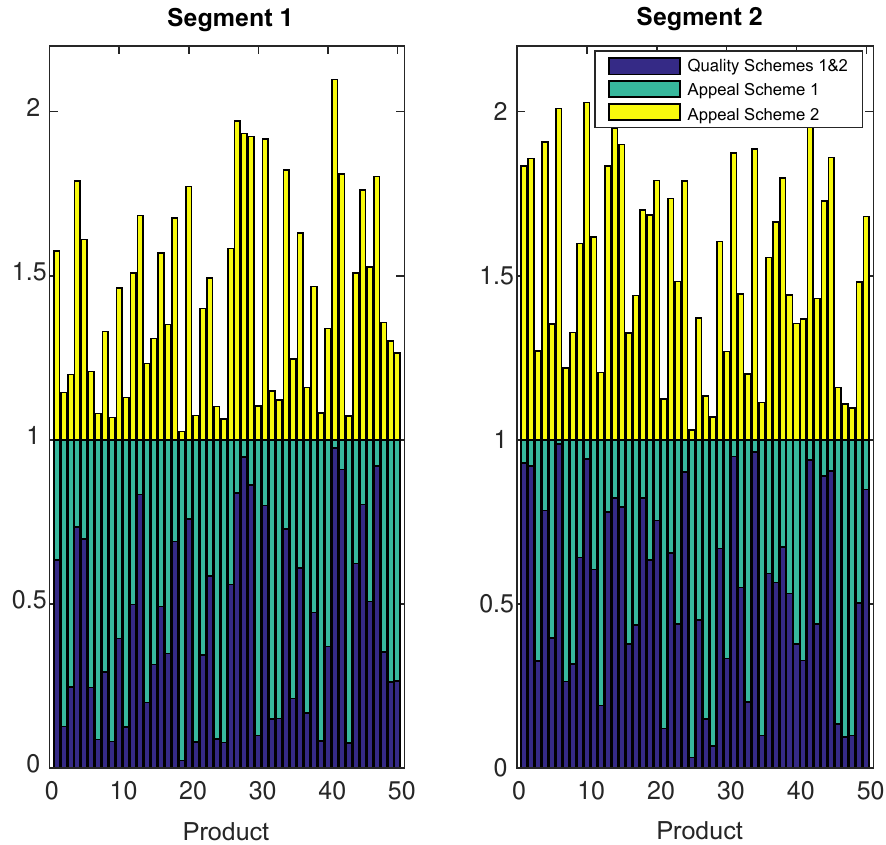}
	\caption{The quality $q_i$ (blue) and appeal $A_i$ (green and yellow) of product $i$ for settings $1$ and $2$ for both classes of consumers. The settings only differ in the appeal of items, and not in the quality of items. In Setting 1, the appeals are negatively correlated with quality, so that the sum between them is always 1. In Setting 2, the appeal is correlated to the quality with a small noise.}
	\label{fig:setting12}
\end{figure}

\begin{figure}[t]
	\centering
	\includegraphics[trim = 40mm 95mm 40mm 90mm, clip, width=12cm]{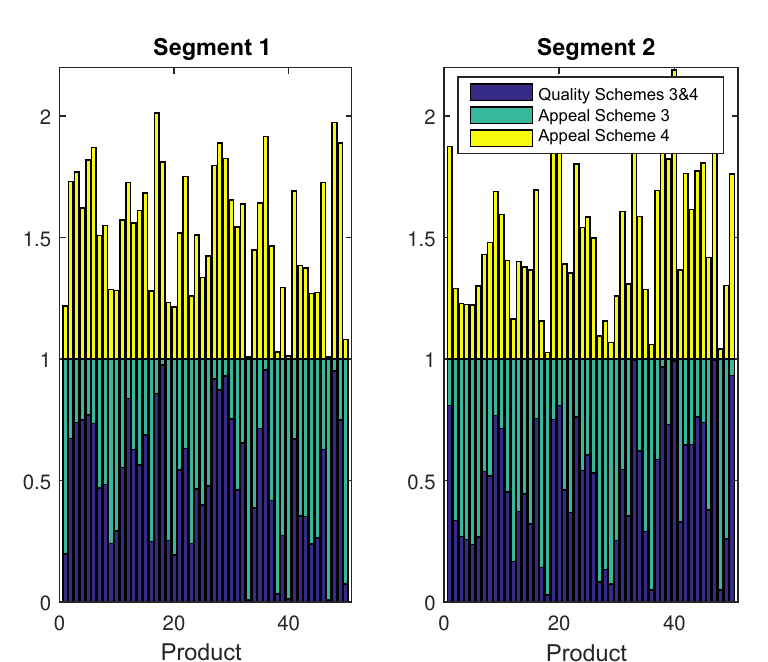}
	\caption{The quality $q_i$ (blue) and appeal $A_i$ (green and yellow) of product $i$ in settings $3$ and $4$ for both classes of consumers. The settings only differ in the appeal of items, and not in the quality of items. In Setting 3, the appeals are negatively correlated with quality, so that the sum between them is always 1. In Setting 4, the appeal is correlated to the quality with a small noise. All appeals were then multiplied by a factor of $200$ to use in the model.}
	\label{fig:setting34}
\end{figure}

To highlight and complement the theoretical results, we consider four
different schemes, the schemes share the following characteristics:
They have two customer classes with the same weight and they use 50
products.  They differ in how the values for the item appeals and
qualities are chosen. The schemes are depicted visually in Figures
\ref{fig:setting12} and \ref{fig:setting34} and were obtained as
follows:

\begin{enumerate}
	
	\item Scheme 1: The product qualities for each consumer segment were chosen
	randomly with a standard uniform distribution ($q_{i,1}$ and $q_{i,2}$ are independent for all $i\in[N]$). Appeals were
	negatively correlated with quality, i.e., $a_{i,k} = 1 - q_{i,k}$ for all $i\in[N]$.
	
	\item Scheme 2: Product qualities are similar to Scheme 1. Appeal
	vectors are now correlated with the quality vectors. More precisely, the
	appeal vector for each consumer segment was set to 0.8 times the quality plus a
	random uniform vector between -0.4 and 0.4, i.e., $a_{i,k}=q_{i,k}
	(0.8 + 0.4*\epsilon_i)$ for all $i\in[N]$, where $\epsilon_i$ is a standard uniform random variable.
	
	\item Scheme 3: The product quality for segment 1 is a random vector,
	while $q_{i,2} = 1 - q_{i,1} + 0.01*\epsilon_i$ for all $i\in[N]$, where $\epsilon_i$ is a standard uniform random variable. Appeals are negatively correlated with quality, i.e., $a_{i,k} = 1 - q_{i,k}$ for all $i\in[N]$.
	
	\item Scheme 4: The product qualities are the same as in Scheme 3 but
	the appeals are correlated with qualities, $a_{i,k}=q_{i,k} (0.8 +
	0.4 \ rand(1,50))$ for all $i\in[N]$.
\end{enumerate}

\noindent
Observe that, in Schemes 3 and 4, customers in the two classes
associate fundamentally different qualities with the products. For the simulations, the appeals vector were multiplied by a factor of $200$.

\paragraph{The Policies}

The simulations compare the average and segmented quality rankings
with and without the popularity signal. We use the following
notations:
\begin{itemize}
	\item {\tt SQSSI}: Segmented quality ranking with segmented popularity signal;
	\item {\tt SQNSI}: Segmented quality ranking without popularity signal;
	\item {\tt AQGSI}: Average quality ranking with global popularity signal;
	\item {\tt AQNSI}: Average quality ranking without popularity signal.
\end{itemize}

\subsection{Market Efficiency}

\begin{figure}[t]
	\centering
	\subfigure{
		\includegraphics[trim = 10mm 57mm 10mm 75mm, clip, width=7cm]{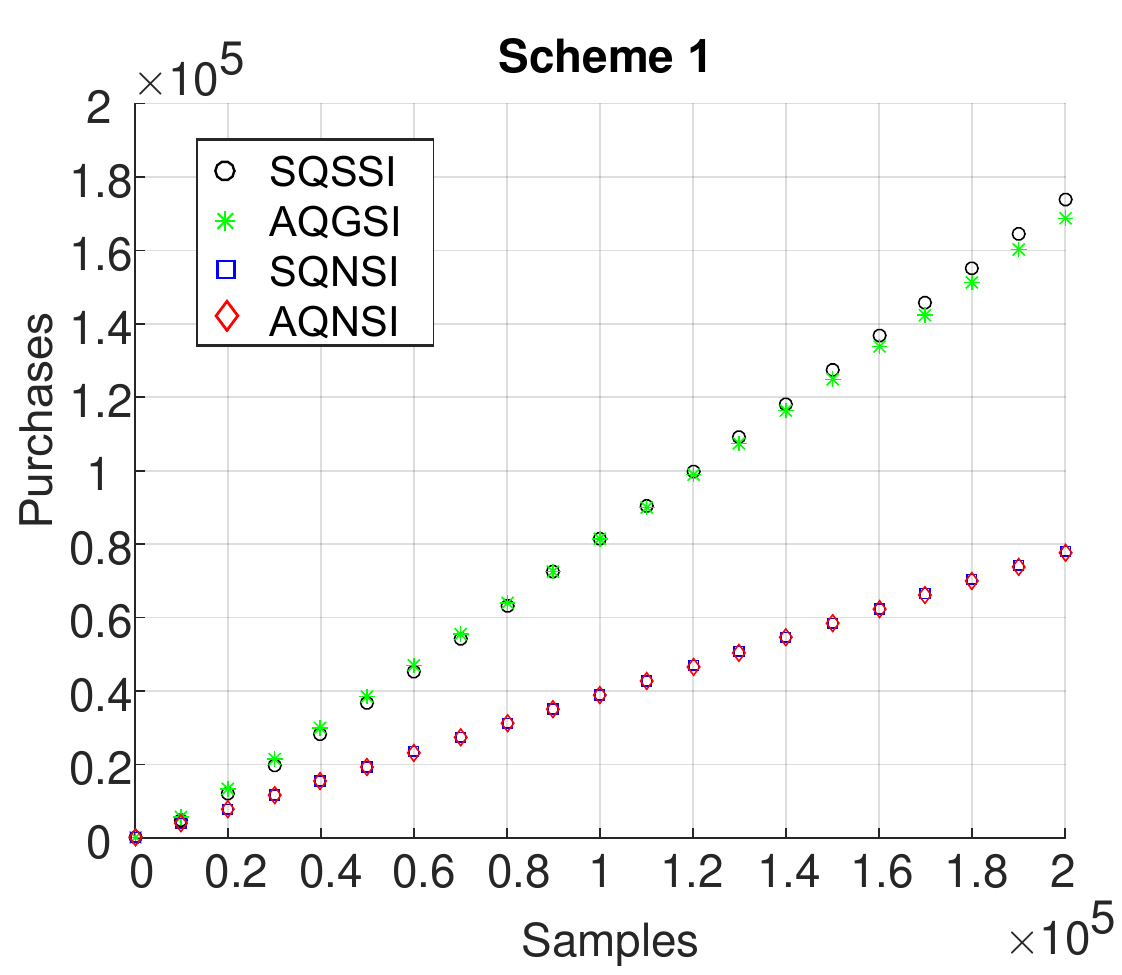}}
	\hspace{0.5cm}
	\subfigure{
		\includegraphics[trim = 10mm 57mm 10mm 75mm, clip, width=7cm]{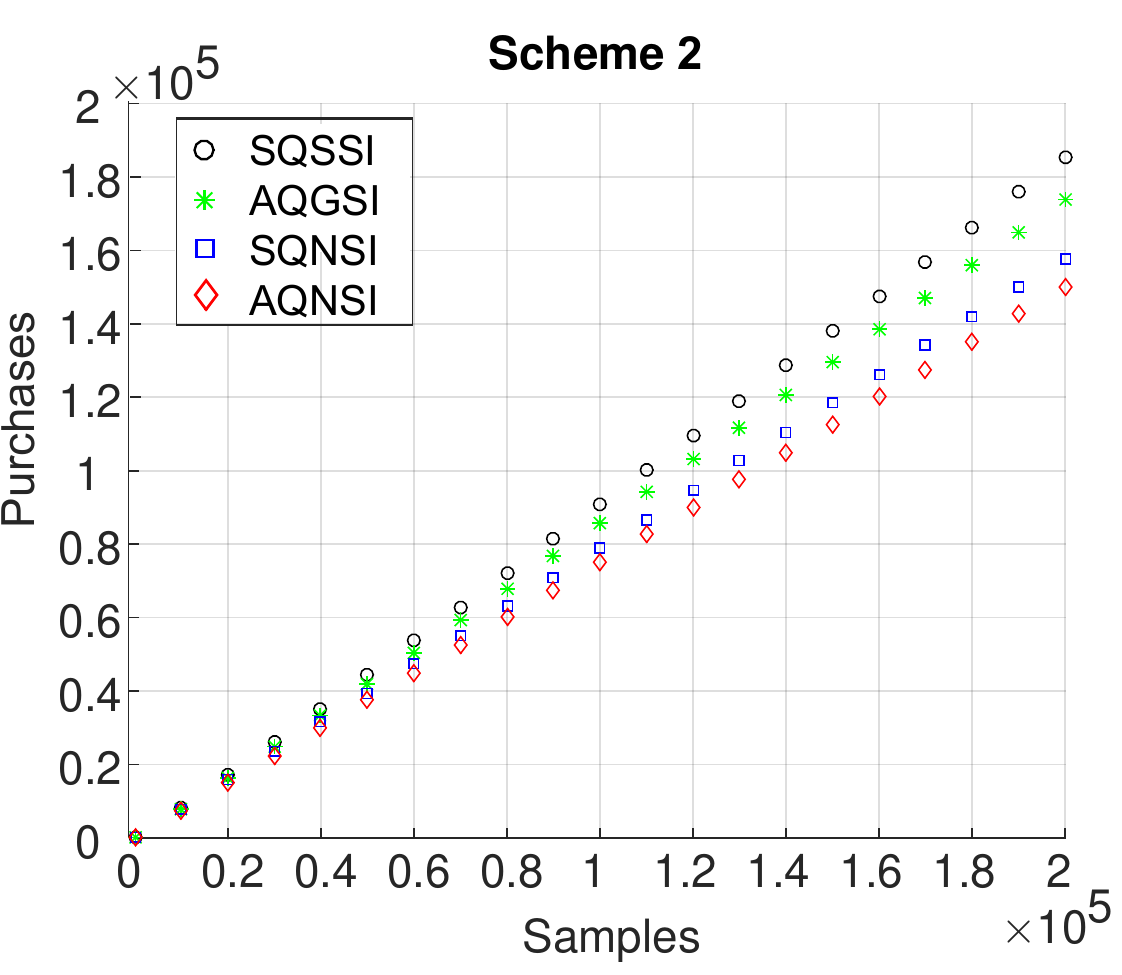}}
	\caption{The Number of Purchases over Time for the Various Rankings. The x-axis represents the number of items tried and the y-axis represents
		the average number of purchases over all experiments. The left figure depicts the results for Scheme 1 and the right figure for Scheme 2. }
	\label{fig:me12}
\end{figure}
\begin{figure}[t]
	\centering
	\includegraphics[trim = 8mm 57mm 8mm 75mm, clip, width=7cm]{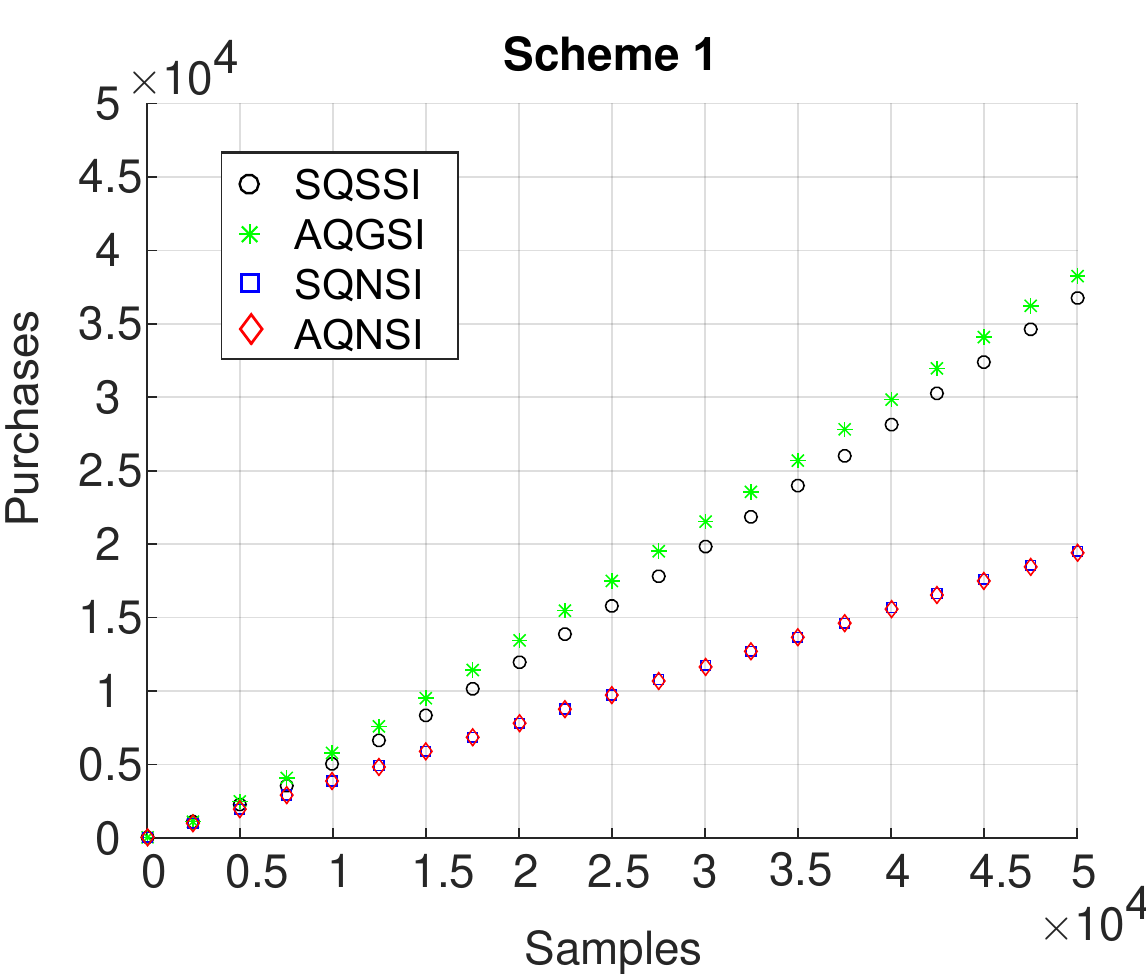}
	\caption{The number of purchases over time for the various rankings. The x-axis represents the number of items tried and the y-axis represents
		the average number of purchases over all experiments. The figure depicts the results for Scheme 1 in the early part of the simulation.}
	\label{fig:me12inset}
\end{figure}

Figure \ref{fig:me12} depicts the results for Schemes 1 and 2. For
Scheme 1, the popularity signal is beneficial for both the segmented
and average quality rankings. SQSSI is the most efficient ranking
policy. It is also interesting to observe that AQGSI outperforms
SQSSI early on before being overtaken as highlighted in Figure
\ref{fig:me12inset}. Scheme 2 exhibits similar results but the benefit of the popularity signal is lower.

\begin{figure}[t]
	\centering
	\subfigure{
		\includegraphics[trim = 10mm 57mm 10mm 75mm, clip, width=7cm]{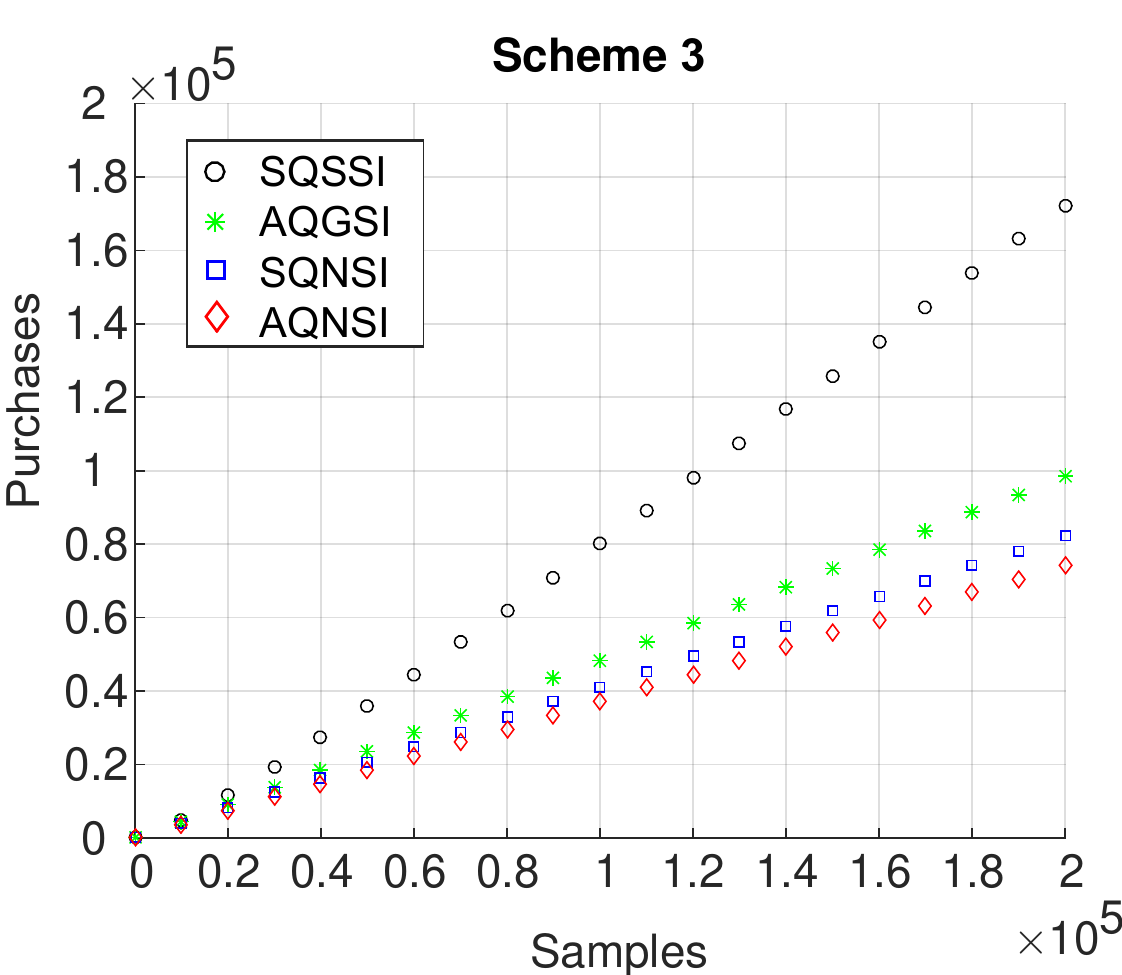}}
	\hspace{0.5cm}
	\subfigure{
		\includegraphics[trim = 10mm 57mm 10mm 75mm, clip, width=7cm]{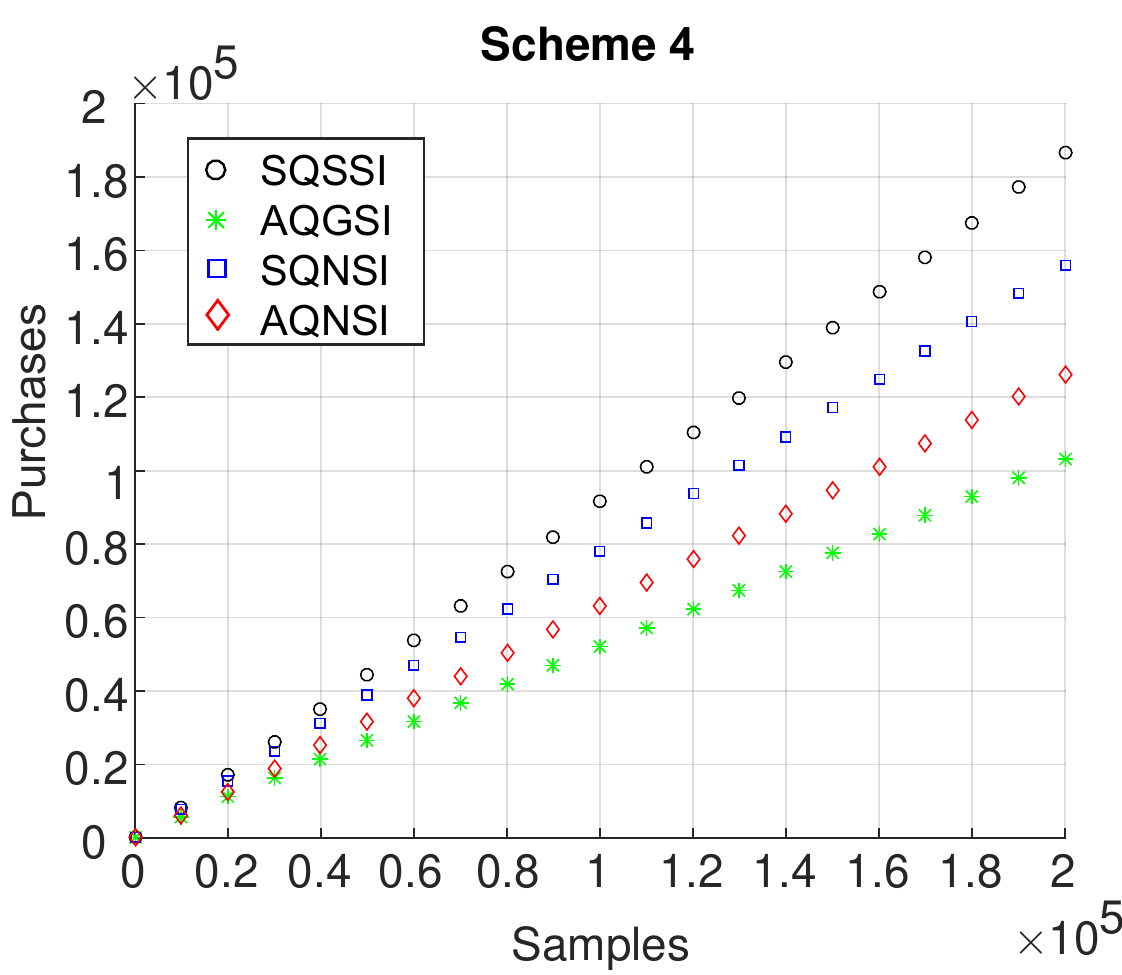}}
	\caption{The number of purchases over Time for the various rankings. The x-axis represents the number of items tried and the y-axis represents
		the average number of purchases over all experiments. The left figure depicts the results for Scheme 3 and the right figure for Scheme 4. }
	\label{fig:me34}
\end{figure}

Figure \ref{fig:me34} depicts the results for Schemes 3 and 4 and they
are particularly interesting. Recall that, in Schemes 3 and 4, the two
classes of customers have opposite preferences in terms of product
qualities. For Scheme 3, the popularity signal is again beneficial for the segmented and average
quality rankings. SQSSI is again the best ranking policy, and particularly is almost twice as efficient than AQGSI, nicely illustrating Theorem \ref{thm_bound}, since the
improvement is close to the best possible ratio. Once again, AQNSI
performs the worst. For Scheme 4, SQSSI is again the best ranking policy but
the second best policy is SQNSI, the segmented quality ranking with no
popularity signal. The worst policy is AQGSI, providing a compelling
illustration of Theorem \ref{bound-nsi}: The popularity signal may be
detrimental to the average quality ranking.

These results can be summarized as follows:
\begin{enumerate}
	\item SQSSI (segmentation with the popularity signal) is clearly the best
	policy and it dominates all other policies. Market segmentation with
	the popularity signal is very effective in these trial-offer markets.
	
	%	\item AQNSI (average quality with no
	%	popularity signal) is almost always
	%	the worst policy and is dominated by SQNSI segmentation with no popularity signal). In these trial-offer settings, segmentation with
	%	no popularity signal is not an effective policy.
	
	\item The global popularity signal may be beneficial or detrimental to the
	average quality ranking. It is detrimental when the market has
	customers with very different product preferences.
\end{enumerate}

\subsection{Purchase Profiles}

\begin{figure}[!th]
	\centering
	\subfigure{
		\includegraphics[trim = 5mm 58mm 15mm 68mm, clip, width=7cm]{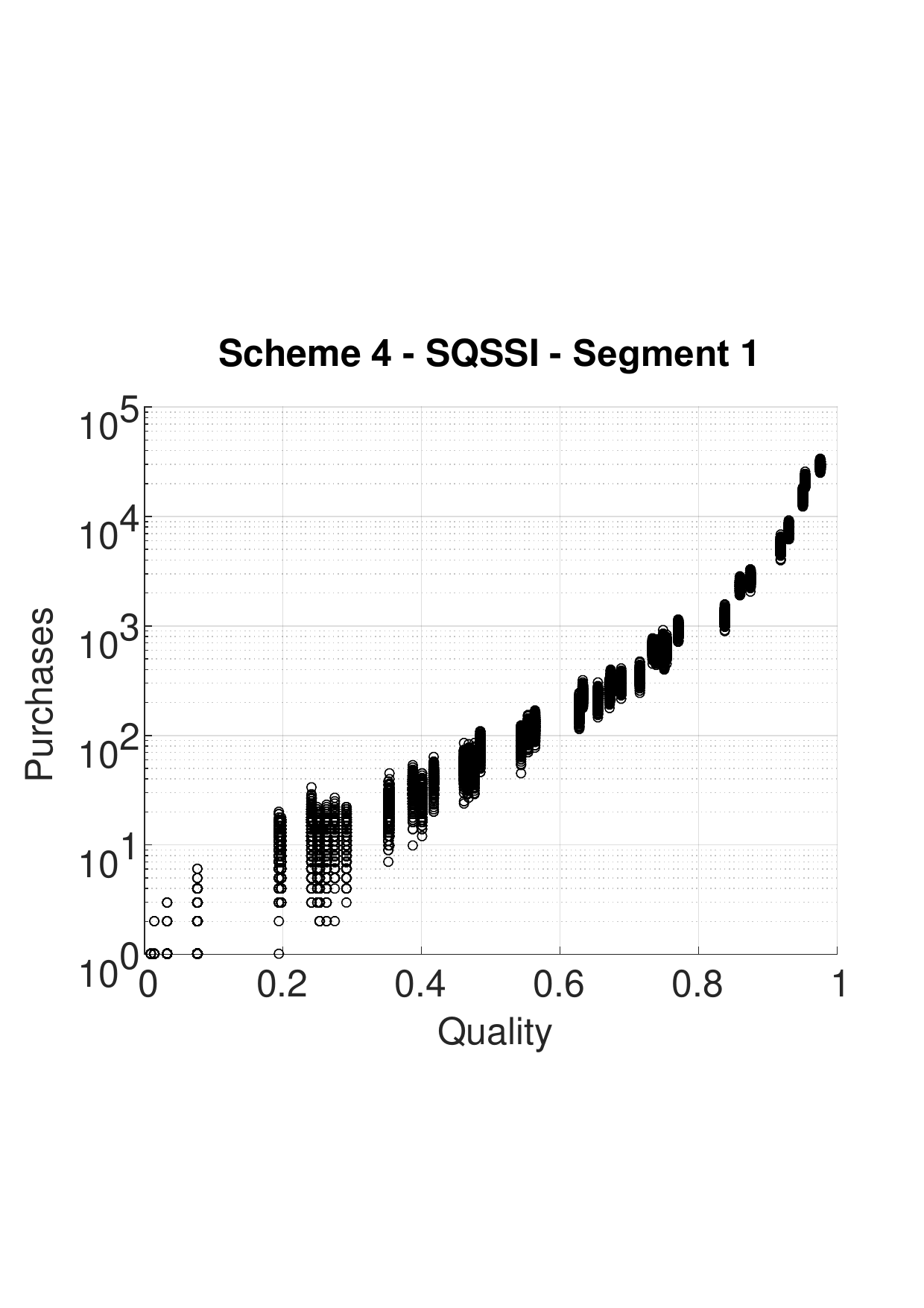}}
	\hspace{0.5cm}
	\subfigure{
		\includegraphics[trim = 5mm 58mm 15mm 68mm, clip, width=7cm]{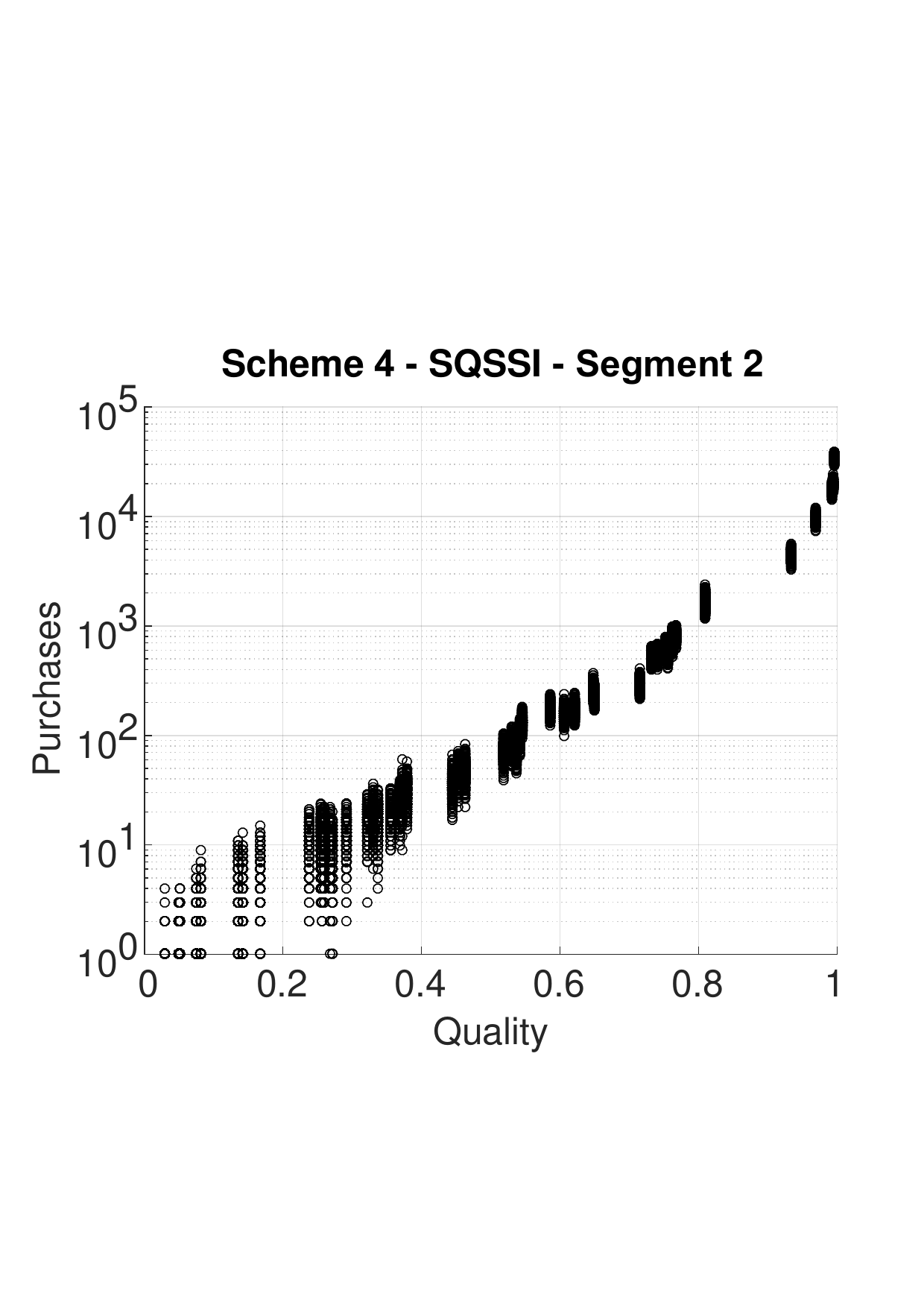}}
	\caption{The purchase profiles of SQSSI on Scheme 4 for consumer segments 1 (left) and 2 (right).}
	\label{fig:SQSSI3-12unpred}
\end{figure}

\begin{figure}[!th]
	\centering
	\subfigure{
		\includegraphics[trim = 5mm 59mm 15mm 68mm, clip, width=7cm]{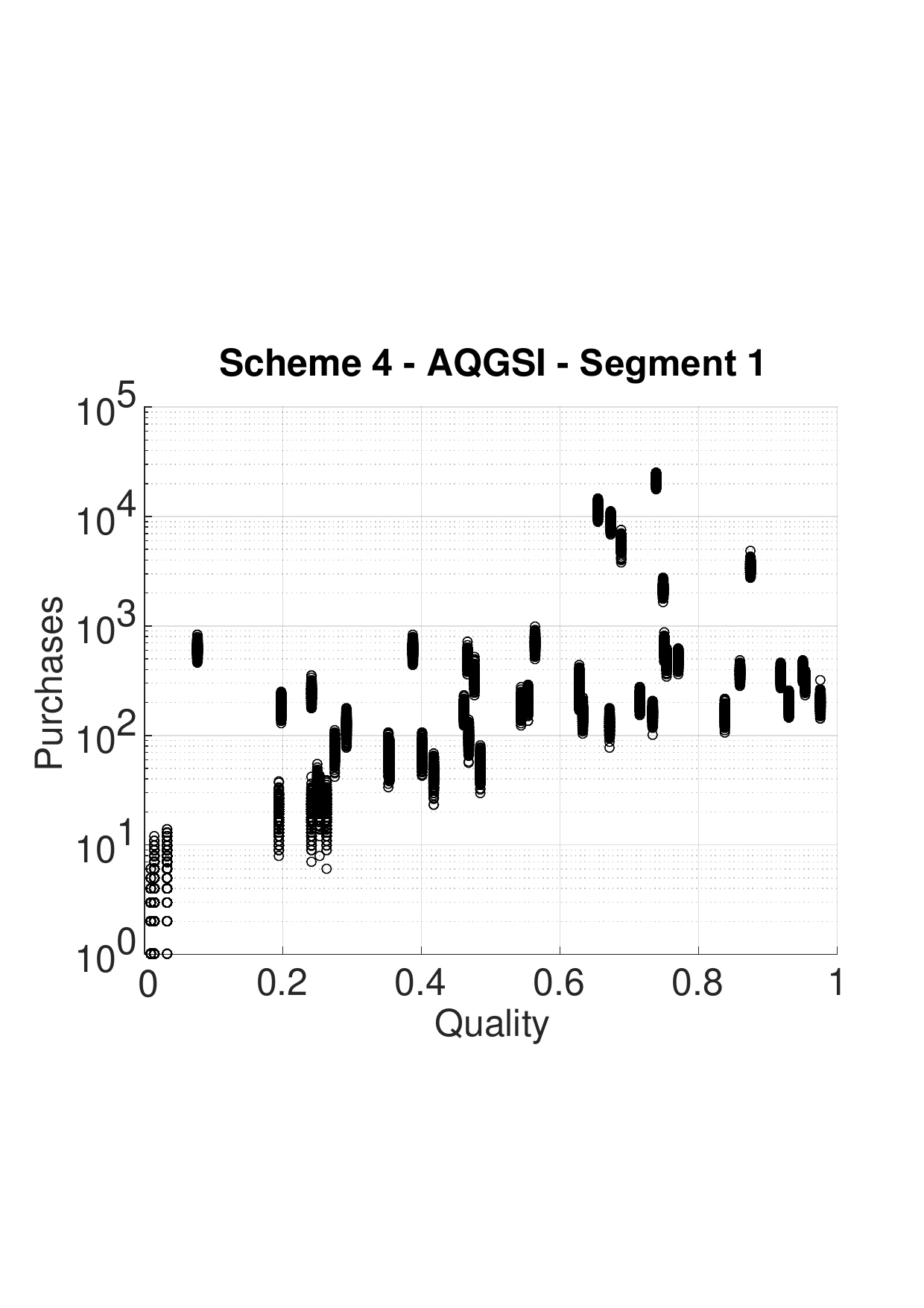}}
	\hspace{0.5cm}
	\subfigure{
		\includegraphics[trim = 5mm 59mm 15mm 68mm, clip, width=7cm]{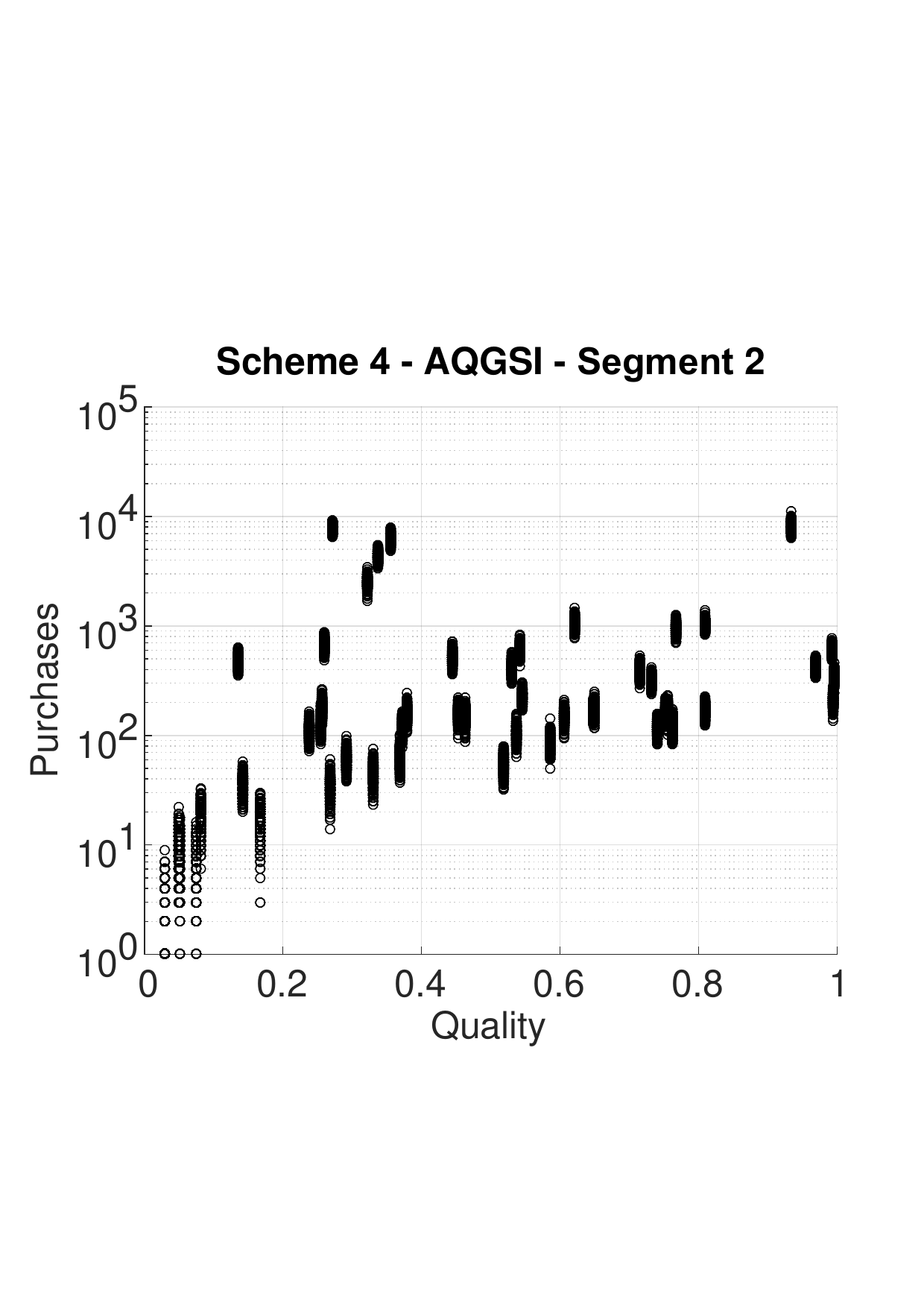}}
	\caption{The purchase profiles of AQGSI on Scheme 4 for consumer segments 1 (left) and 2 (right).}
	\label{fig:AQGSI3-12unpred}
\end{figure}

\begin{figure}[!th]
	\centering
	\subfigure{
		\includegraphics[trim = 10mm 2.2in 9mm 3in, clip, width=7cm]{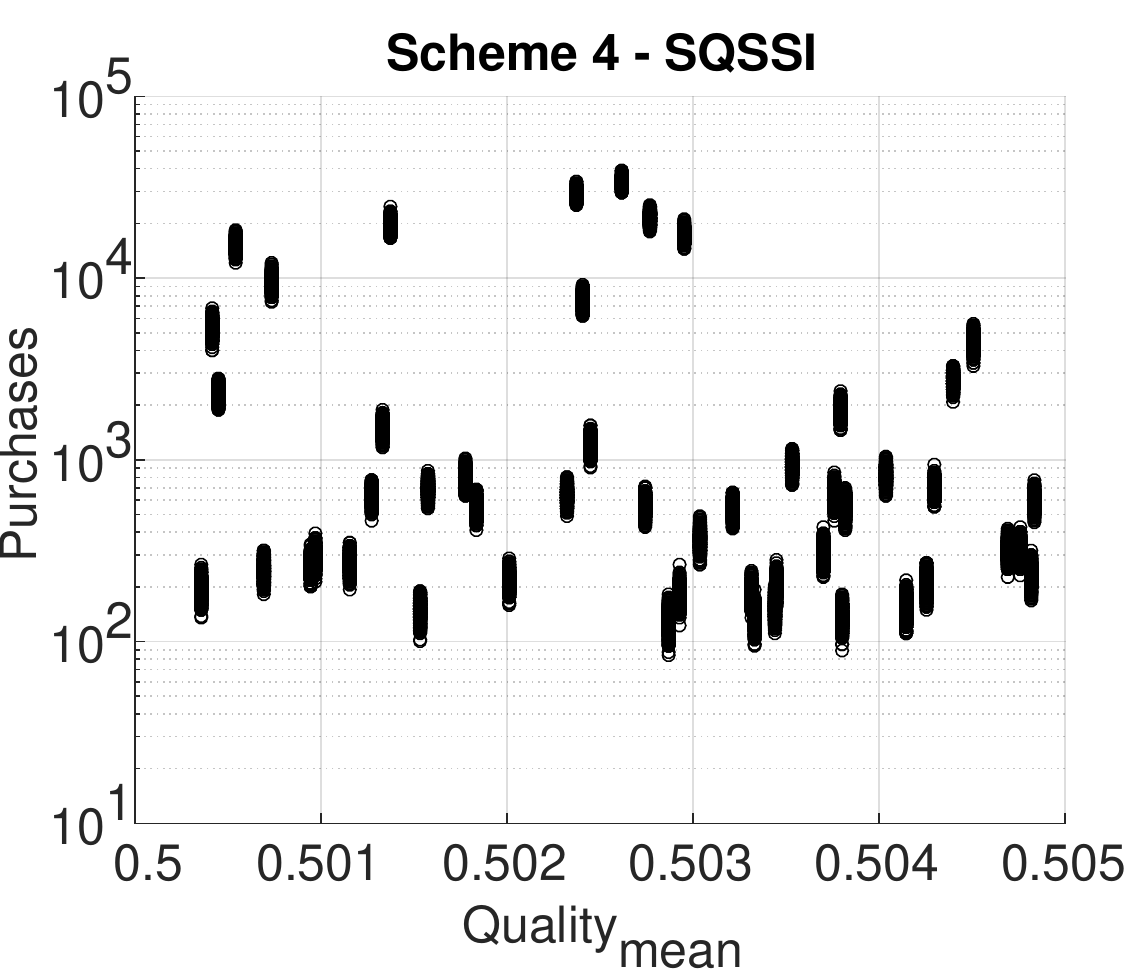}}
	\hspace{0.5cm}
	\subfigure{
		\includegraphics[trim = 10mm 2.2in 9mm 3in, clip, width=7cm]{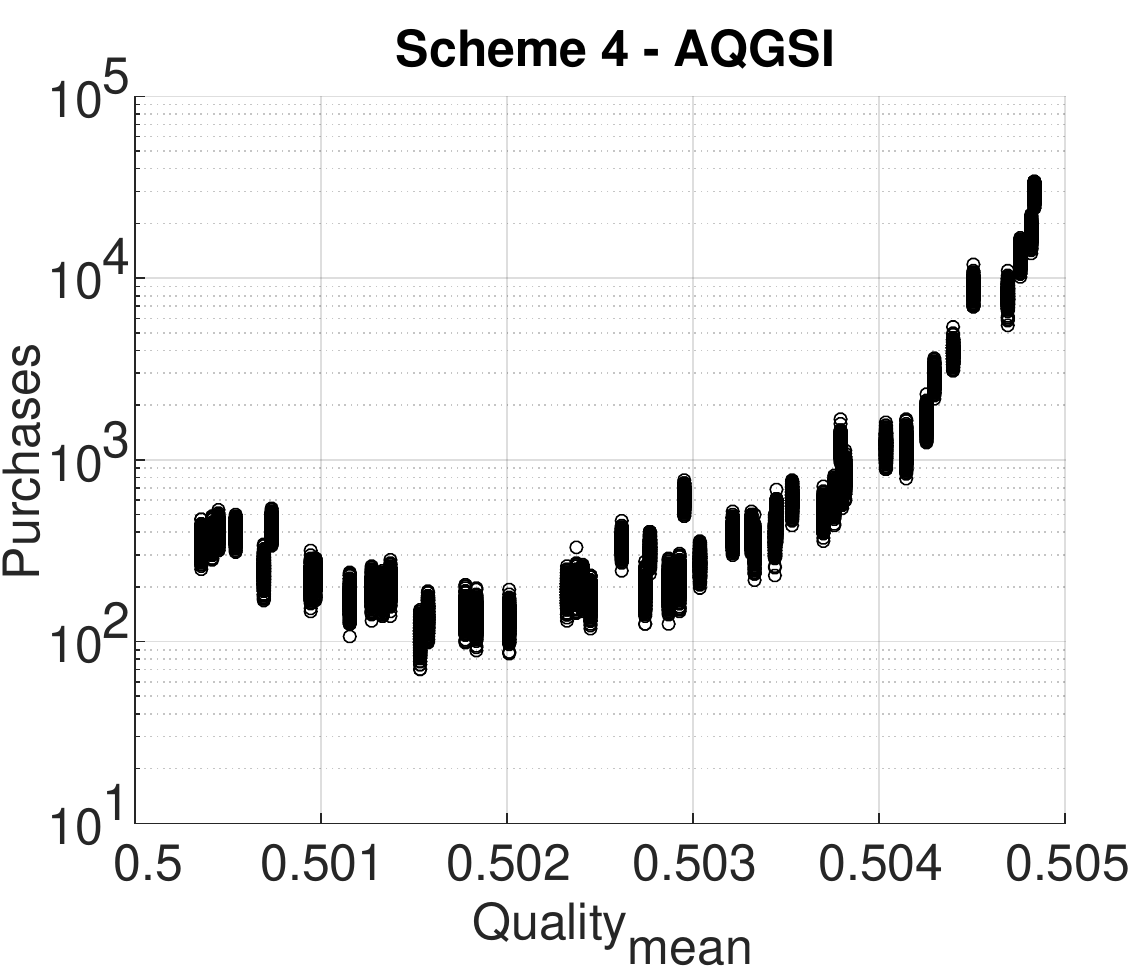}}
	\caption{The purchase profiles of SQSSI and AQGSI on Scheme 4 for both segments of customers.}
	\label{fig:unpred3}
\end{figure}

%\begin{figure}[!th]
%\centering
%\subfigure{
%\includegraphics[trim = 10mm 50mm 15mm 60mm, clip, width=7cm]{unpredSQ4-1.pdf}}
%\hspace{0.5cm}
%\subfigure{
%\includegraphics[trim = 10mm 50mm 15mm 60mm, clip, width=7cm]{unpredSQ4-2.pdf}}
%\caption{The Purchase Profiles of SQSSI on Scheme 4 for Class 1 (left) and Class 2 (right).}
%\label{fig:SQSSI4-12unpred}
%\end{figure}
%
%\begin{figure}[!th]
%\centering
%\subfigure{
%\includegraphics[trim = 10mm 50mm 15mm 60mm, clip, width=7cm]{unpredAQ4-1.pdf}}
%\hspace{0.5cm}
%\subfigure{
%\includegraphics[trim = 10mm 50mm 15mm 60mm, clip, width=7cm]{unpredAQ4-2.pdf}}
%\caption{The Purchase Profiles of AQGSI on Scheme 4 for Class 1 (left) and Class 2 (right).}
%\label{fig:AQGSI4-12unpred}
%\end{figure}
%
%\begin{figure}[!th]
%\centering
%\subfigure{
%\includegraphics[trim = 10mm 50mm 15mm 60mm, clip, width=7cm]{unpredSQ4.pdf}}
%\hspace{0.5cm}
%\subfigure{
%\includegraphics[trim = 10mm 50mm 15mm 60mm, clip, width=7cm]{unpredAQ4.pdf}}
%\caption{The Purchase Profiles of SQSSI and AQGSI on Scheme 4 for both Classes of Customers.}
%\label{fig:unpred4}
%\end{figure}

We now illustrate the customer and market behaviors for the SQSSI and
AQGSI rankings, which exhibit some significant differences. For Scheme
4, the results are presented in Figures \ref{fig:SQSSI3-12unpred},
\ref{fig:AQGSI3-12unpred}, and \ref{fig:unpred3}. Figure
\ref{fig:SQSSI3-12unpred} depicts separately the purchase profiles of customers
of segments 1 and 2 for policy SQSSI. The products are sorted by
increasing quality for each segment: i.e., the products of highest
quality for customers of segment 1 (resp. segment 2) is in the rightmost
position in the left (resp. right) picture. Since the market is
segmented, the results are not surprising and consistent with past
results: The number of purchases is strongly correlated with
quality. Figure \ref{fig:AQGSI3-12unpred} is more interesting and
depicts the same information for policy AQGSI. Here the number of
purchases is no longer correlated with quality for a specific customer
segment. Figure \ref{fig:unpred3} compares SQSSI and AQGSI over all
customers and the products are sorted by average quality. The figure
highlights a fundamental difference in market behavior between the two
policies, with very different products emerging as the ``best
sellers''.

%Figures \ref{fig:SQSSI4-12unpred},
%\ref{fig:AQGSI4-12unpred}, and \ref{fig:unpred4} presents similar
%results for scheme 4.

\begin{figure}[!th]
	\centering
	\subfigure{
		\includegraphics[trim = 6mm 2.2in 10mm 3in, clip, width=7cm]{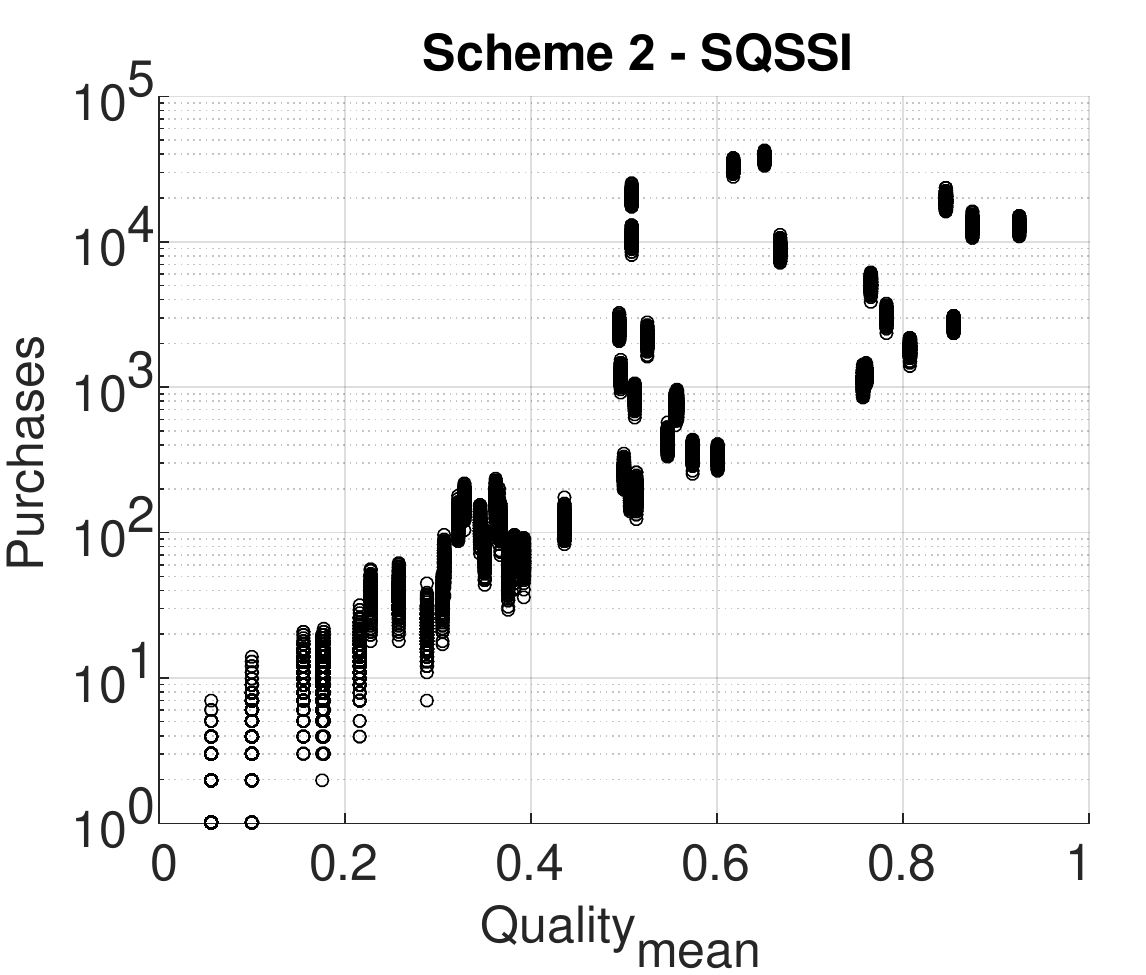}}
	\hspace{0.5cm}
	\subfigure{
		\includegraphics[trim = 6mm 2.2in 10mm 3in, clip, width=7cm]{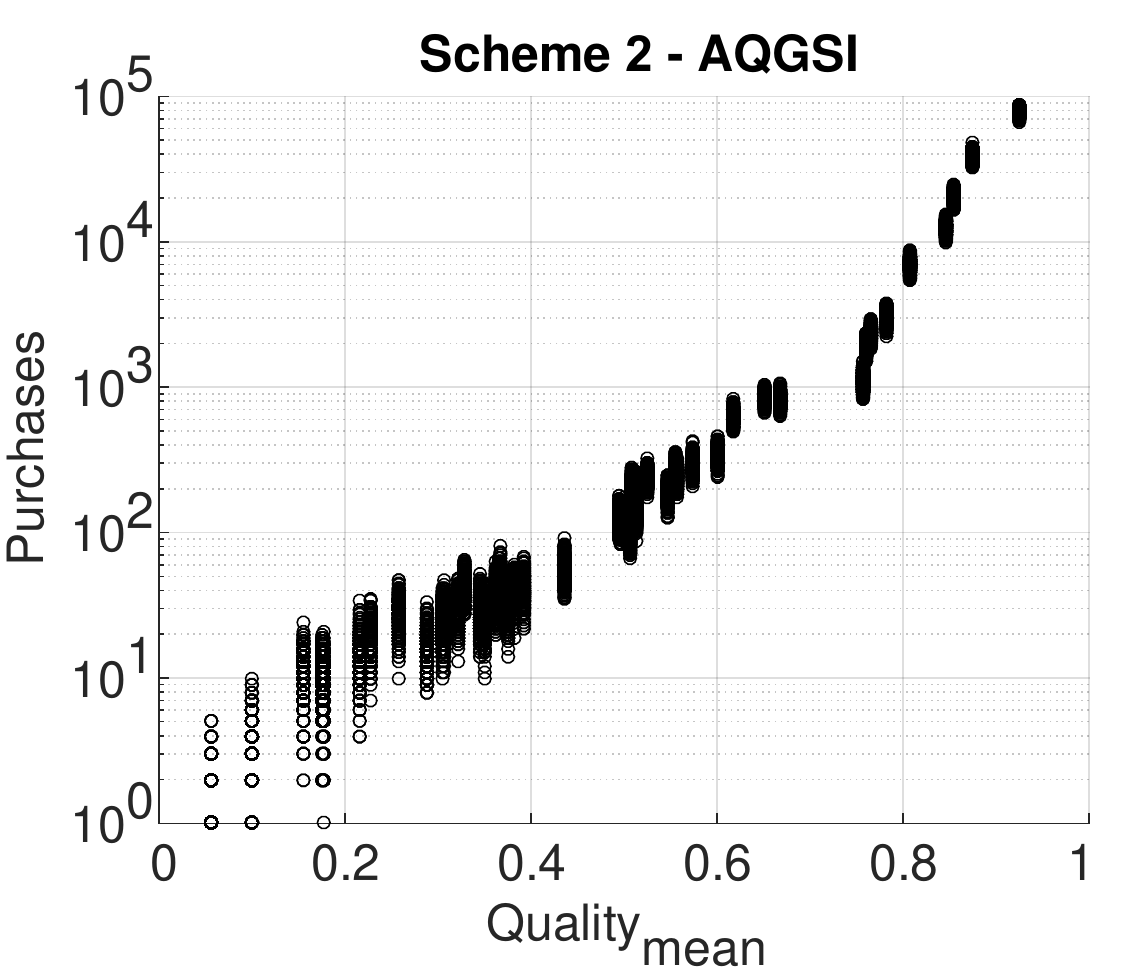}}
	\caption{The purchase profiles of SQSSI and AQGSI on Scheme 2 for both Classes of Customers.}
	\label{fig:unpred1}
\end{figure}

\begin{figure}[!th]
	\centering
	\subfigure{
		\includegraphics[trim = 6mm 2.2in 15mm 3in, clip, width=7cm]{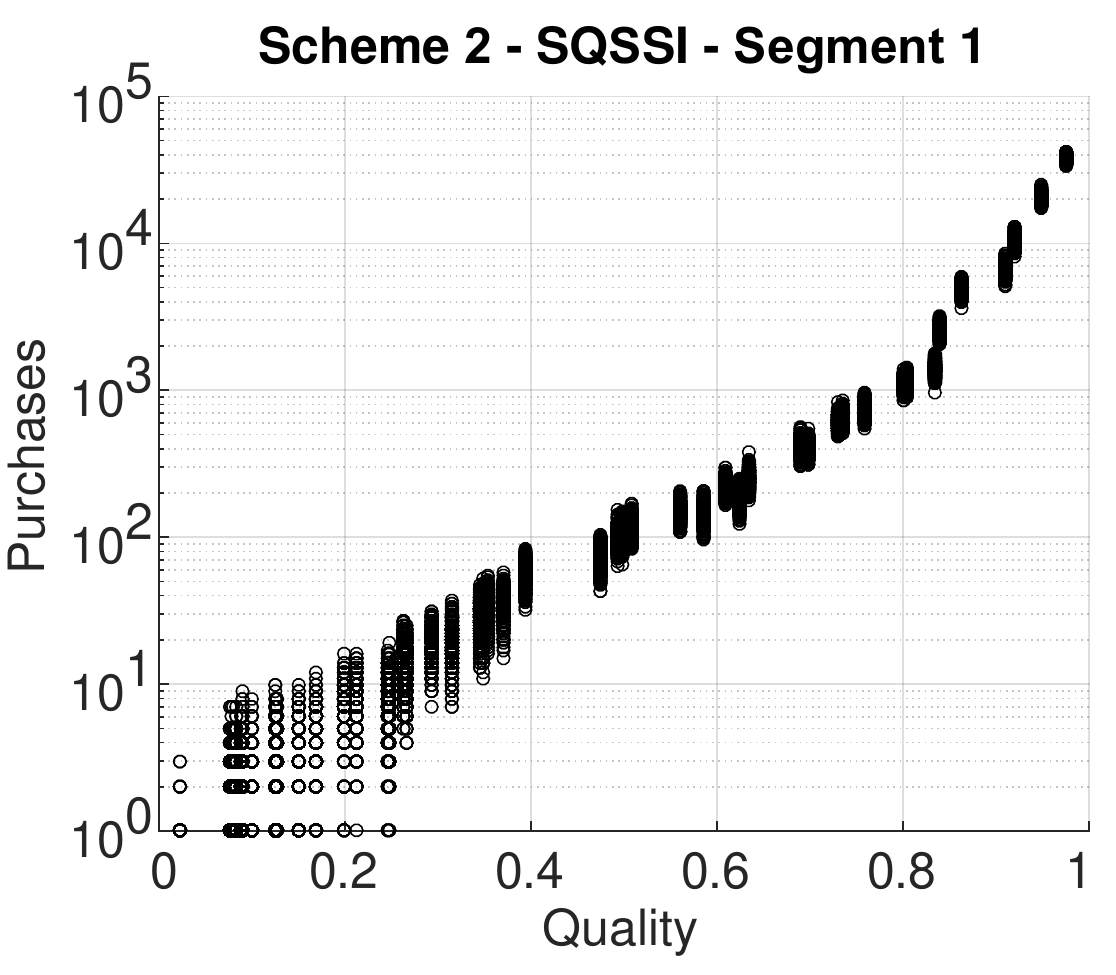}}
	\hspace{0.5cm}
	\subfigure{
		\includegraphics[trim = 6mm 2.2in 15mm 3in, clip, width=7cm]{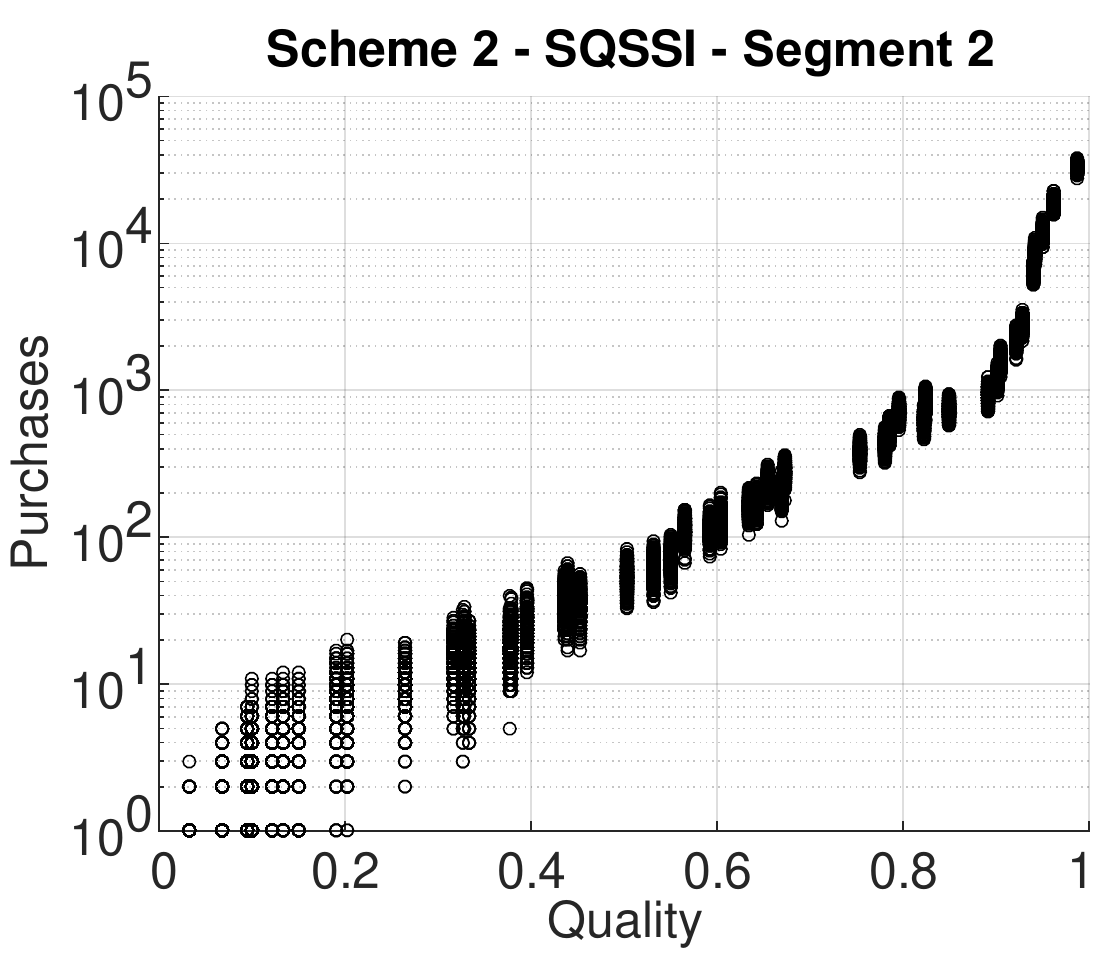}}
	\caption{The purchase profiles of SQSSI on Scheme 2 for consumer segments 1 (left) and 2 (right).}
	\label{fig:SQSSI1-12unpred}
\end{figure}

Schemes 3 and 4 feature customer classes with opposite preferences. It
is thus interesting to report the results on Scheme 2 where the
product qualities were generated independently for the two
classes. Figure \ref{fig:unpred1} depicts these results. We already
know from Figure \ref{fig:me12} that policy SQSSI outperforms AQGSI
but it is interesting to see how different the market behaves under
these two policies. For AQGSI, as expected, the products of best
average quality receives the most purchases: Asymptotically the market
goes to a monopoly for that product. For SQSSI, the purchases at this
stage of the market are distributed through a larger number of
products, each of which have fewer purchases. Asymptotically, the
market will go to a monopoly for two products (one for segment 1 and one
for segment 2) but the popularity signal is weaker for SQSSI since it is
spread across the two classes. It is interesting to observe that the
segmentation policy SQSSI is still more efficient than policy AQGSI
despite this weaker popularity signal. Figure
\ref{fig:SQSSI1-12unpred} depicts the profiles for policy SQSSI and
nicely highlights that many products are receiving significant
purchases.

{\color{black} We finish this section with a set of managerial insights that can be observed from the theoretical results and the computational experiments. First, whenever consumer segment information is available and the goal is to maximize long term purchases, it is optimal to segment the consumers and to rank the products within each segment according to their quality, while using a social signal that is only shown across consumers within the same segment (Theorems \ref{bound-nsi} and \ref{AQNSIvsSQSSI_bound}). The first theorem shows that a segmented quality ranking with social influence within each segment (SQSSI) always outperforms the unique average quality ranking with the same social signal across all consumer segments (AQGSI). The second theorem shows that using social influence with a segmented quality ranking (SQSSI) always outperforms doing a segmented quality ranking without social influence (SQNSI). In the computational experiments we observe similar results. However, the platform owner may be better off in the short term using a unique average quality ranking with a global social signal (AQGSI) (Scheme 1, Figure \ref{fig:me12inset}). This is because the aggregate ranking AQGSI enhances the social signal instead of distributing it across different segments as seen with the segmented ranking SQSSI. But in the long run the segmented ranking is optimal whereas the aggregate ranking may lead to a monopoly of suboptimal products. Finally, when consumer segment information is not available, it is not always beneficial to show social influence. This is proved in Theorem \ref{bound-nsi} and nicely illustrated with the experiments, where the average quality ranking with a social signal (AQGSI) outperforms its counter-part without a social signal (AQNSI) in all schemes except for Scheme 4 (when consumer segments exhibit opposite preferences, and product appeals are positively correlated with qualities). On aggregate rankings, the platform owner should use social influence carefully, and mainly when consumer segments have similar product preferences, as otherwise social influence may confuse consumers on which products to try and reduce the rate of purchases.}
\section{Conclusions and Future Research}
\label{section:conclusion}

In this paper, we studied a trial-offer market where consumers choices
about which products to try are affected by the display of past product
purchases and by product positions. Specifically, we focused on
studying the case in which consumer choices follow a mixed multinomial
logit model (MMNL), which generalizes the multinomial logit model
proposed by \citet{krumme2012quantifying} to explain the behavior in
an online cultural market \citep{salganik2006experimental}. Unlike the
case for the Multinomial Logit, we showed that finding the best way to
rank products at every step in order to maximize the purchases is a
computationally hard problem.

The paper then studied the performance of a ranking policy (AQGSI)
which ranks the products by the average quality (in decreasing
order). Under such policy, we proved that the trial-offer market in
the long run converges predictably to a monopoly by transforming the
MMNL model into a traditional MNL model whose appeals and qualities
depends on the popularity signal at each time step but are bounded
from below and above. Unfortunately, this average quality ranking
policy is no longer guaranteed to benefit from the popularity signal
in all cases. In other words, the rate of product purchases can
sometimes be reduced as time passes due to the fact that consumers
are able to observe (and their decisions are affected by) the number
past purchases for each product. This is in sharp contrast to the case
where there is a unique consumer segment ($K=1$)
\citep{van2016aligning}.

The paper also studied a market segmentation policy for settings in
which the firm is capable of detecting the consumer segment in
advance. Under the segmentation policy, the products are presented to
each consumer according to the quality ranking for their own class. In
addition, the popularity signal displayed only aggregates the past
purchases associated to the customers of the same class. The resulting
policy (SQSSI) is optimal asymptotically in expectation and may
improve the market efficiency up to a factor $K$ over AQGSI, where $K$ is
the number of customer classes.

Computational experiments have been presented to illustrate our
theoretical results and to show the market dynamics in the short
term. These experiments, which were carried over four different
settings, showed that there are settings in which the popularity
signal is indeed detrimental to policy AQGSI and that the segmentation
strategy produces the best possible improvement predicted by the theory.

Overall, the paper shows that, in trial offer markets, the decision to
display or not the popularity signal to consumers has to be analyzed
very carefully. In markets where consumers have very different product
preferences, we showed that the display of past purchases can be
detrimental to the rate of purchases. Intuitively, this is because
consumers watching past purchases become confused about which products
to try. On the other hand, in markets where the preferences of the
different consumer classes are not very distinct, the display of
aggregated past purchases is beneficial. Nevertheless, our theoretical
(asymptotic) results and our short-term simulation results indicate
that the segmentation policy SQSSI outperforms all other policies.
Moreover, these results highlight the fact that AQGSI and SQSSI
produce very different market behavior, even in settings where the
overall market efficiency is relatively close.

The paper leaves some interesting questions for future research. The first is to study the
market-share dynamics of the different ranking policies for more
complex consumer choice models. One weakness of the current model is
that the consumer choice model only depends on the displayed vector of
past purchases and the current ranking (as well as the appeal of the
products). However, a more sophisticated choice model could
incorporate into the consumer choice, the type of past purchase
information displayed: A consumer who observes a past purchase vector
$d$ might change the behavior depending on whether she/he knows that
the vector $d$ comes from all consumer purchases or if only comes from
consumers of it own type or class. The second one is to extend the model to scenarios when the trial probabilities depend non-linearly on the past purchases, this is studied in \cite{maldonado2018popularity} for the special case of a unique consumer segment ($K=1$). Another interesting research direction is to study the firm potential incentives to hide or mis-report some reviews and how it this affects the outcomes in terms of market share dynamics and consumer welfare.

\subsubsection*{Acknowledgements}
We are very grateful to Robert Dyson (co-editor) and to three anonymous referees for their thoughtful comments and relevant suggestions that helped us improve the paper.

\bibliographystyle{apacite}
\bibliography{references}

\subsection*{Appendix: Proofs}
\begin{proof}[Proof of Theorem \ref{theorem_hardness}]
	The proof uses the 2-Class Logit problem which is known to be
	NP-hard \citep{rusmevichientong2014assortment}. The inputs to a
	2-Class Logit instance are $N$ products, two sequences $V^1 =
	(V^1_1, V^1_2,\hdots,V^1_N)$ and $V^2 = (V^2_1, V^2_2,\hdots,V^2_N)$
	with $V^1, V^2 \in \mathbb{Q}_{+}^N$, and a number $\alpha \in
	\mathbb{R}_{[0,1]}$. Each product $i$ has a revenue $r_i \in
	\mathbb{Z}_{+}$. Each sequence $V^i$ represents a realization of the
	product utilities under a multinomial logit model. Sequence $V^1$
	(resp. $V^2$) has a realization probability of $\alpha$
	(resp. $1-\alpha$). The problem consists in finding a product
	assortment $S \subseteq [N]$ maximizing the expected revenue
	$\Pi^{Logit}$, i.e.,
	\[
	\Pi^{Logit} = \max _{S \subseteq [N]} \alpha_1 \frac{\sum_{i \in S}r_iV^1_i}{1 + \sum_{i \in S} V^1_i} + (1-\alpha) \frac{\sum_{i \in S}r_iV^2_i}{1 + \sum_{i \in S} V^2_i}.
	\]
	The proof shows that, if there exists an oracle to compute the
	performance ranking for the MMNL with two classes of consumers (i.e.,
	Equation (\ref{performance_ranking-MMNL}) with $K=2$), then the 2-Class
	logit problem can be solved in polynomial time.
	
	Given an instance of the 2-Class Logit problem, the idea is to create
	$N$ different instances of the performance-ranking problem in order to
	capture the various possible assortments. The $N$ instances have a
	common core. Each of them has the same $N$ items and two classes of
	consumers (i.e., $K=2$). For each consumer segment $j \in \{0,1\}$ and
	each item $i \in [N]$, we set the appeal of item $i$ for segment $j$ to
	satisfy $a_{i,j}=V^j_i$. Similarly, for each consumer segment $j \in
	\{0,1\}$ and each item $i \in [N]$, we set the quality of $i$ for
	segment $j$ to satisfy $q_{i,j}=r_i$. Note that the quality of item $i$
	is the same for both classes.  The weights of classes $0$ and $1$ are
	$\alpha$ and $1-\alpha$ respectively. We also set $z=1$ and $t=0$
	which implies that $d^t=0$. The $N$ instances differ in the position
	visibilities. In instance $i$ ($i \in [N]$), the visibility of
	position $j \in [N]$ is:
	\[
	v_j = \left\{
	\begin{array}{ll}
	1 & \textrm{ if } j \leq i  \\
	0 & \textrm{ otherwise}.
	\end{array}\right.
	\]
	\noindent
	Let $\Pi^{PR}_i$ denote the short-term optimal value of the performance ranking
	for problem instance $i$ and let $\mathcal{S}_i$ denote the collection
	of all possible subsets of products whose size is $i$, i.e.,
	$\mathcal{S}_i= \{S \subseteq [N]: |S|=i\}$. Define
	$\Pi^{Logit}_i$ as the following optimization problem:
	\[
	\Pi^{Logit}_i = \max _{S \in \mathcal{S}_i} \alpha_1 \frac{\sum_{i \in S}r_iV^1_i}{1 + \sum_{i \in S} V^1_i} + (1-\alpha) \frac{\sum_{i \in S}r_iV^2_i}{1 + \sum_{i \in S} V^2_i}.
	\]
	It follows that
	\begin{align}
	\Pi^{Logit} = \max_{i=1,\hdots,N} \Pi^{Logit}_i. \label{equality_logit_2_class}
	\end{align}		
	\noindent
	We now show that $\Pi^{PR}_i$ is equal to $\Pi^{Logit}_i$.
	\begin{align} \label{reduction_1}
	\Pi^{PR}_i &= \max_{\sigma \in S_n} \bigg\{\sum_{c=1}^2 \Big( w_c \cdot \sum_{\ell=1}^N (p_{i}(\sigma,0) \cdot q_{\ell,c}) \Big) \bigg\}, \\
	&= \max_{\sigma \in S_n} \bigg\{ \sum_{c=1}^2 \Big( w_c  \cdot \sum_{\ell=1}^N \Big(\frac{v_{\sigma(\ell)} (a_{\ell,k})}{\sum_{j=1}^N v_{\sigma(j)} (a_{j,c})+1}  \cdot q_{\ell,k} \Big) \Big) \bigg\}.\\
	=& \max_{\sigma \in S_n} \bigg\{ \alpha  \cdot \sum_{\ell=1}^N \Big(\frac{ v_{\sigma(\ell)} V^1_{\ell} r_{\ell}} {\sum_{j=1}^N v_{\sigma(j)} V^1_j+1} \Big)  +  (1-\alpha)  \cdot \sum_{\ell=1}^N \Big(\frac{ v_{\sigma(\ell)} V^2_{\ell}r_{\ell}}{\sum_{j=1}^N v_{\sigma(j)} V^2_j+1}  \Big)  \bigg\} \label{last_step_reduction_1}\\
	=& \max_{S \in \mathcal{S}_i} \bigg\{  \alpha  \cdot \sum_{\ell \in S} \Big(\frac{ V^1_{\ell} r_{\ell}}{\sum_{j \in S} V^1_j+1} \Big)  +  (1-\alpha)  \cdot \sum_{\ell \in S} \Big(\frac{ V^2_{\ell}r_{\ell}}{\sum_{j \in S} V^2_j+1}  \Big)  \bigg\} \label{last_step_reduction_2} \\
	= & \Pi^{Logit}_i
	\end{align}		
	\noindent where the equivalence between \eqref{last_step_reduction_1} and
	\eqref{last_step_reduction_2} follows from the fact that the first $i$
	positions have visibility of 1 and the remaining ones have a
	visibility of 0 and therefore selecting a permutation $\sigma \in S_n$
	reduces to deciding which $i$ items should be assigned the top $i$
	positions. As a consequence, using \eqref{equality_logit_2_class}, we have
	\begin{align}
	\Pi^{Logit} = \max_{i=1,\hdots,N} \Pi^{Logit}_i = \max_{i=1,\hdots,N} \Pi^{PR}_i.
	\end{align}
	We have shown that, by using an oracle to solve $N$ instances of the
	performance-ranking problem, it is possible to solve the original
	2-class logit problem instance in polynomial time. Hence, the
	performance ranking is NP-hard under Turing reductions.
	
\end{proof}

\begin{proof}[Proof of Lemma \ref{qsuc}]
	The market share of item $i^*$ at any period of time $t>\hat{t}$ for
	this system would be underestimated by considering the following set
	of qualities and appeals:
	\begin{equation*}
	q_{i,new}=\left\{
	\begin{array}{l l}
	q_{i,min} & \quad \text{if $i=i^*$}\\
	q_{i,max} & \quad \text{if $i\neq i^*$}
	\end{array} \right.
	\text{and}\text{  }
	a_{i,new}=\left\{
	\begin{array}{l l}
	a_{i,min} & \quad \text{if $i=i^*$}\\
	a_{i,max} & \quad \text{if $i\neq i^*$}
	\end{array} \right..
	\end{equation*}
	If this new set of qualities satisfies that
	$v_{\sigma(i^*)}q_{i^*,new} > v_{\sigma(i)}q_{i,new}$ for all $i\in
	[N] \setminus \{i^*\}$, it follows from the convergence result in
	\citep{van2016aligning} (Theorem 4.3) that the system goes
	to a monopoly for item $i^*$. Therefore, the original system also goes
	to a monopoly for item $i^*$.
\end{proof}

\begin{proof}[Proof of Theorem \ref{monopoly_alltogether}]
	The proof first shows that the MMNL model can be reduced to a
	Multinomial Logit Model whose item appeals and qualities are
	functions of the vector of purchases at each time $t$. It then shows
	that these functions stay in the bounded range, so that it is
	possible to apply Lemma \ref{qsuc}.
	
	When the same ranking $\sigma$ and popularity signals are shown to
	all consumers, the probability that item $i$ is purchased in time
	period $t$ is given by
	\begin{align}\label{eq:thm2_main}
	P_i(\sigma,d^t) &= \sum_{k=1}^K \Big( w_k \cdot
	\Big(v_{\sigma(i)}\frac{ (a_{i,k} + d_i^t)}{\sum_{j=1}^N
		v_{\sigma(j)} (a_{j,k} + d_j^t)+z_k} \cdot q_{i,k} \Big) \Big).
	\end{align}
	
	\noindent
	By rearranging the previous expression, it comes
	\begin{align*}
	P_i(\sigma,d^t) & =  \sum_{k=1}^K \frac{w_k q_{i,k} v_{\sigma(i)}}{\sum_{j=1}^Nv_{\sigma(j)}( a_{j,k}+d_j^t)+z_k} a_{i,k}+\sum_{k=1}^K \frac{w_k\ q_{i,k} v_{\sigma(i)}}{\sum_{j=1}^N v_{\sigma(j)}( a_{j,k}+d_j^t)+z_k}d_i^t \\
	& =   \sum_{k=1}^K \frac{w_k q_{i,k} v_{\sigma(i)}}{\sum_{j=1}^Nv_{\sigma(j)}( a_{j,k}+d_j^t)+z_k} a_{i,k}+d_i^t\left(\sum_{k=1}^K \frac{w_k\ q_{i,k} v_{\sigma(i)}}{\sum_{j=1}^N v_{\sigma(j)}( a_{j,k}+d_j^t)+z_k}\right) \\
	& =   \left(\sum_{k=1}^K \frac{w_k\ q_{i,k} v_{\sigma(i)}}{\sum_{j=1}^N v_{\sigma(j)}( a_{j,k}+d_j^t)+z_k}\right)\left(\frac{\left(\sum_{k=1}^K\frac{w_kq_{i,k} v_{\sigma(i)}}{\sum_{j=1}^Nv_{\sigma(j)}(a_{j,k}+d_j^t)+z_k}a_{i,k}\right)}{\left(\sum_{k=1}^K\frac{w_kq_{i,k}v_{\sigma(i)}}{\sum_{j=1}^Nv_{\sigma(j)}(a_{j,k}+d_j^t)+z_k}\right)}+ d_i^t\right) \\
	& =   \left(\sum_{k=1}^K \frac{w_k\ q_{i,k} v_{\sigma(i)}}{\sum_{j=1}^N v_{\sigma(j)}( a_{j,k}+d_j^t)+z_k}\right)\left(\frac{v_{\sigma(i)}\left(\sum_{k=1}^K\frac{w_kq_{i,k} a_{i,k}}{\sum_{j=1}^Nv_{\sigma(j)}(a_{j,k}+d_j^t)+z_k}\right)}{v_{\sigma(i)}\left(\sum_{k=1}^K\frac{w_kq_{i,k}}{\sum_{j=1}^Nv_{\sigma(j)}(a_{j,k}+d_j^t)+z_k}\right)}+ d_i^t\right)  \\
	& =   \left(\sum_{k=1}^K \frac{w_k\ q_{i,k} v_{\sigma(i)}}{\sum_{j=1}^N
		v_{\sigma(j)}(
		a_{j,k}+d_j^t)+z_k}\right)\left(\frac{\left(\sum_{k=1}^K\frac{w_kq_{i,k}
			a_{i,k}}{\sum_{j=1}^Nv_{\sigma(j)}(a_{j,k}+d_j^t)+z_k}\right)}{\left(\sum_{k=1}^K\frac{w_kq_{i,k}}{\sum_{j=1}^Nv_{\sigma(j)}(a_{j,k}+d_j^t)+z_k}\right)}+ d_i^t\right).
	\end{align*}
	
	\noindent
	Now, for each item $i$ and each time period $t$, define the function
	\begin{equation}\label{generalized_appeal}
	\widetilde{a}_i(t) =\left(\sum_{k=1}^K\frac{w_kq_{i,k} a_{i,k}}{\sum_{j=1}^Nv_{\sigma(j)}(a_{j,k}+d_j^t)+z_k}\right)\Big/\left(\sum_{k=1}^K\frac{w_kq_{i,k}}{\sum_{j=1}^Nv_{\sigma(j)}(a_{j,k}+d_j^t)+z_k}\right)
	\end{equation}
	which depends on the total number of purchases at time $t$. Using this
	definition, we have that:
	\begin{align*}
	P_i(\sigma,d^t)  = \left(\sum_{k=1}^K \frac{w_k\ q_{i,k} v_{\sigma(i)}}{\sum_{j=1}^N v_{\sigma(j)}( a_{j,k}+d_j^t)+z_k}\right)\left( \widetilde{a}_i(t)+ d_i^t\right) = \left(\sum_{k=1}^K \frac{w_k\ q_{i,k} }{\sum_{j=1}^N v_{\sigma(j)}( a_{j,k}+d_j^t)+z_k}\right)v_{\sigma(i)}\left( \widetilde{a}_i(t)+ d_i^t\right).		
	\end{align*}
	
	\noindent
	By dividing and multiplying by $\sum_{j=1}^N  v_{\sigma(j)}(\widetilde{a}_j(t) +d_j^t)$, $P_i(\sigma,d^t)$ becomes
	\begin{align*}
	\left(\sum_{k=1}^K \frac{w_k\ q_{i,k} }{\sum_{j=1}^N v_{\sigma(j)}( a_{j,k}+d_j^t)+z_k}\right)\left(\sum_{j=1}^N  v_{\sigma(j)}(\widetilde{a}_j(t) +d_j^t)\right)\left( \frac{v_{\sigma(i)}\left(\widetilde{a}_i(t)+ d_i^t\right)}{\sum_{j=1}^N  v_{\sigma(j)}(\widetilde{a}_j(t) +d_j^t)}\right).
	\end{align*}
	
	\noindent
	Now define the following function for each item $i$ at each time period $t$:
	\begin{equation}\label{generalized_q}
	\widetilde{q}_i(t) = \left(\sum_{k=1}^K \frac{w_k\ q_{i,k} }{\sum_{j=1}^N v_{\sigma(j)}( a_{j,k}+d_j^t)+z_k}\right)\left(\sum_{j=1}^N  v_{\sigma(j)}(\widetilde{a}_j(t) +d_j^t)\right).
	\end{equation}
	The probability of purchasing product $i$ in the next iteration
	becomes:
	\begin{equation*}
	P_i(\sigma,d^t)= \left(\frac{v_{\sigma(i)}(\widetilde{a}_i(t)+ d_i^t)}{\sum_{j=1}^Nv_{\sigma(j)}(\widetilde{a}_j(t)+ d_j^t)}\right) \widetilde{q}_i(t).
	\end{equation*}
	This is almost a multinomial logit model, except that the quality and
	appeal vectors that depend on time. When the number of iterations $t$
	tends to infinity, the total number of purchases $\sum_{j=1}^Nd_j^t$
	also goes to infinity. Moreover, as $t$ goes to infinity, the
	generalized appeal ($\widetilde{a}_i(t)$) and quality ($\widetilde{q}_i(t)$)  for every item converges to
	
	\begin{equation}\label{quality-appeal-limits}
	\bar{a}_i \doteq \lim_{t \to\infty}\widetilde{a}_i(t) = \frac{\sum_{k=1}^K w_ka_{i,k}q_{i,k}}{\sum_{k=1}^K w_kq_{i,k}} \quad \text{and} \quad \bar{q}_i\doteq\lim_{t \to\infty}\widetilde{q}_i(t) = \sum_{k=1}^K w_kq_{i,k}.
	\end{equation}		
	\noindent
	In addition, observe that $\widetilde{Q}_{i}(t)\doteq v_{\sigma(i)}
	\widetilde{q}_i(t)$ also converges when $t$ goes to infinity:
	\begin{equation*}
	\bar{Q}_i\doteq\lim_{t \to\infty} v_{\sigma(i)} \widetilde{q}_i(t) = v_{\sigma(i)} \bar{q}_i
	\end{equation*}
	
	\noindent
	The tie-breaking condition (Equation \ref{cond1}) guarantees that
	there exists only one item $i^*$ such that $i^* = \argmax_{i \in [N]}
	\bar{Q}_i$. Let $i^{**}$ be the item with the second highest value
	$\bar{Q}_i$, i.e., $\bar{Q}_{i^{**}} \geq \bar{Q}_{j}$ for all $j \in
	[N], j \neq i^*$. Consider now the following difference $\Delta
	\bar{Q}=\bar{Q}_{i^*}-\bar{Q}_{i^{**}}$. Equation
	\eqref{generalized_appeal} can be seen as a weighted average on $k$
	for $a_{i,k}$ and hence	
	\begin{equation}
	\label{boundsAk}
	\min_{1\leq k\leq K}a_{i,k} \leq \widetilde{a}_i(t)  \leq \max_{1\leq k\leq K}a_{i,k} \quad \forall i\in[N], t \in \mathbb{N}
	\end{equation}	
	\noindent
	Moreover, by applying this result to Equation \eqref{generalized_q}, we obtain the following bounds for $\widetilde{q}_i$:		
	\begin{align*}
	\label{boundsqk}
	\widetilde{q}_i(t) & \geq \sum_{k=1}^K \frac{w_kq_{i,k}}{\sum_{j=1}^Nv_{\sigma(j)}(\max_{1\leq k\leq K}a_{j,k}+d_j^t)+z_k}\left(\sum_{j=1}^N  v_{\sigma(j)}(-\max_{1\leq k\leq K}a_{j,k}+\max_{1\leq k\leq K}a_{j,k} +d_j^t)-z_k+z_k\right)\\
	& = \sum_{k=1}^K \frac{w_kq_{i,k}\left(\sum_{j=1}^N  v_{\sigma(j)}(\max_{1\leq k\leq K}a_{j,k} +d_j^t)+z_k\right)-w_kq_{i,k}\sum_{j=1}^N v_{\sigma(j)}\max_{1\leq k\leq K}a_{j,k}-w_kq_{i,k}z_k}{\sum_{j=1}^Nv_{\sigma(j)}(\max_{1\leq k\leq K}a_{j,k}+d_j^t)+z_k}\\
	& \geq \left(1-\frac{\sum_{j=1}^N v_{\sigma(j)} \max_{1\leq k\leq K}a_{j,k}+\max_{1\leq k \leq K}z_k}{\sum_{j=1}^N v_{\sigma(j)}d_j^t}\right)\sum_{k=1}^K w_kq_{i,k} \\
	& = \left(1-\frac{\sum_{j=1}^N v_{\sigma(j)} \max_{1\leq k\leq K}a_{j,k}+\max_{1\leq k \leq K}z_k}{\sum_{j=1}^N v_{\sigma(j)}d_j^t}\right)\bar{q}_i \text{ and}
	\end{align*}
	\begin{align*}
	\widetilde{q}_i(t) & \leq \sum_{k=1}^K \frac{w_kq_{i,k}}{\sum_{j=1}^Nv_{\sigma(j)}d_j^t}\left(\sum_{j=1}^N  v_{\sigma(j)}(\max_{1\leq k\leq K}a_{i,k} +d_j^t)\right)\leq\left(1+\frac{\sum_{j=1}^N v_{\sigma(j)} \max_{1\leq k\leq K}a_{j,k}}{\sum_{j=1}^N v_{\sigma(j)}d_j^t}\right)\bar{q}_i.
	\end{align*}
	As a result, the bounds for $\widetilde{Q}_i(t)$ $(\forall i\in[1,N], t \in \mathbb{N})$ are given by
	\begin{align*}
	\left(1+\frac{\sum_{j=1}^N v_{\sigma(j)} \max_{1\leq k\leq K}a_{j,k}}{\sum_{j=1}^N v_{\sigma(j)}d_j^t}\right)\bar{Q}_i \geq \widetilde{Q}_i(t)  \geq \left(1-\frac{\sum_{j=1}^N v_{\sigma(j)} \max_{1\leq k\leq K}a_{j,k}+\max_{1\leq k \leq K}z_k}{\sum_{j=1}^N v_{\sigma(j)}d_j^t}\right)\bar{Q}_i.			
	\end{align*}
	Notice that these bounds converge to $\bar{Q}_i$, so they could be arbitrary close to $\bar{Q}_i$ by choosing a sufficiently large $d^t$ vector.
	
	\noindent To conclude the proof, we need to estimate the total number of
	purchases $\hat{d}_{tot}$ that guarantees that
	\[
	\forall t>t^*: \widetilde{Q}_{i^*}(t)>\widetilde{Q}_{i^{**}(t)}
	\]
	where $t^*$ is the time period in which the total number of purchases
	becomes $\hat{d}_{tot}=\sum_{i\in[N]}d_i^{t^*}$. The value
	$\hat{d}_{tot}$ and its associated vector of purchases $d^{t^*}$ must
	satisfy the following condition
	\begin{equation}
	\label{dtot_condition}
	\Delta \bar{Q} >\frac{\sum_{j=1}^N v_{\sigma(j)} \max_{1\leq k\leq K}a_{j,k}(\bar{Q}_{i^*}+\bar{Q}_{i^{**}})+\bar{Q}_{i^*}\max_{1\leq k \leq K}z_k}{\sum_{j=1}^N v_{\sigma(j)}d_j^{t^*}}.
	\end{equation}
	To verify inequality \eqref{dtot_condition}, it suffices to choose $\hat{d}_{tot}$ to satisfy
	
	\begin{equation}
	\hat{d}_{tot}>\frac{\sum_{j=1}^N v_{\sigma(j)} \max_{1\leq k\leq K}a_{j,k}(\bar{Q}_{i^*}+\bar{Q}_{i^{**}})+\bar{Q}_{i^*}\max_{1\leq k \leq K}z_k}{\max_j v_{\sigma(j)}\Delta\bar{Q}}.
	\end{equation}
	Since $\hat{d}_{tot}$ can be as large as desired, with the previous condition we guarantee the validity of Eq. \eqref{dtot_condition}.
	
	We have just shown that the conditions of Lemma \ref{qsuc} are satisfied, we can now apply it using ranking policy $\sigma$ to
	prove that the model goes to a monopoly for item $i^*$, which
	maximizes the product of its visibility and its weighted average
	quality, i.e., $v_{\sigma(i)} \bar{q}_i$.
\end{proof}	
\begin{proof}[Proof of Corollary \ref{cor1}]
	From Theorem \ref{monopoly_alltogether}, a MMNL model goes to a
	monopoly for the item $i$ that maximizes $v_{\sigma(i)}
	\bar{q}_i$. When the quality ranking is used, the product
	$i^*$ that goes to a monopoly is \[i^*=\argmax_i v_{\sigma(i)} \bar{q}_i = \argmax_i(\bar{q}_i).\]		
\end{proof}		
\begin{proof}[Proof of Theorem \ref{bound-nsi}]
	
	To prove this result, we need the
	following property. Let $A \in \mathbb{R}^{N\times K}_{\geq 0}$, then
	
	\begin{equation}
	\label{simple_matrix_inequality}
	\sum_{k=1}^K \max_{1 \leq i \leq N} a_{i,k} \leq K \max_{1 \leq i \leq N} \sum_{k=1}^K a_{i,k}
	\end{equation}
	where $a_{i,k}\in A$. Its proof follows from the following argument; $K\max_{1\leq i \leq N}\sum_{k=1}^K a_{ik}=K\max_{1\leq i \leq N}\sum_{k=1}^K |a_{ik}|$ $=K \left\Vert A \right\Vert_\infty=K \sup_{v\in \mathbb{R}^{K\times 1}: \Vert v \Vert_\infty = 1} \Vert Av\Vert_\infty \geq  K\left\Vert A \mathbb{I}^{K\times 1}\frac{1}{K} \right\Vert_\infty = \left\Vert A \mathbb{I}^{K\times 1} \right\Vert_\infty=\sum_{i=1}^N\sum_{k=1}^K |a_{ik}|\geq \sum_{k=1}^K \max_{1 \leq i \leq N} |a_{ik}|=\sum_{k=1}^K \max_{1 \leq i \leq N} a_{ik}$, where $\mathbb{I}^{K\times 1}$ is an all-one vector of dimension $K\times 1$.
	
	At the limit, the probability that an item is purchased under the
	average quality ranking with the popularity signal is given by
	\begin{equation}
	P_{AQGSI}=\max_{1\leq i \leq N} \bar{q}_i.
	\end{equation}
	When no popularity signal is shown, this probability becomes
	\begin{equation}
	\label{global-NSI}
	P_{AQNSI}=\sum_{k=1}^K w_k\sum_{i=1}^N q_{i,k}\frac{v_{\sigma(i)}a_{i,k}}{\sum_{j=1}^Nv_{\sigma(j)}a_{j,k}+z_k}.
	\end{equation}
	We can easily bound $P_{AQNSI}$ as follows:
	\begin{equation}
	0\leq\sum_{k=1}^K \min_{1\leq i \leq N}(w_k q_{i,k})\frac{\sum_{i=1}^Nv_{\sigma(i)}a_{i,k}}{\sum_{j=1}^Nv_{\sigma(j)}a_{j,k}+z_k}\leq P_{AQNSI}\leq\sum_{k=1}^K \max_{1\leq i \leq N}(w_k q_{i,k})
	\end{equation}
	and hence, by Inequality \eqref{simple_matrix_inequality},
	\begin{equation}
	0\leq\frac{P_{AQNSI}}{P_{AQGSI}}\leq K.
	\end{equation}
\end{proof}

\begin{proof}[Proof of Proposition \ref{bound-nsi-tightness}]
	Consider first the upper bound. Choose a MMNL model where $z=0$, $K=N$, the
	quality matrix is diagonal with a value of $1$ for the first element
	and $1-\epsilon$ for all others, the appeal matrix is the identity,
	and the classes have the same weights $w_i=\frac{1}{K}$. Then,
	\begin{align*}
	P_{AQNSI} =\sum_{1\leq i \leq N} \frac{1}{K}\left(1-\epsilon(1-\delta_{i1})\right) \text{ and } P_{AQGSI}=\frac{1}{K},
	\end{align*}
	where $\delta_{ij}$ is the Kronecker delta, thus
	\begin{equation}
	\lim_{\epsilon \to 0} \frac{P_{AQNSI}}{P_{AQGSI}}= \lim_{\epsilon \to 0} \sum_{1\leq i \leq N} (1-\epsilon\delta_{i1})= \lim_{\epsilon \to 0} (K-\epsilon (K-1)) = K.
	\end{equation}
	
	\noindent
	Consider now the lower bound. Choose a MMNL model where $z=0$, $K=N$ with the
	same quality matrix as before, the same weights, and an appeal matrix
	filled with ones except in its diagonal where each element has a value of
	$\epsilon_A$. Then,
	\begin{align*}
	P_{AQNSI} = \sum_{1\leq i \leq N} \frac{1}{K}\left(1-\epsilon(1-\delta_{i1})\right)\frac{v_{\sigma(i)}\epsilon_A}{\epsilon_Av_{\sigma(i)}+\sum_{j\neq i} v_{\sigma(j)}} \text{ and } P_{AQGSI} = \frac{1}{K},
	\end{align*}
	thus
	\begin{equation}
	\lim_{\epsilon_A \to 0} \frac{P_{AQNSI}}{P_{AQGSI}}= \lim_{\epsilon_A \to 0} \sum_{1\leq i \leq N}\left(1-\epsilon(1-\delta_{i1})\right)\frac{v_{\sigma(i)}\epsilon_A}{\epsilon_Av_{\sigma(i)}+\sum_{j\neq i} v_{\sigma(j)}} = 0.
	\end{equation}
	
\end{proof}

\begin{proof}[Proof of Theorem \ref{thm_bound}]
	By Theorem \ref{monopoly_alltogether}, we have
	\begin{equation*}
	P_{AQGSI} \doteq \lim_{t\rightarrow\infty} P^t_{AQGSI} = \max \{ \bar{q}_{1}, \bar{q}_{2}, \hdots, \bar{q}_{N}\}.
	\end{equation*}
	
	\noindent
	As mentioned earlier, for the segmented quality ranking, each segment is
	independent from each other and all of them will converge to a
	monopoly for the product with the highest quality in that class.
	We have that
	
	\begin{equation*}
	P_{SQSSI} \doteq \lim_{t\rightarrow\infty} P^t_{SQSSI} = \sum_{k=1}^K w_k \max_{i}q_{i,k}.
	\end{equation*}
	
	\noindent
	As a result,
	\begin{align*}
	\label{Psat}
	\frac{P_{SQSSI}}{P_{AQGSI}} & =\frac{\sum_{k=1}^K w_k \max_{1\leq i \leq N} q_{i,k}}{\max_{1\leq i \leq N}\sum_{k=1}^K w_k  q_{i,k}}
	=\frac{\sum_{k=1}^K  \max_{1\leq i \leq N} w_k q_{i,k}}{\max_{1\leq i \leq N}\sum_{k=1}^K w_k  q_{i,k}}.
	\end{align*}
	
	\noindent
	The lower bound is obviously valid and the upper bound follows from
	inequality \eqref{simple_matrix_inequality}.
\end{proof}

\begin{proof}[Proof of Proposition \ref{tightness_segmentation}]
	Consider a model with $K$ items and $K$ consumer classes. Without
	loss of generality, let the segment 1 be the segment with the
	lowest weight, i.e., $w_1\leq w_k \forall k\in[K]$. Then, for any
	set of positive appeals in each class, define the elements
	$q_{i,k}$ as follows:
	\begin{equation*}
	q_{i,k}= \left\{
	\begin{array}{l l}
	\frac{\min_{j\in[K]}{w_j}}{w_1} & \quad \text{if $i=k=1$}\\
	\frac{\min_{j\in[K]}{w_j}}{w_k}-\epsilon & \quad \text{if $i=k\neq 1$} \\
	0 & \quad \text{otherwise}
	\end{array} \right.
	\end{equation*}
	\noindent
	where $\epsilon$ is a positive number ensuring that the model is
	tie-breaking for the quality rankings. Then
	\begin{equation*}
	\lim_{\epsilon \to 0} \frac{P_{SQSSI}}{P_{AQGSI}}= \lim_{\epsilon_A \to 0} \frac{K\min_{j\in[K]}{w_j}-\epsilon\sum_{k=2}^K w_k}{\min_{j\in[K]}w_j} = K.
	\end{equation*}
\end{proof}

\begin{proof}[Proof of Theorem \ref{AQNSIvsSQSSI_bound}]
	Using the result from \cite{van2016aligning}, each segment under the SQSSI ranking will go to a monopoly of the product that maximizes its quality. If this product is not unique, the market will be shared between such products. Thus, the ratio between the asymptotic purchase probability of the average quality ranking without social influence and its segmented version with segmented social influence is given by
	\begin{equation*}
	\frac{\lim_{t\to \infty}P^t_{AQNSI}}{\lim_{t\to \infty}P^t_{SQSSI}}=\frac{\sum_{k=1}^K w_k \sum_{i=1}^N q_{i,k}\frac{v_{\sigma(i)}a_{i,k}}{\sum_{j=1}^N v_{\sigma(j)}a_{j,k}+z_k}}{\sum_{k=1}^K w_k \max_{1\leq i \leq N}q_{i,k}}.
	\end{equation*}
	
	Since $\sum_{i=1}^N q_{i,k}\frac{v_{\sigma(i)}a_{i,k}}{\sum_{j=1}^N v_{\sigma(j)}a_{j,k}+z_k} \leq \max_{1\leq i \leq N}q_{i,k}$, we have that $\frac{\lim_{t\to \infty}P^t_{AQNSI}}{\lim_{t\to \infty}P^t_{SQSSI}}\leq 1$.
	
	A similar argument can be made for the second comparison:
	\begin{equation*}
	\frac{\lim_{t\to \infty}P^t_{SQNSI}}{\lim_{t\to \infty}P^t_{SQSSI}}=\frac{\sum_{k=1}^K w_k \sum_{i=1}^N q_{i,k}\frac{v_{\sigma_k(i)}a_{i,k}}{\sum_{j=1}^N v_{\sigma_k(j)}a_{j,k}+z_k}}{\sum_{k=1}^K w_k \max_{1\leq i \leq N}q_{i,k}}\leq 1.
	\end{equation*}
\end{proof}

\section*{Supplementary material for ``Market Segmentation in Online Platforms"}
\subsection*{Supplementary Appendix A: The impact of misclassification}
In this appendix we consider scenarios in which the firm is not always able to identify correctly the corresponding consumer segment. Misclassification may occur, for example, when there is no historical data about the incoming consumer and it has to be assigned to one of the classes solely based on basic information provided by some internet marketing companies (e.g. keywords searched, geographical region, sex, age, etc). We analyze what is the impact of having some classification errors under some mild model assumptions. First, we provide some theoretical results about the convergence under our misclassification model. Finally, using a numerical experiment, we illustrate what is the impact in performance as a function on the misclassification rate.

We begin describing the model extension. Suppose that every time the system has an arriving customer of segment $l\in [K]$ (this occurs with probability $w_l$), there exists a probability $\alpha_{lk}$ that the consumer is recognized as segment $k\in [K]$. If a consumer is recognized by the system (or classifier) as segment $k$ consumer (even if it is not really from segment $k$), it is said that the consumer is \emph{observed} in segment $k$. Similarly to the policy {\tt SQSSI} let {\tt SQSSIM} denote the ranking policy of classifying consumers (but now misclassification errors occur) and then providing to the consumers observed as segment $k$ the quality ranking of consumer segment $k$ and updating the popularity signal locally in each segment.

Since {\tt SQSSIM} has misclassification, the benefits of segmentations might be deteriorated depending on how often segments are mistakenly recognized. Under {\tt SQSSIM}, which product will become the most popular in each of the segments? To answer this question, we rely on the important result obtained in Theorem \ref{monopoly_alltogether}. Namely, it provides the long term convergence of using a static ranking with global social influence among all consumer segments. The following corollary identifies the product in each segment that will become the most popular in the long term.

\begin{corollary} \label{mistake_cor}
	Suppose the solution to $\argmax_{1\leq i\leq N}v_{\sigma_k(i)}\sum_{l=1}^K w_l\alpha_{lk}q_{i,l}$ is unique for segment $k$ where $\sigma_k(\cdot)$ denotes a static ranking associated to segment $k$. Then, under the {\tt SQSSIM} policy, the product $i^*_k$ that converges to a monopoly for the observed segment $k$ is:
	\begin{equation}\label{eq:mistakes_conv}
	i^*_k=\argmax_{1\leq i\leq N}v_{\sigma_k(i)}\sum_{l=1}^K w_l\alpha_{lk}q_{i,l}.
	\end{equation}
\end{corollary}
\proof Consumers observed in segment $k$ may belong to $K$ different segments due to classifications mistakes, thus, each observed segment $k$ can be seen as a MMNL. Given a consumer is observed in segment $k$, the probability that this customer belongs to $l\in [K]$ is given by $w_l\alpha_{lk}$. Then, the purchase probability of product $i$ for a consumer observed in segment $k$, with ranking $\sigma_k$ and purchase vector $d^{t,k}$ is given by
\begin{equation*}
P_i^k(\sigma_k,d^{t,k}) = \sum_{l=1}^K \Big( w_l\alpha_{lk} \cdot
\Big(v_{\sigma_k(i)}\frac{ (a_{i,l} + d_i^{t,k})}{\sum_{j=1}^N
	v_{\sigma_k(j)} (a_{j,l} + d_j^{t,k})+z_l} \cdot q_{i,l} \Big) \Big).
\end{equation*}
This probability resembles Eq. \eqref{eq:thm2_main} in the proof of Theorem \ref{monopoly_alltogether}. Thus, we know that each observed segment $k$ of consumers converges to a monopoly for product \[i^*_k=\argmax_{1\leq i\leq N}v_{\sigma_k(i)}\sum_{l=1}^K w_l\alpha_{lk}q_{i,l}.\]\endproof	

Now that we have found the long term behavior of the model with classifying errors, we analyze the impact of errors in market efficiency. From now on we assume that the probability of committing a mistake in classifying a segment is the same for every segment, and is equally likely to identify it as any other segment. Then we may define the mistake probabilities with two parameters, $\alpha_0$ and $\beta_0$:
\begin{equation}\label{eq_alpha}
\alpha_{lk}=\begin{cases} \alpha_0 \quad\text{ if } l=k \\
\frac{\beta_0}{K-1} \quad\text{ otherwise}
\end{cases}.
\end{equation}
Equation \eqref{eq_alpha} means that every customer has a probability $\alpha_0$ of being recognized as their correct segment, while a probability $\beta_0$ that  the consumer's segment is mistaken for another one. Naturally, $\alpha_0=1-\beta_0$.

Similarly to the definition of $P_{SQSSI}^t$, let $P_{SQSSIM}^t$ denote the probability of a purchase at time $t$ when the firm applies the segmented quality ranking with the local popularity signal $d_k^t$ under a classifier with errors. We are now ready to prove the following:

\begin{theorem}
	Assume that the average quality ranking and its segmented version are tie-breaking for a MMNL model with mistake probability matrix $\alpha$ given by \eqref{eq_alpha}, and that $\alpha_0>\frac{\beta_0}{K-1}$ (the probability than an observed segment is classified correctly is greater than the probability of misclassifying it with any other segment). Then,		
	\begin{equation}
	\frac{K}{K-1} \beta_0 \leq \lim_{t\rightarrow \infty}\frac{P^t_{SQSSIM}}{P^t_{AQGSI}}\leq K\alpha_0
	\end{equation}
	
\end{theorem}
\proof
As mentioned earlier, when we have segmentation, each segment is independent from each other and every recognized segment $k$ will converge to a monopoly for the product $i^*_k$ given by Eq. \eqref{eq:mistakes_conv}. If each observed segment $k$ is ranked according to the new qualities $\hat{q}_{i,k}\doteq\sum_{l=1}^K w_l\alpha_{lk}q_{i,l}$, then

\begin{flalign*}
\lim_{t\rightarrow \infty}P^t_{SQSSIM}&=\sum_{l=1}^K \sum_{k=1}^K w_l\alpha_{lk}q_{i^*_k,l}=\sum_{k=1}^K  \max_{1\leq i\leq N} \sum_{l=1}^K w_l\alpha_{lk}q_{i,l} \\
&\leq \sum_{k=1}^K [\max_{1\leq l \leq K}\alpha_{lk}] \max_{1\leq i\leq N} \sum_{l=1}^K w_lq_{i,l}= \sum_{k=1}^K \max\lbrace{\alpha_0,\frac{\beta_0}{K-1}\rbrace} \max_{1\leq i\leq N} \bar{q}_i\\
&=K \alpha_0 \max_{1\leq i\leq N} \bar{q}_i,
\end{flalign*}
and
\begin{flalign*}
\lim_{t\rightarrow \infty}P^t_{SQSSIM}&=\sum_{l=1}^K \sum_{k=1}^K w_l\alpha_{lk}q_{i^*_k,l}=\sum_{k=1}^K  \max_{1\leq i\leq N} \sum_{l=1}^K w_l\alpha_{lk}q_{i,l} \\
&\geq \sum_{k=1}^K [\min_{1\leq l \leq K}\alpha_{lk}] \max_{1\leq i\leq N} \sum_{l=1}^K w_lq_{i,l}= \sum_{k=1}^K \min\lbrace{\alpha_0,\frac{\beta_0}{K-1}\rbrace} \max_{1\leq i\leq N} \bar{q}_i\\
&=K \frac{\beta_0}{K-1} \max_{1\leq i\leq N} \bar{q}_i.
\end{flalign*}

Since $\lim_{t\rightarrow \infty}P^t_{AQGSI}=\max_{1\leq i \leq N} \bar{q}_i$, we finally have that
\begin{equation*}
\frac{K}{K-1} \beta_0 \leq	\lim_{t\rightarrow \infty}\frac{P^t_{SQSSIM}}{P^t_{AQGSI}}\leq K\alpha_0.
\end{equation*}	
\endproof

To illustrate the effects of classifications mistakes in the market, we perform a simulation for Scheme 2 under the ranking policy SQSSIM with different values of $\alpha_0$. We also plot the ranking policy AQGSI in the same graph, the results are shown in Figure \ref{fig:mistakes_scheme1}. As expected, the performance of the ranking policy SQSSIM decreases as the percentage of correct consumer segment classification ($\alpha_0$) decreases. Furthermore, it is interesting to see that the AQGSI ranking policy outperforms SQSSIM with an $\alpha=0.8$ or less. The managerial insight is that segmentation in this setting is better than showing a single ranking, but only as long as the misclassification errors are relatively small. In cases where $\alpha=0.8$ or lower, market segmentation is harmful.	
\begin{figure}[t]
	\centering
	\includegraphics[trim = 1.3in 3in 1.5in 3.1in , clip, width=8cm]{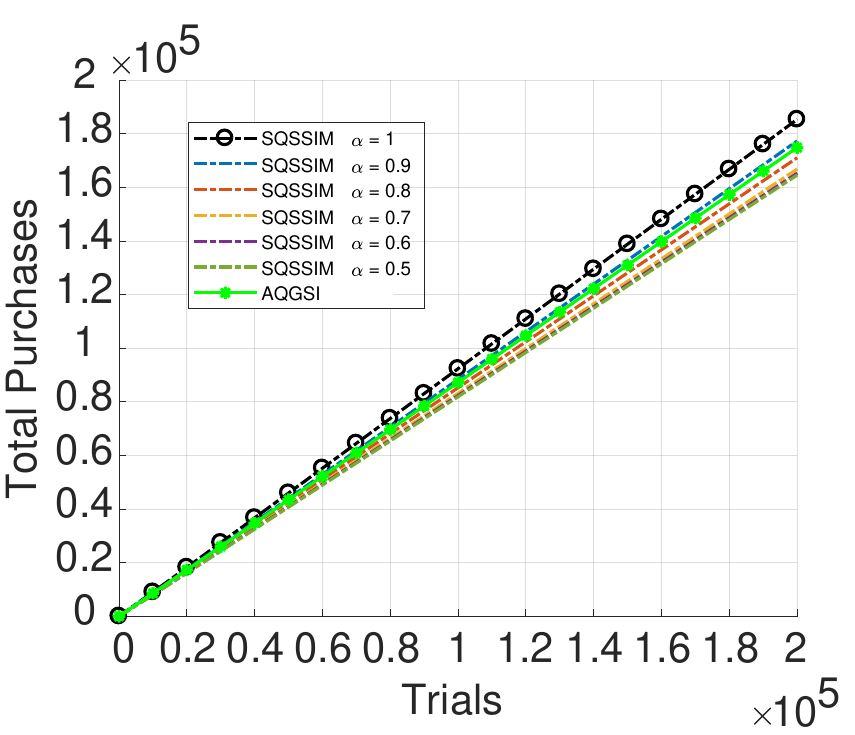}
	\hspace{0.5cm}
	\caption{The Number of Purchases over time for various SQSSIM rankings with different $\alpha_0$ values, and the AQGSI ranking for Scheme 2. The x-axis represents the number of items tried and the y-axis represents
		the average number of purchases over all experiments. We can see that only when the classifications errors are small ($\alpha \geq 0.8$), the segmentation policy outperforms the global ranking policy (average quality ranking AQGSI).}
	\label{fig:mistakes_scheme1}
\end{figure}

\subsection*{Supplementary Appendix B: 2-Swap, a Performance Ranking Heuristic}

In order to assess the average quality ranking policy (AQGSI), we performed computational experiments and compare it to a substantially more computationally expensive heuristic: the 2-swap heuristic. In all our experiments using the different schemes, the results obtained with the 2-swap heuristic were not significantly better than those obtained with the average quality ranking (AQGSI). Figure \ref{fig:2swap-sch1} shows the average total purchases over $400$ simulations as a function of the number of trials using the both methods. For completeness we also included the segmented ranking policy SQSSI.

\begin{figure}[h]\centering
	\includegraphics[trim = 1.1in 3.3in 1.7in 3.3in, clip, width=8.9cm]{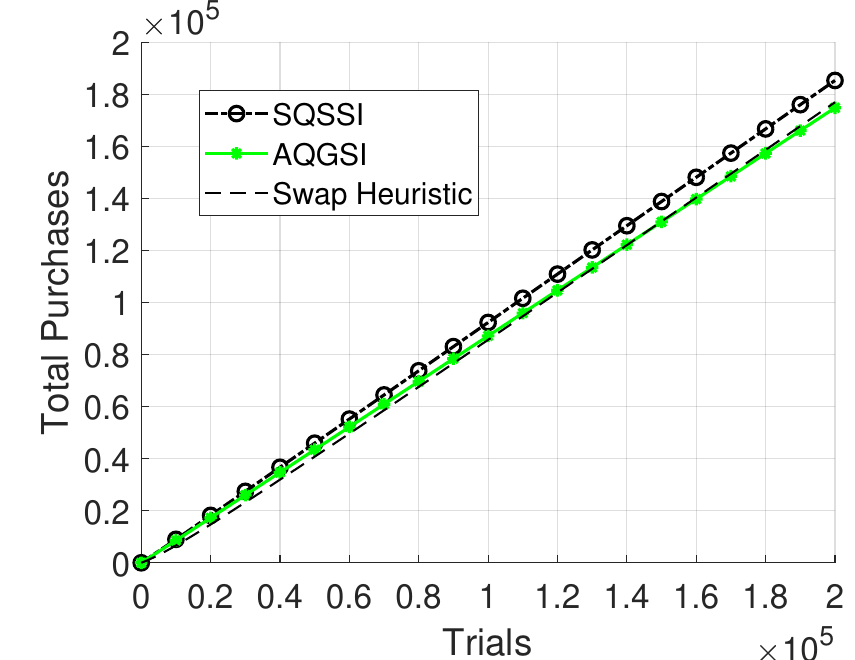}
	\caption{\label{fig:2swap-sch1} Average total purchases vs trials of $400$ simulations under SQSSI, SQSSI and the 2-swap heuristic with global social influence using Scheme 2.}
\end{figure}
\pagebreak

\subsection*{Supplementary Appendix C: Revenue Maximization}

In this appendix we analyze an extension of the model in which each product $i\in [N]$ generates a profit $r_i$ to the platform owner. Consumers are aware of the product prices and those are taken into account in the consumer choice probabilities of our model (Equation \eqref{MMNL}).

Since now the firm is interested in maximizing profit, we need to modify the previous objective (Equation \eqref{performance_ranking-MMNL}) to the following one:
\begin{align}
\label{sup_performance_ranking}
\Pi^{PR} & = \max_{\sigma \in S_N} \bigg\{ \sum_{k=1}^K \Big( w_k  \cdot \sum_{i=1}^N \Big(\frac{v_{\sigma(i)} (a_{i,k} + d_i^t)}{\sum_{j=1}^N v_{\sigma(j)} (a_{j,k} + d_j^t)+z_k}  \cdot r_i q_{i,k} \Big) \Big) \bigg\}.
\end{align}

Observe that when $r_i=1$ for all $i=1,\hdots,N$ the above problem reduces to the original setting in which the firm tries to maximize the expected number of purchases. This means that our NP-hardness result (Theorem \ref{theorem_hardness}) for the original model still holds for this new setting. Moreover, Theorem \ref{monopoly_alltogether} (the monopoly convergence) \ref{cor1} still hold, since consumer preferences under the Mixed MNL already account for the product's prices/revenues and the market dynamics is only dependent on those mixed MNL preferences (as previously). What is interesting to note is that in this model extension, the average quality ranking may lead to a monopoly for a product that doesn't yield the greatest revenue rate. This is because the product with the largest average quality, may not be the one that generates the largest expected revenue conditional on a trial. This lead us to analyze a new ranking heuristic, the ``Average Expected Revenue Ranking", which ranks products based on their expected revenue conditional on a trial, namely $\sum_k w_k q_{i,k}\times r_i$. To obtain analytical results, we analyze this policy under an important assumption: the average quality ordering is the same as the revenue ordering. This assumption is reasonable in multiple settings: it says that higher quality products have a higher price. With this assumption, the average expected revenue ranking is the same as the average quality ranking, and thus, from Theorem \ref{monopoly_alltogether}, the market converges to monopoly for the product with the largest expected revenue conditional on a trial (asymptotic optimality). In addition, for the Average Expected Revenue Ranking, Theorem \ref{bound-nsi} still holds. The Average Expected Revenue Ranking with the same ordering between average product qualities and product revenues, can perform up to $K$ (the number of consumer segments) times better, or arbitrarily worst, by not showing the popularity signal. The proof is straightforward from Theorem \ref{bound-nsi}'s proof, in which $q_{i,k}$ is exchanged by $q_{i,k}r_i$ each time.

For Theorem \ref{thm_bound} to hold with the Average Expected Revenue Ranking, we need a stronger assumption. If the average quality ordering is just the same as the revenue ordering, it could happen that for some consumer segments product qualities and revenues are not ordered the same way. Thus, ranking by expected revenue conditional on a trial for each segment might not achieve asymptotic optimality in each consumer segment. In order to guarantee asymptotic optimality of each consumer segment we need that the quality ordering for each them is the same as the revenue ordering. Thus, the results from Theorem \ref{thm_bound} hold only when all consumer segment product qualities follow the same ordering as the product revenue ordering.

\subsection*{Supplementary Appendix D: Assortment Optimization}
In this section we analyze the extension in which in each time period the platform owner chooses a subset $S \subseteq [N]$ of products to show to consumers, as well as a ranking $\sigma_S$ among them. We particularly focus in the case where the platform owner has no information on consumers segments, and thus, needs to display the same assortment of products to all consumers. If there is only one consumer segment, this problem can be solved efficiently (see \citet{abeliuk2016assortment} and \citet{sumida2019revenue}). However, when there are at least two segments, the problem becomes a generalization of the classical assortment optimization under the latent class MNL model which is already NP-hard (see \citet{bront2009column} and \citet{rusmevichientong2014assortment}).

Given the impossibility of finding efficiently an optimal assortment, we study a heuristic which we called the {\it Average Quality Threshold Heuristic}. This heuristic first ranks products by their average quality, so let product $i$ denote the product with the $i^{th}$ highest average quality. Then, at each time period, it chooses how many products to show with the following condition: if it shows $k$ products, those must be $\{1,2\hdots,k\}$ and they should be ranked in the same order: higher quality products are placed in positions with higher visibility. This heuristic is computationally more intensive than the standard average quality, since for each time period, we need to evaluate $N$ different scenarios and choose the best one. Nevertheless, it is still computational practical since it takes at most $O(N^3)$ time.

%It is important to note that the performance of the Average Quality Threshold Heuristic with global social influence (AQTGSI), in comparison to AQGSI, will strongly depend on the value of the outside option as well as the social signal. The trade-off in choosing the optimal quality threshold is between the trial probability of a product (and not the outside option), and the purchase probability conditional on a trial. Reducing (increasing) the number of products shown, decreases (increases) the trial probability but increases (decreases) the purchase probability conditional on a trial. When the outside option trial probability is low, reducing the number of products doesn't have much impact on the trial probability, but the benefits on the purchase probability conditional on a trial are huge. Thus, showing a few higher average quality products is optimal. When the outside option trial probability is large, showing more products has a large impact on the trial probability, while the reduction in the purchase probability conditional on a trial is not a concern.

Observe that the number of products shown by the Average Quality Threshold Heuristic is sensitive to the values associated to the outside option $z_k$'s. In one extreme, very large values of the outside option (in comparison to the value associated to the products) means that the products offered by the platform owner face a rather weak cannibalization. In those scenarios, adding an extra product to an assortment is likely to be beneficial since it will probably increase overall sales. On the other extreme, when the outside option values are very small, products offered by the firm face a strong cannibalization: in these scenarios it is likely that offering only a few products is the optimal strategy.

We performed the same computational experiments as those in Section \ref{section:simulations} on Scheme $1$ and Scheme $2$, but using the Average Quality Threshold Heuristic and varying the outside option value ($z$). A large outside option value means that at the early stages a higher number of consumers will decide not to try a product at all. In the long run, the effect of the outside option diminishes as some products become popular and their overall appeal increases. Figures \ref{fig:optimalthreshold_vs_t_1} and \ref{fig:optimalthreshold_vs_t_2} show the optimal number of products that are shown at each period (on average) for Schemes 1 and 2 respectively. As expected, observe that the optimal number of products shown increases when the outside option increases. However, as the number of purchases increase (and the social influence signal becomes stronger), the offer sets shown tend to reduce in size to exhibit only the products with the highest average quality.

In all our computational experiments, the offer sets offered by the average quality threshold heuristic are always reduced in size or they stay the same as times goes by. This suggests to us that even under the average quality threshold heuristic, which is a dynamic ranking policy, the market will also converge to a monopoly for the product with the highest weighted average quality. We leave this conjecture below.

\begin{conjecture}\label{conjecture}{(Asymptotic optimality of the Average Quality Threshold Heuristic)}
	Whenever the average quality threshold heuristic is used, the market goes to a monopoly for the product with the highest weighted average quality.
\end{conjecture}

\begin{figure}[t]
	\subfigure{\label{fig:optimalthreshold_vs_t_1}
		\includegraphics[clip, width=8cm]{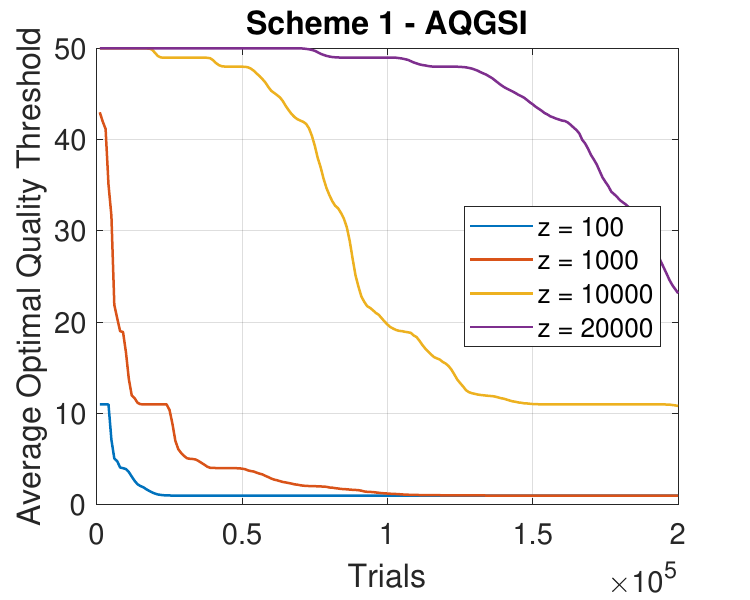}}
	\subfigure{\label{fig:optimalthreshold_vs_t_2}
		\includegraphics[clip, width=8cm]{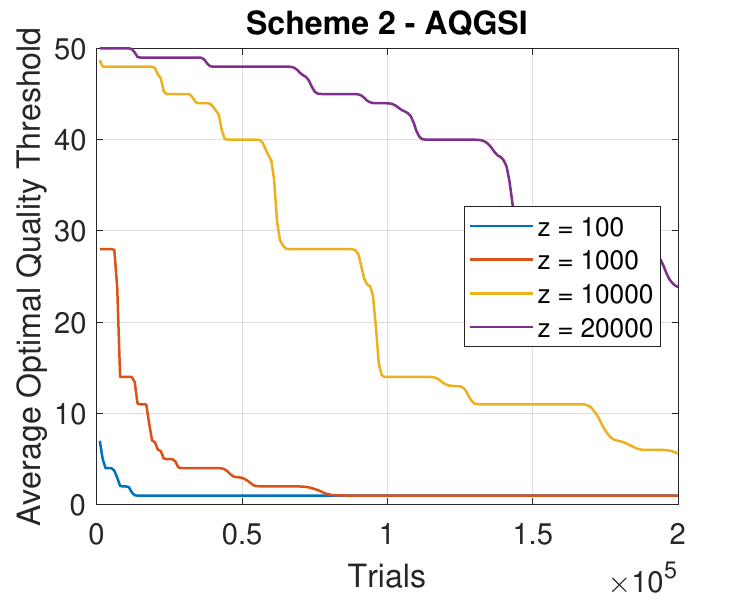}}
	\caption{Average threshold for the AQTGSI policy versus trials. On the left (a) we have it for Scheme 1, while on the right (b) we have it for Scheme 2.}
\end{figure}

We conclude this section by assessing the performance of the Average Quality Threshold Heuristic with Global Social Influence (AQTGSI) and we compare against the performance of the Average Quality with Global Social Influence (AQGSI). Figures \ref{fig:dl_vs_t1} and \ref{fig:dl_vs_t2} display the average number of purchases under AQTGSI for Schemes $1$ and $2$ respectively. We can observe that more sales occur in settings where the outside option value is small, but this difference is reduced as time goes by. Figures \ref{fig:percentage1} and \ref{fig:percentage2} show the percentage improvement (or deterioration) that AQTGSI has over AQGSI for Schemes $1$ and $2$ respectively. Our experiments show that these two policies have approximately the same performance for all the outside option values we tried{\color{black}, with less than $0.75\%$ improvement in 200,000 trials.}

\begin{figure}[t]
	\subfigure{
				\includegraphics[clip, width=8cm]{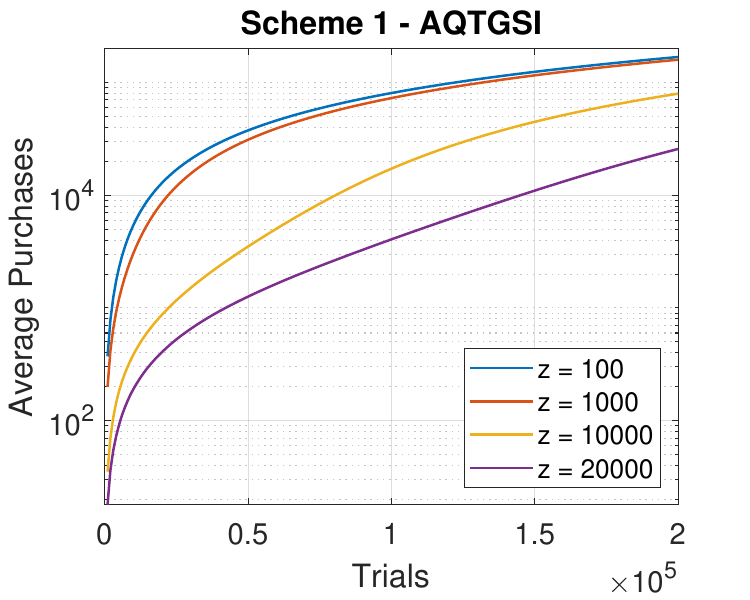}\label{fig:dl_vs_t1}}
		\subfigure{
			\includegraphics[clip, width=8cm]{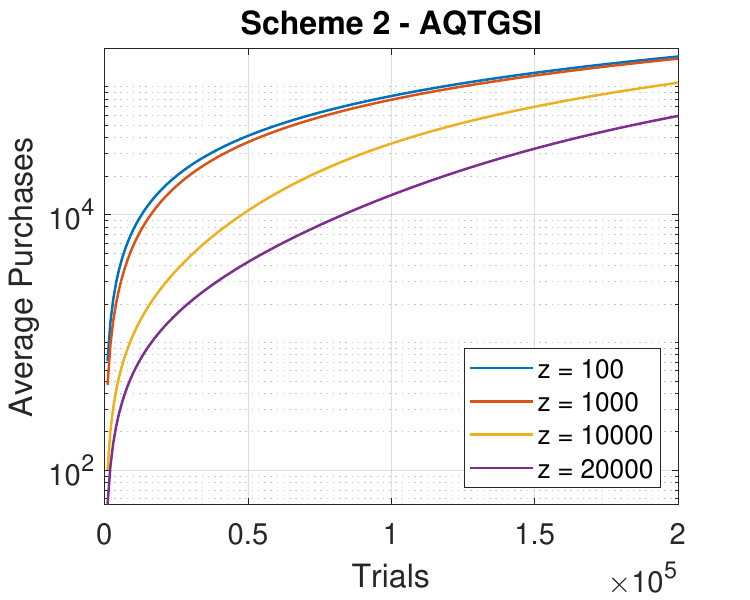}\label{fig:dl_vs_t2}}
	\caption{Number of purchases versus trials using the AQTGSI policy for scheme $1$ on the left (a) and Scheme $2$ on the right (b).}
\end{figure}

\begin{figure}[t]
	\subfigure{
		\includegraphics[clip, width=8cm]{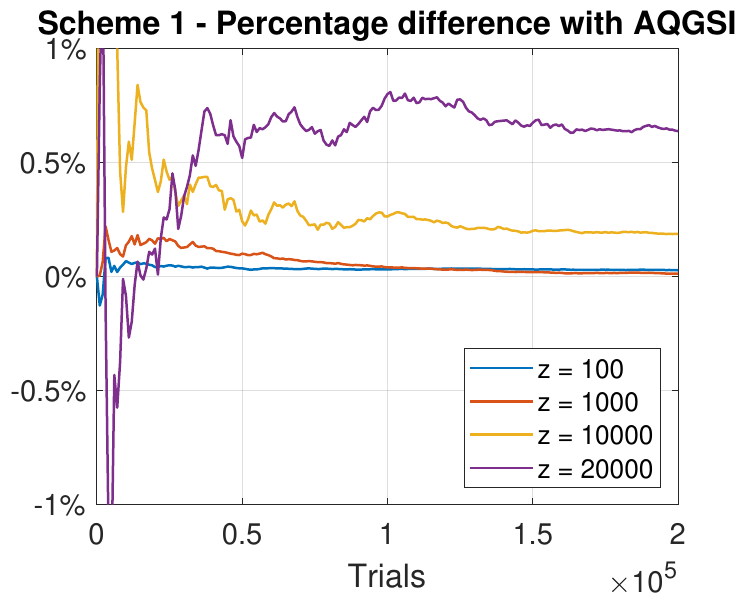}		\label{fig:percentage1}}
		\hspace{0.5cm}
		\subfigure{
			\includegraphics[clip, width=8cm]{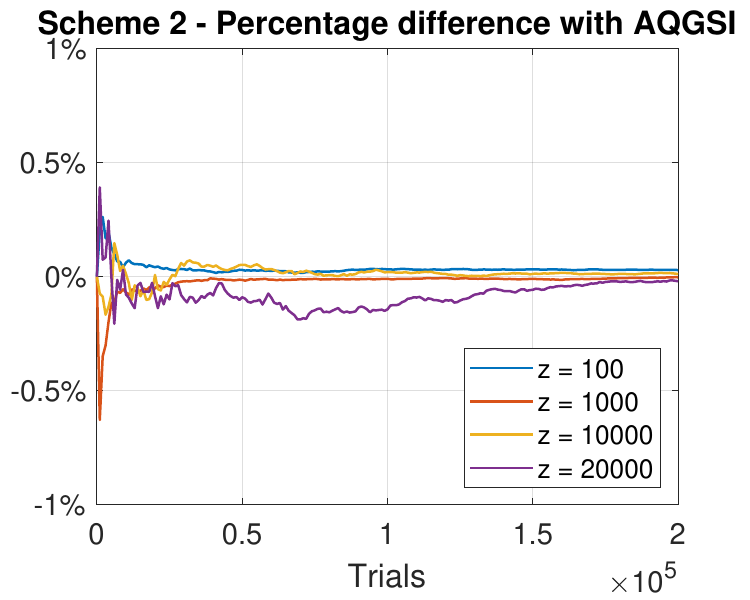}	\label{fig:percentage2}}
	\caption{Percentage difference in the number of purchases between AQTGSI and AQGSI for Scheme $1$ on the left (a) and Scheme $2$ on the right (b).}
\end{figure}

\end{document}